\documentclass{article} 
\PassOptionsToPackage{table}{xcolor}
\usepackage{iclr2026_conference,times}
\usepackage[utf8]{inputenc}
\usepackage[T1]{fontenc}
\usepackage{url}
\usepackage{booktabs}
\usepackage{amsfonts}
\usepackage{nicefrac}
\usepackage{microtype}
\usepackage{xcolor}
\usepackage{colortbl}
\usepackage{pifont}
\usepackage{graphicx}
\usepackage{svg}
\usepackage{float}
\usepackage{subcaption}
\usepackage{tikz}
\usetikzlibrary{decorations.pathreplacing, calligraphy}
\usepackage{mdframed}

\usepackage{amsmath, amssymb, amsthm}
\usepackage{algorithm}
\usepackage{algpseudocode}
\usepackage{lipsum}
\usepackage{booktabs,multirow,tabularx,array}
\usepackage{siunitx}
\usepackage{listings}
\definecolor{lightgray}{gray}{0.9}
\usepackage{caption}
\usepackage[inline]{enumitem}
\usepackage{multirow}
\usepackage[most]{tcolorbox}
\usepackage{makecell}
\usepackage[toc,page,header]{appendix}
\usepackage{wrapfig}
\usepackage{soul}
\usepackage{minitoc}
\usepackage{titletoc}

\newcommand{\score}[2]{#1\,($\pm$#2)}
\newcommand{\best}[1]{\textbf{#1}}

\AtBeginDocument{\renewcommand{\eqref}[1]{(\ref{#1})}}
\usepackage{hyperref}
\definecolor{darkred}{RGB}{150,0,0}
\definecolor{darkgreen}{RGB}{0,150,0}
\definecolor{darkblue}{RGB}{0,0,200}
\hypersetup{
    colorlinks=true,
    linkcolor=darkred,
    citecolor=darkblue,
    urlcolor=darkblue
}
\definecolor{lightseagreen}{HTML}{20B2AA}
\newtheorem{theorem}{Theorem}

\newtheorem{definition}[theorem]{Definition}
\newtheorem{corollary}[theorem]{Corollary}
\newtheorem{example}{Example}

\newtheorem{lemma}[theorem]{Lemma}
\newtheorem{assumption}{Assumption}

\algnewcommand{\algorithmicinput}{\textbf{Input:}}
\algnewcommand{\algorithmicoutput}{\textbf{Output:}}
\algnewcommand{\Input}{\item[\algorithmicinput]}
\algnewcommand{\Output}{\item[\algorithmicoutput]}

\definecolor{promptbg}{RGB}{240, 248, 255}
\definecolor{promptborder}{RGB}{70, 130, 180}
\newcounter{promptcounter}
\newenvironment{promptbox}[2]{%
    \refstepcounter{promptcounter}%
    \label{#1}%
    \begin{tcolorbox}[
        colback=promptbg,
        colframe=promptborder,
        coltitle=white,
        colbacktitle=promptborder,
        title={\textbf{Prompt \thepromptcounter: #2}},
        fonttitle=\sffamily\bfseries,
        enhanced,
        breakable,
        attach boxed title to top left={yshift=-2mm, xshift=2mm},
        boxed title style={rounded corners},
        rounded corners,
        drop shadow,
        left=4pt,
        right=4pt,
        top=6pt,
        bottom=6pt
    ]%
}{%
    \end{tcolorbox}%
}

\title{Query-Aware Flow Diffusion for Graph-Based RAG with Retrieval Guarantees}

\author{\textbf{Zhuoping Zhou\textsuperscript{$\pi$},
Davoud Ataee Tarzanagh\textsuperscript{$\pi$},
Sima Didari\textsuperscript{$\pi$}, Wenjun Hu,
Baruch Gutow,}\\
\textbf{Oxana Verkholyak, Masoud Faraki, Heng Hao,
Hankyu Moon, Seungjai Min} \\[0.5em]
Samsung SDS Research America, Mountain View, CA, USA \\[0.5em]
\textsuperscript{$\pi$}Equal Contribution.\qquad\qquad\qquad\qquad\qquad\qquad\qquad\qquad\qquad\qquad 
\href{https://qafd-rag.github.io/}
{\textbf{\raisebox{-0.8em}{\includegraphics[height=2.2em]{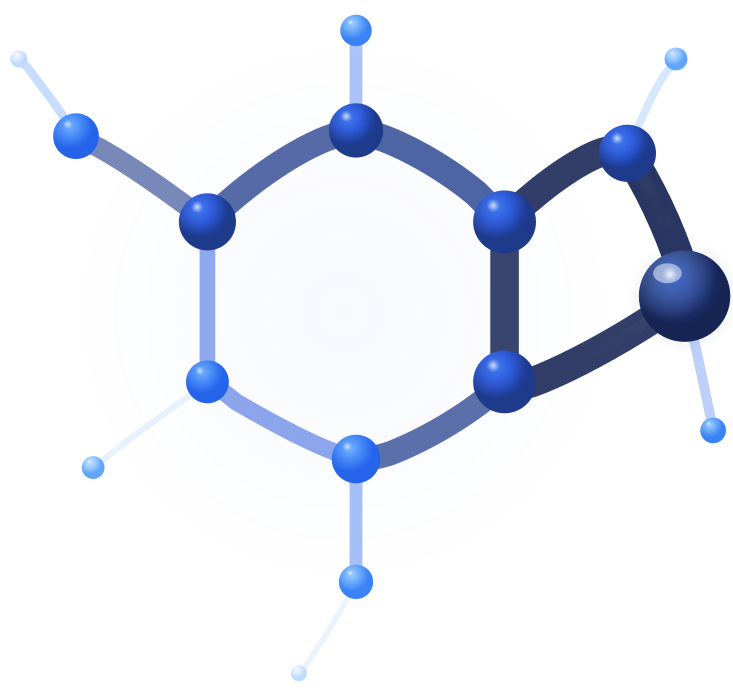}}~Project Page}}
}

\iclrfinalcopy
\begin{document}
\maketitle
\lhead{Published as a conference paper at ICLR 2026}
\thispagestyle{fancy}
\begin{abstract}
Graph-based Retrieval-Augmented Generation (RAG) systems leverage interconnected knowledge structures to capture complex relationships that flat retrieval struggles with, enabling multi-hop reasoning. Yet most existing graph-based methods suffer from (i) heuristic designs lacking theoretical guarantees for subgraph quality or relevance and/or (ii) the use of static exploration strategies that ignore the query's holistic meaning, retrieving neighborhoods or communities regardless of intent. We propose \textit{Query-Aware Flow Diffusion RAG} (QAFD-RAG), a training-free framework that dynamically adapts graph traversal to each query's holistic semantics. The central innovation is \emph{query-aware traversal}: during graph exploration, edges are dynamically weighted by how well their endpoints align with the query's embedding, guiding flow along semantically relevant paths while avoiding structurally connected but irrelevant regions. These query-specific reasoning subgraphs enable the first statistical guarantees for query-aware graph retrieval, showing that QAFD-RAG recovers relevant subgraphs with high probability under mild signal-to-noise conditions. The algorithm converges exponentially fast, with complexity scaling with the retrieved subgraph size rather than the full graph. Experiments on question answering and text-to-SQL tasks demonstrate consistent improvements over state-of-the-art graph-based RAG methods.  
\end{abstract}

\begin{figure}[h]
    \centering
    \includegraphics[width=0.30\linewidth]{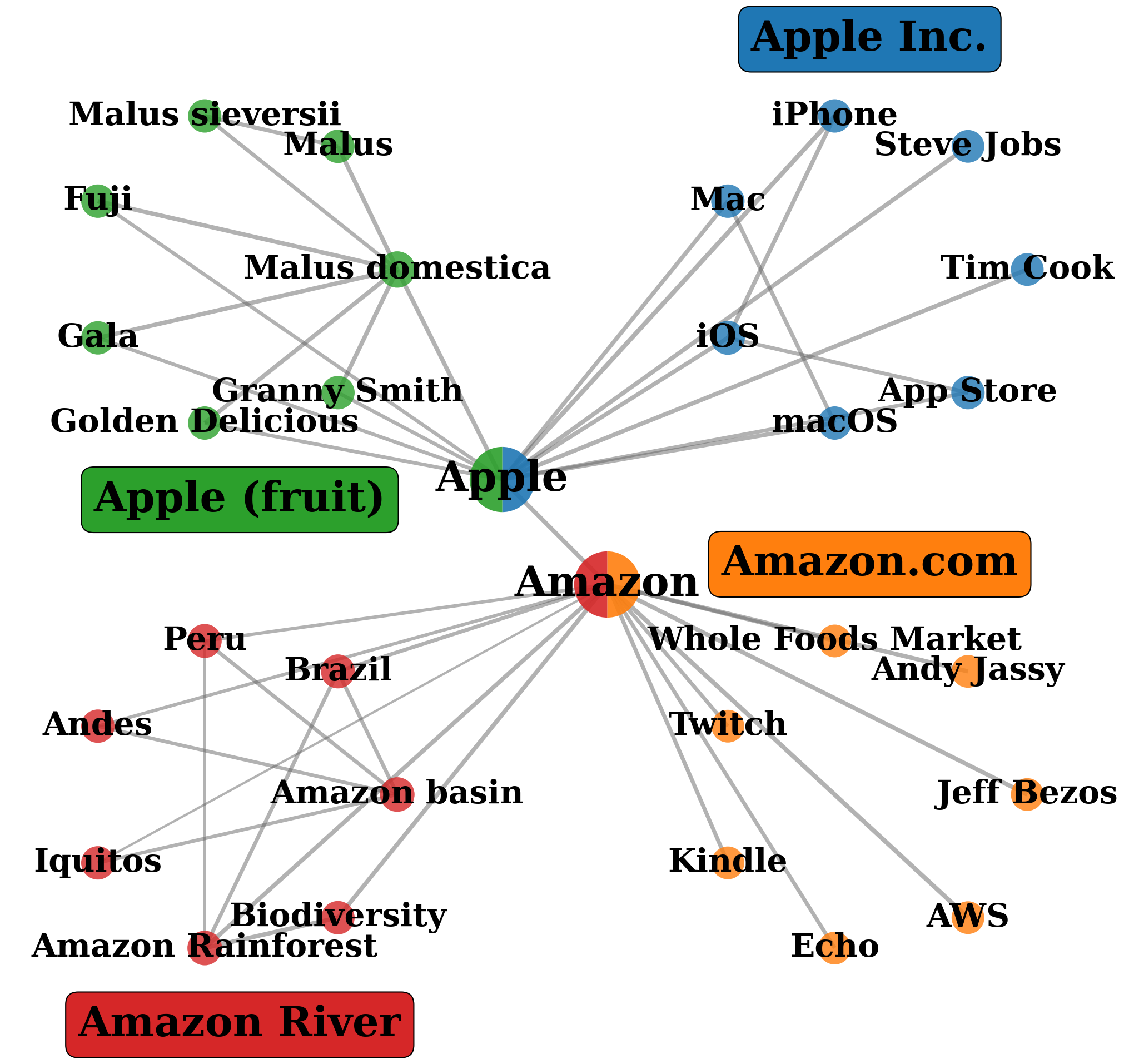}
    \includegraphics[width=0.33\linewidth]{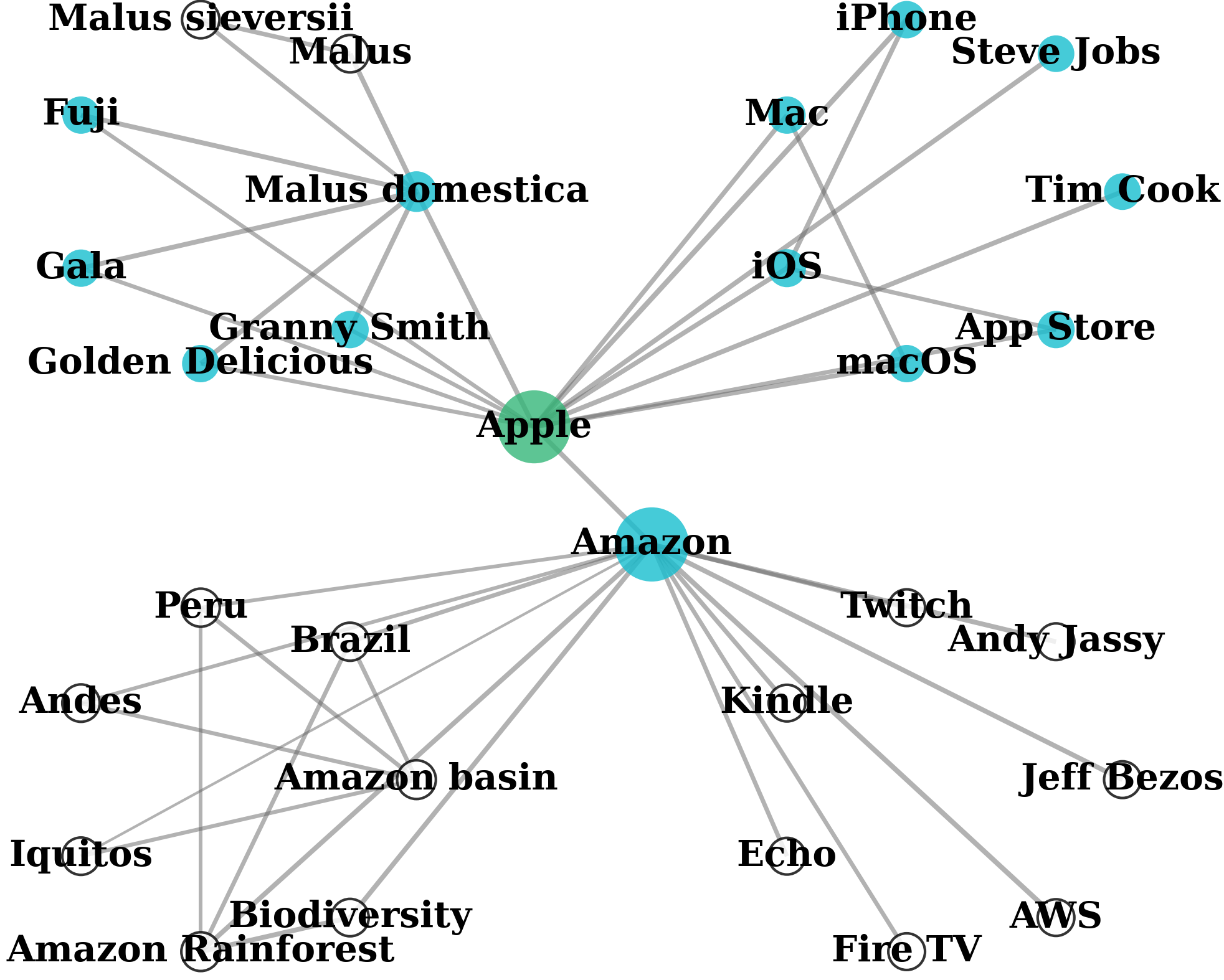}
    \includegraphics[width=0.35\linewidth]{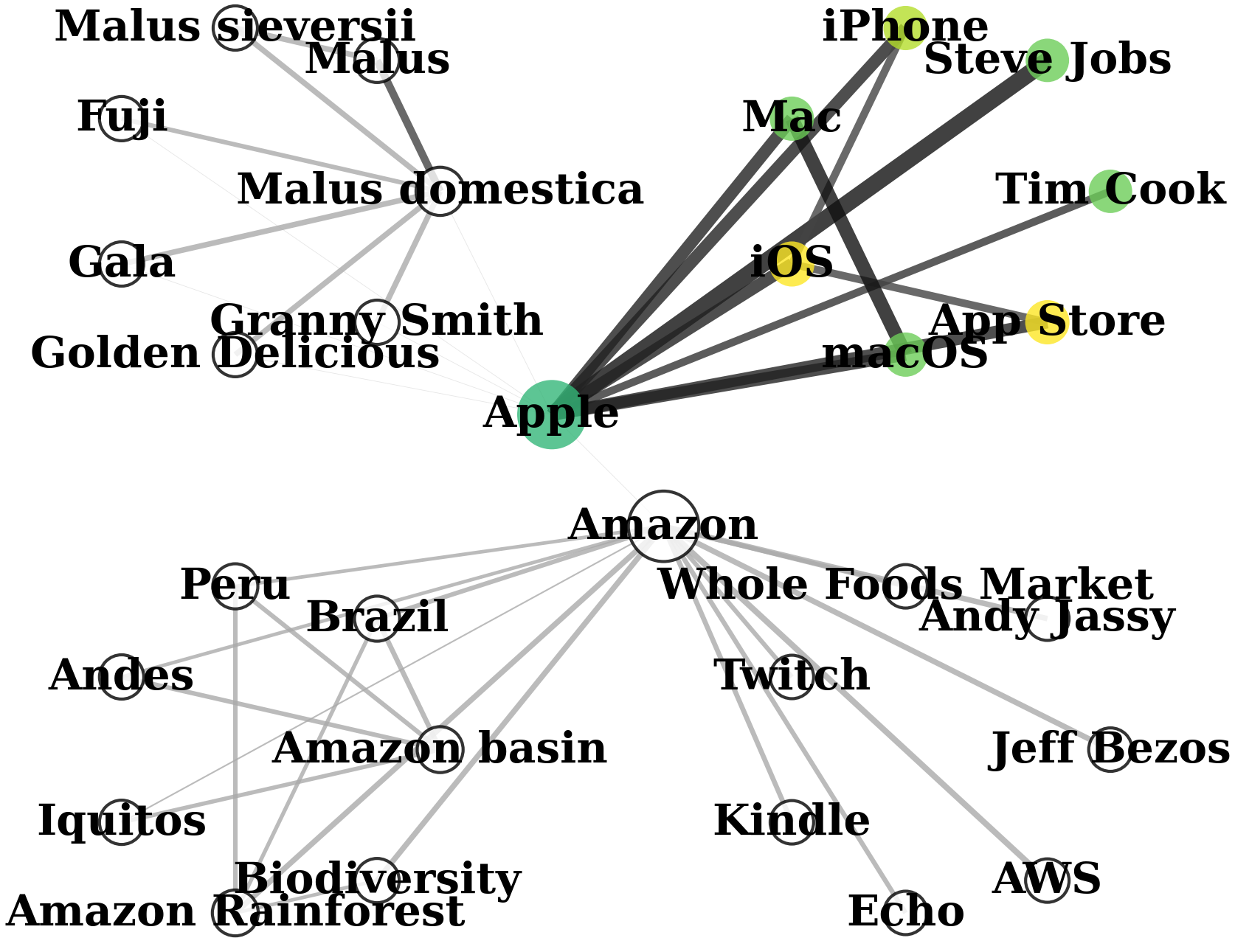}
     \textbf{GraphRAG} \hspace{3cm} \textbf{LightRAG} \hspace{3cm} \textbf{QAFD-RAG}
\caption[Comparison of graph-based RAG methods]{Comparison of graph-based RAG methods on Wikipedia pages (Apple fruit, Apple Inc., Amazon River, Amazon.com). 
Query: “Introduce Steve Jobs’s products in Apple.” 
\textbf{GraphRAG}~\citep{edge2024local}  retrieves entire communities, mixing relevant nodes (e.g., Mac, macOS) with irrelevant ones (e.g., Amazon River, Apple fruit). 
\textbf{LightRAG}~\citep{guo2024lightrag} focuses on \textcolor{lightseagreen}{1-hop} neighborhoods, including both relevant nodes (e.g., Steve Jobs, iPhone) and structurally close but irrelevant ones (e.g., Fuji). \textbf{QAFD-RAG} reweights edges by the \textit{query’s holistic meaning}, suppressing irrelevant one-hop neighborhoods and preventing traversal into the Amazon River, Amazon.com, and Apple fruit clusters. 
The resulting subgraph is coherent, with edge thickness reflecting weight and node mass indicating importance (\textcolor{green!70!black}{highest}, \textcolor{yellow!90!black}{lowest}); see the discussion following Eqn.~\eqref{eqn:dual:prob}.  
}
    \label{fig:graphrags}
\end{figure}

\section{Introduction}

Retrieval-Augmented Generation (RAG) enhances language models (LMs) by integrating external knowledge during generation \citep{fan2024survey,gao2023retrieval}. A retriever first gathers relevant information for a query, which is then combined with the input and passed to the generator \citep{karpukhin2020dense}. This is especially useful for question answering (QA) \citep{xiong2020answering,zhu2021retrieving}, where added context improves accuracy in domains such as healthcare, law, finance, and education \citep{xu2024ram,zhang2023enhancing}. With recent advances in large LMs (LLMs), RAG has become increasingly important for trustworthy AI, helping mitigate hallucinations \citep{tonmoy2024comprehensive}, improve transparency \citep{kim2024re}, adaptability \citep{shi2024retrieval,wang2023knowledge}, privacy \citep{zeng2024mitigating,zeng2024good}, fairness \citep{shrestha2024fairrag}, and reliability \citep{fang2024enhancing}.

Although conventional RAG is effective for unstructured document retrieval, real-world knowledge often has graph-structured forms—such as database schemas, social networks, and biomedical repositories. These structures preserve relational information that similarity-based retrieval misses, including multi-hop dependencies, hierarchies, and complex interactions. Graph-oriented RAG leverages such properties through techniques such as community detection and graph neural networks \citep{edge2024local,wang2024knowledge}, extending beyond similarity search to capture topological and semantic relationships \citep{fan2024survey,gao2023retrieval}. This enables advanced multi-hop reasoning crucial for tasks such as text-to-SQL generation, scientific discovery, and medical diagnosis, where understanding relationships is as important as retrieving facts \citep{chen2025graph,wigh2022review,zhang2024tree}. For an overview, see the survey by \citet{han2024retrieval}.

However, existing graph-based RAG methods \citep{han2024retrieval} {suffer from heuristic designs lacking theoretical guarantees for subgraph quality or relevance and/or the use of static exploration strategies that ignore the query's holistic meaning during traversal}. As shown in Figure~\ref{fig:graphrags}, GraphRAG~\citep{edge2024local} applies uniform community detection regardless of query relevance, while LightRAG~\citep{guo2024lightrag} extracts ego-networks around seed nodes without semantic alignment.  In contrast, QAFD-RAG incorporates query semantics by reweighting edges and propagating mass along semantically aligned paths. Nodes on the reasoning chain (Apple $\rightarrow$ Mac $\rightarrow$ macOS) are emphasized with stronger colors, while irrelevant ones (Amazon River, Fuji) fade due to suppressed flow. This reweighting turns diffusion into a semantic filter, producing compact, interpretable subgraphs aligned with user intent. By contrast, holistic query-agnostic methods often include irrelevant nodes, omit distant but relevant information, and return unstructured lists rather than coherent reasoning paths. Lacking theoretical foundations, these heuristics also yield unpredictable performance.

Given these limitations, we ask:
\begin{center}
\begin{tcolorbox}[
    enhanced,
    colback=gray!8,
    colframe=blue!65,
    boxrule=0.6pt,
    leftrule=4pt,
    arc=2pt,
    left=2pt,
    right=3pt,
    top=3pt,
    bottom=3pt,
    fontupper=\itshape,
    before skip=2pt,
    after skip=2pt,
    width=0.79\textwidth
]
\textcolor{gray!75}{\textbf{Q:}} Under what conditions can we establish recovery guarantees for retrieving subgraphs that adapt to a query’s holistic meaning in graph-based RAG?
\end{tcolorbox}
\end{center}

To address this, we turn to graph diffusion theory and in particular \textit{flow diffusion}—the process of spreading mass from seed nodes to neighbors along graph edges \citep{lovasz1993random,chung1997spectral,fortunato2010community,spielman2013local, fountoulakis2023flow}. Spectral diffusion methods are effective for clustering and community detection due to strong guarantees and efficiency, but have not been studied in the presence of queries. We reformulate diffusion for graph-based RAG, linking flow/traversal to a query’s holistic meaning. Specifically, we propose Query-Aware Flow Diffusion RAG (QAFD-RAG), a framework for dynamic, query-aware graph traversal via principled flow diffusion. The main contributions of this work are:
\begin{enumerate}[wide, labelindent=3pt, itemsep=0pt, label=\textbf{C\arabic*.}]
\vspace{-.1cm}
\item \textbf{Query-Aware Flow Diffusion Framework:} We introduce the first principled flow diffusion method for graph-based RAG that incorporates query semantics via alignment-based edge weighting. QAFD-RAG adapts flow probabilities online, guiding traversal toward semantically relevant regions with complexity scaling with the retrieved subgraph size rather than the full graph (Figures~\ref{fig:graphrags} and \ref{fig:overview}).
\item \textbf{Optimization and Statistical Guarantees:} We provide the first rigorous analysis of query-aware traversal with provable guarantees. Our results (Theorem~\ref{lem:complexity}) show exponential convergence to a unique query-dependent stationary distribution, and recovery guarantees (Theorem~\ref{thm:recovery}) ensuring relevant subgraphs are retrieved with high probability under mild signal-to-noise conditions.
\item \textbf{Experimental Validation:} We evaluate QAFD-RAG on general QA (Tables~\ref{tab:rag_all_in_one_part1} and \ref{tab:rag_all_in_one_part2}), multi-hop QA (Table~\ref{tab:multi-hop}), long-document summarization (Table~\ref{tab:squality_automated}), and text-to-SQL (Table~\ref{tab:sql_accuracy}). Across these benchmarks, QAFD-RAG consistently outperforms state-of-the-art graph-based RAG baselines and improves over text-to-SQL baselines.
\end{enumerate}

\textbf{Notation.}
Let $\mathbb{R}^d$ be $d$-dimensional Euclidean space, and let $\mathbb{R}^d_{+}$ and $\mathbb{R}^d_{++}$ denote the nonnegative and strictly positive orthants. Vectors and matrices are in boldface (e.g., $\mathbf{a}$, $\mathbf{A}$) with entries $a_i$ and $a_{ij}$. For a matrix $\mathbf{A}$, $\mathbf{A}_{i:}$ and $\mathbf{A}_{:j}$ denote its $i$th row and $j$th column, and $\mathbf{A}\succ 0$ and $\mathbf{A}\succeq 0$ mean positive definite and positive semidefinite. For vectors, $\|\mathbf{a}\|_1:=\sum_i |a_i|$ and $\|\mathbf{a}\|:=\big(\sum_i |a_i|^2\big)^{1/2}$. For matrices, $\|\mathbf{A}\|_{F}:=\big(\sum_{i,j}|a_{ij}|^2\big)^{1/2}$. For $n\in\mathbb{N}$, define $[n]:=\{1,\ldots,n\}$. We use standard asymptotic notations $O$, $\Omega$, $\Theta$, $o$, and $\omega$ (as $n\to\infty$).
We write $\triangleleft$ for a fixed deterministic tie-breaking order on any finite set. For a finite set $U$ and scores $f:U\to\mathbb{R}$, \texttt{Top}-$k_{u\in U}(f(u))$ returns the $k$ elements of $U$ with largest scores, breaking ties by $\triangleleft$; equivalently, if $u_{(1)},\ldots,u_{(|U|)}$ are sorted in decreasing order of $(f(u),\triangleleft)$, then \texttt{Top}-$k_{u\in U}(f(u)):=\{u_{(1)},\ldots,u_{(k)}\}$.

\section{Methodology}\label{sec:qafd}
\begin{figure*}
  \centering
  \includegraphics[width=\linewidth]{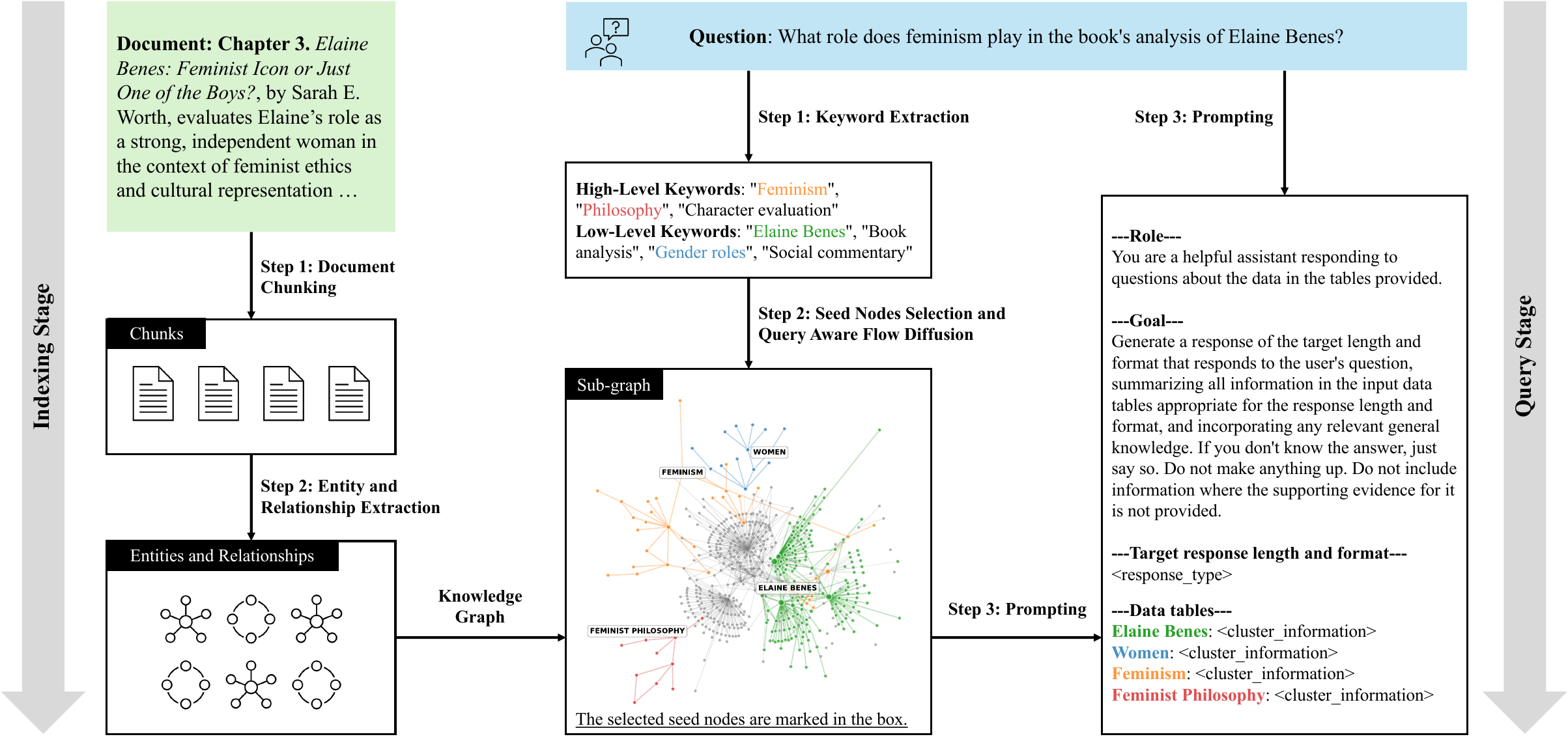}
\caption{Two-stage QAFD-RAG framework: the indexing stage builds a KG from documents, and the query stage applies QAFD to extract and prompt subgraphs for response generation.}
  \label{fig:overview}
  \vspace{-.3cm}
\end{figure*}

\subsection{{Framework Overview}}
 QAFD-RAG is a training-free, graph-based reasoning framework for RAG tasks.  
The framework operates in two phases: an \textit{Indexing Stage} (IS), which builds a knowledge graph (KG) from raw documents \citep{chen2025pathrag,guo2024lightrag}, and a \textit{Query Stage} (QS), which introduces our key contribution—query-aware flow diffusion. This mechanism integrates query semantics into graph traversal, enabling adaptive retrieval with statistical guarantees. Unlike SOTA RAG methods such as GraphRAG~\citep{edge2024local}, which applies static community detection, or LightRAG~\citep{guo2024lightrag}, which extracts ego-nets (Figure~\ref{fig:graphrags}), QAFD-RAG dynamically reweights edges and diffuses flow according to query alignment, suppressing irrelevant clusters and highlighting reasoning paths. 

In IS, \textit{IS-Step 1: Document Chunking} splits documents into context-preserving chunks; \textit{IS-Step 2: Entity and Relationship Extraction} uses LLM prompting to build a structured KG. In QS, \textit{QS-Step 1: Keyword Extraction} pulls conceptual and surface terms (e.g., \textit{Feminism}, \textit{Elaine Benes}) for broad coverage (Figure~\ref{fig:overview}, Step~1). Next, \textit{QS-Step 2: Seed Node Selection and QAFD} scores nodes by similarity to query keywords {as detailed in Algorithm~\ref{alg:seed_selection}}. For example, in Figure~\ref{fig:graphrags}, \textit{Steve Jobs} (0.95), \textit{Apple} (0.92), and \textit{iPhone} (0.88) are selected, while irrelevant ones (e.g., \textit{Amazon}, 0.15) are excluded.  Top-scoring nodes serve as seeds where mass is injected and propagated via flow diffusion (Figure~\ref{fig:overview}, Step~2). Edges are dynamically reweighted to blend structure and semantics {through the query-aware edge weighting mechanism described in Section~\ref{sec:qafd:detailed}}, suppressing irrelevant expansions (Apple $\to$ Amazon River/fruit) and reinforcing meaningful ones. { We formulate this diffusion process as a constrained optimization problem, solved efficiently via a push–relabel algorithm (Algorithm~\ref{alg:single_seed_query_diffusion}). Recovery guarantees and algorithm complexity are provided in Section~\ref{sec:guarantees}.} Finally, \textit{QS-Step 3: Response Generation} summarizes retrieved communities (e.g., \textit{Elaine Benes}, \textit{Feminism}) and prompts a downstream LLM, yielding grounded, consistent answers along reasoning paths.

\subsection{Query-Aware Flow Diffusion}\label{sec:qafd:detailed}

We cast subgraph retrieval as a QAFD problem.  Given a natural language query $q \in \mathcal{Q}$ and a knowledge graph $\mathcal{G} = (\mathcal{V}, \mathcal{E}, \mathcal{R})$, with $\mathcal{V}$ entities, $\mathcal{E}$ edges, and $\mathcal{R}$ relations, the task is to retrieve a subgraph $\mathcal{G}_q \subseteq \mathcal{G}$ capturing reasoning paths relevant for downstream language model processing. We first recall flow diffusion from graph diffusion theory \citep{lovasz1993random,chung1997spectral,spielman2013local,fountoulakis2023flow}.  
\begin{definition}[Flow Diffusion]\label{def:fd}
Flow diffusion spreads an initial amount of “mass” or “information” 
from seed nodes through a graph along its edges. At each step, mass is divided among neighbors by edge weights, so strongly connected nodes receive more and weakly connected ones less.  
\end{definition}

\begin{wrapfigure}{r}{0.5\textwidth}
\begin{minipage}{0.5\textwidth}
\vspace{-0.7cm}
\begin{algorithm}[H]
\caption{Seed Node Selection}
\label{alg:seed_selection}
\begin{algorithmic}[1]
\Input Query $q$, graph $\mathcal{G}=(\mathcal{V},\mathcal{E},\mathcal{R})$ with node embeddings $\{\boldsymbol{e}_v\}_{v\in\mathcal{V}}$, embedding function $h(\cdot)$, similarity $H_{\mathrm{sim}}(\cdot,\cdot)$, number of seeds $N$
\Output Seed entities $\mathcal{S}_{\mathcal{V}}\subseteq\mathcal{V}$ with $|\mathcal{S}_{\mathcal{V}}|=N$
\State Extract keywords $\mathcal{K}_q=\{w_1,\ldots,w_{|\mathcal{K}_q|}\}$ from $q$ using an LLM
\State Compute $\boldsymbol{e}_{q,i}= h(w_i)$ for all $i=1, \ldots, |\mathcal{K}_q|$
\ForAll{$v\in\mathcal{V}$}
    \State $\text{score}(v,q)\gets \max\limits_{i\in \{1, \ldots, |\mathcal{K}_q|\}} H_{\mathrm{sim}}(\boldsymbol{e}_{q,i},\boldsymbol{e}_v)$ \label{eqn:score}
\EndFor
 \State 
 $\mathcal{S}_{\mathcal{V}} \gets \texttt{Top-}N_{v\in\mathcal{V}}\big(\text{score}(v,q)\big)$
 \label{eqn:seed_selection}
\State \Return $\mathcal{S}_{\mathcal{V}}$
\end{algorithmic}
\end{algorithm}
\vspace{-0.5cm}
\end{minipage}
\end{wrapfigure}

Intuitively, flow diffusion resembles water or dye moving through pipes: most flow follows stronger pipes (high-weight edges), little through weaker ones. While well studied in clustering and community detection, it has not been explored in the presence of queries. We reformulate diffusion for graph-based RAG, linking traversal to query semantics. The first step is seed node selection, as seeds are usually assumed \textit{a priori} in diffusion methods~\citep{spielman2013local,WFHM2017}. Seed selection identifies query-relevant entities as starting points for flow diffusion.  
We adopt a semantic similarity--based approach and treat seed utility as approximately additive under an independence assumption; see Algorithm~\ref{alg:seed_selection}.  Let $\mathcal{Q}$ be the query space and $\mathcal{W}$ the vocabulary space. Define $g: \mathcal{Q} \rightarrow 2^{\mathcal{W}}$ mapping a query $q \in \mathcal{Q}$ to keywords $\mathcal{K}_q := \{w_1, \ldots, w_{|\mathcal{K}_q|}\}$ via LLM prompting (Appendices~\ref{app:prompts:qa} and \ref{app:prompts:text-to-sql}). 
Let $\mathcal{K}_{\mathcal{V}} := \{k_v : v \in \mathcal{V}\}$ be node identifiers. Let $h:\mathcal{W}\cup\mathcal{V}\cup\mathcal{Q}\rightarrow\mathbb{R}^d$ denote an embedding function that maps query keywords, graph nodes, and queries to $\mathbb{R}^d$:
\begin{align*}
\mathcal{E}_q 
:= \{h(w_i)\}_{i=1}^{|\mathcal{K}_q|}
 = \{\boldsymbol{e}_{q,i}\in\mathbb{R}^d\}_{i=1}^{|\mathcal{K}_q|}, \quad\textnormal{and}\quad  
\mathcal{E}_{\mathcal{V}}  
:= \{h(v)\}_{v\in\mathcal{V}}
 = \{\boldsymbol{e}_v\in\mathbb{R}^d\}_{v\in\mathcal{V}}.
\end{align*}
We then define a similarity function $H_{\text{sim}}: \mathbb{R}^d \times \mathbb{R}^d \rightarrow \mathbb{R}$ for semantic relatedness, with common choices including cosine similarity $\boldsymbol{e} \cdot \boldsymbol{e}' /\|\boldsymbol{e}\| \|\boldsymbol{e}'\|$, dot product $\boldsymbol{e} \cdot \boldsymbol{e}'$, or RBF kernel $\exp(-\gamma \|\boldsymbol{e} - \boldsymbol{e}'\|^2)$.  
The relevance score of a node $v$ for query $q$ is the maximum similarity across keywords via  Step~\ref{eqn:score}. We then select the $N$ highest-scoring entities via Step~\ref{eqn:seed_selection}, breaking ties arbitrarily. For small graphs, keywords and seeds can often be extracted in one LLM prompt (Appendix~\ref{app:prompts:text-to-sql}).  For large graphs (e.g., Table~\ref{tab:dataset_description}), we apply Algorithm~\ref{alg:seed_selection}, which runs in $\mathcal{O}\!\left(|\mathcal{V}|\,|\mathcal{K}_q|\,d + |\mathcal{V}|\log N\right)$.

\textbf{Dynamic Query-Aware Edge Weights.} Given seeds $\mathcal{S}_{\mathcal{V}}$ and query $q \in \mathcal{Q}$, we perform flow diffusion to extract reasoning subgraphs. Traditional diffusion uses static edge weights that ignore query context, leading to uniform exploration. Our key insight is to make edges \textit{query-aware gates} that modulate flow strength using both structural connectivity and query alignment, enabling recovery guarantees. For each $q \in \mathcal{Q}$, edge $(u,v) \in \mathcal{E}$, and $a,b,c \geq 0$, the query-aware edge weights are
\begin{align} \label{eqn:query_weights_sim} 
\bar{w}(q, u, v) &:= c \cdot H_{\text{sim}}(h(u), h(v)) \circ \left( a + b \cdot 
\left( H_{\text{sim}}(h(u), h(q)) \circ H_{\text{sim}}(h(v), h(q)) \right)\right),
\end{align}  
where $h(\cdot)$ is the embedding function, $H_{\text{sim}}$ a similarity measure, $\circ$ a binary operation (addition or multiplication), and $a,b,c \geq 0$ hyperparameters.

{We explore three variants that balance structural and semantic signals differently:}
\begin{subequations}
\begin{align}
\bar{w}_{\text{Mean}}(q,u,v) &:= \tfrac{1}{3}\left( H_{\text{sim}}(h(u),h(v)) 
+ H_{\text{sim}}(h(u),h(q)) 
+ H_{\text{sim}}(h(v),h(q)) \right), \label{eqn:weight_mean} \\[0.7em]
\bar{w}_{\text{Product}}(q,u,v) &:= H_{\text{sim}}(h(u),h(v)) 
\cdot H_{\text{sim}}(h(u),h(q)) 
\cdot H_{\text{sim}}(h(v),h(q)),  \label{eqn:weight_product} \\[0.7em]
\bar{w}_{\text{Hybrid}}(q,u,v) &:= H_{\text{sim}}(h(u),h(v)) \cdot 
\left( a + b \left( H_{\text{sim}}(h(u),h(q)) + H_{\text{sim}}(h(v),h(q)) \right) \right). 
\label{eqn:weight_hybrid}
\end{align}    
\end{subequations}
Here, $H_{\text{sim}}(h(u), h(v))$ captures structural (node–node) similarity, while $H_{\text{sim}}(h(u), h(q)) \circ H_{\text{sim}}(h(v), h(q))$ measures query relevance to nodes $u$ and $v$. {The multiplicative interaction in Product and Hybrid amplifies edges between query-relevant nodes while exponentially suppressing edges to irrelevant regions, transforming diffusion into a semantic filter.} Weights are computed online: \textit{when a query arrives, we first compute its embedding and then update only the edge weights $\bar{w}(q, u, v)$ encountered during traversal}. In experiments, $\bar{w}_{\text{Hybrid}}(q,u,v)$ yields slightly better results.

\noindent \textbf{QAFD’s Primal–Dual Formulation.}  Given approaches for seed nodes and dynamic weight computations $\bar{w}(q, u, v)$, 
we formulate flow diffusion (Definition~\ref{def:fd}) as a constrained optimization problem that minimizes total flow cost while enforcing mass conservation at each node. 
Let $\mathbf{f} \in \mathbb{R}^{|\mathcal{E}|}$ denote edge flows, 
$\bar{\mathbf{W}}(q) \in \mathbb{R}^{|\mathcal{E}| \times |\mathcal{E}|}$ the diagonal matrix of query-aware edge weights, 
and $\mathbf{B} \in \mathbb{R}^{|\mathcal{V}| \times |\mathcal{E}|}$ the incidence matrix of $\mathcal{G}$. The optimization is
\begin{align}
\label{eqn:flow_optimization}
\min_{\mathbf{f} \in \mathbb{R}^{|\mathcal{E}|}} \frac{1}{2} \mathbf{f}^\top \bar{\mathbf{W}}(q) \mathbf{f} \quad \text{subj. to} \quad \boldsymbol{\Delta} + \mathbf{B} \bar{\mathbf{W}}(q) \mathbf{f} \leq \mathbf{T}, \qquad \forall q \in \mathcal{Q}. 
\end{align}
Here, $\boldsymbol{\Delta} \in \mathbb{R}^{|\mathcal{V}|}$ encodes source mass injection and $\mathbf{T} \in \mathbb{R}^{|\mathcal{V}|}$ represents sink capacities. We note that in previous formulations~\citep{WFHM2017,fountoulakis2020p,fountoulakis2023flow}, $\bar{\mathbf{W}}(q)$ is query-independent and, in most cases, is restricted to be the identity matrix. Following Definition~\ref{def:fd}, the sink capacity vector $\mathbf{T}$ determines how much mass each node can hold before diffusing to neighbors. A common choice is degree-based capacity where $T_v = \text{degree}(v)$, which allows high-degree nodes to accumulate more mass proportional to their connectivity, though uniform capacity $T_v = \beta$ for some constant $\beta > 0$ provides an alternative that treats all nodes equally regardless of their structural prominence. The source vector $\boldsymbol{\Delta}$ controls the initial mass distribution across nodes, with a common choice $\Delta_v = \alpha \sum_{u \in \mathcal{V}} T_u $ for source nodes $v \in \mathcal{S}_{\mathcal{V}}$ and $\Delta_v = 0$ otherwise, where $\alpha > 0$ controls source strength.  

The corresponding Lagrangian of \eqref{eqn:flow_optimization} with multiplier vector $\mathbf{x} \in \mathbb{R}^{|\mathcal{V}|}_{+}$ is
\begin{equation}\label{eqn:lag}
\mathcal{L}(\mathbf{f}, \mathbf{x}; q ) = \tfrac{1}{2} \mathbf{f}^\top \bar{\mathbf{W}}(q) \mathbf{f} 
+ \mathbf{x}^\top \left(\boldsymbol{\Delta} + \mathbf{B} \bar{\mathbf{W}}(q) \mathbf{f} - \mathbf{T}\right),
\qquad \forall q \in \mathcal{Q}.
\end{equation}
Taking $\partial \mathcal{L}/\partial \mathbf{f} = 0$ gives the primal-dual relation $\mathbf{f} = -\mathbf{B}^\top \mathbf{x}$. Substituting this into \eqref{eqn:lag} and setting $\mathbf{L}(q) = \mathbf{B}\bar{\mathbf{W}}(q) \mathbf{B}^\top$ yields the dual problem
\begin{align}\label{eqn:dual:prob}
\min_{\mathbf{x} \in \mathbb{R}^{|\mathcal{V}|}_{+}} F(\mathbf{x};q) 
:= \tfrac{1}{2} \mathbf{x}^\top \mathbf{L}(q) \mathbf{x} 
+ \mathbf{x}^\top (\mathbf{T} - \boldsymbol{\Delta}),
\qquad \forall q \in \mathcal{Q}.
\end{align}

\begin{wrapfigure}{R}{0.5\textwidth}
\begin{minipage}{0.5\textwidth}
\vspace{-.7cm}
\begin{algorithm}[H]
\caption{Query-Aware Flow Diffusion
}\label{alg:single_seed_query_diffusion}
\begin{algorithmic}[1]
\Input Graph $\mathcal{G} = (\mathcal{V}, \mathcal{E})$ with node embeddings 
$\{\mathbf{e}_v\}_{v \in \mathcal{V}}$, seed node $s \in \mathcal{V}$ (via Algorithm~\ref{alg:seed_selection}), 
query $q$ and embedding $\mathbf{e}_q \gets h(q)$, $\alpha, \epsilon \geq 0$.
\Output Node importance scores $\mathbf{x} \in \mathbb{R}^{|\mathcal{V}|}$
\State Initialize $\mathbf{m} \gets \mathbf{0}$, $\mathbf{x} \gets \mathbf{0}$
\State Set sink capacities $T_v = \deg(v)$ for all $v \in \mathcal{V}$
\State Set source mass $m_s = \alpha \cdot \sum_{v \in \mathcal{V}} T_v$
\While{$\sum_v \max(0, m_v - T_v) > \epsilon$}
    \State Select $v \in \{u : m_u > T_u\}$ u.a.r.
    \State $\texttt{excess} \gets m_v - T_v$, \quad $w_v \gets 0$
    \For{$u \in \text{neighbors}(v)$}
        \State Compute $\bar{w}(q, v, u)$ using Eq.~\eqref{eqn:query_weights_sim}
        \State $w_v \gets w_v + \bar{w}(q, v, u)$ 
    \EndFor
    \State $x_v \gets x_v + \texttt{excess} / w_v$, \quad $m_v \gets T_v$
    \For{$u \in \text{neighbors}(v)$}
        \State $m_u \gets m_u + \texttt{excess} \cdot \bar{w}(q, v, u)/w_v$
    \EndFor
\EndWhile
\State \Return $\mathbf{x}$
\end{algorithmic}
\end{algorithm}
\vspace{-.7cm}
\end{minipage}
\end{wrapfigure}
It is often more convenient to use \eqref{eqn:dual:prob}, which incorporates the $\boldsymbol{T}$ and $\boldsymbol{\Delta}$ constraints into the objective. The solution provides node importance scores $ \mathbf{x} \in \mathbb{R}^{|\mathcal{V}|}_{+}$, indicating each entity’s query relevance. Intuitively, $\mathbf{f}$ represents edge flow strength—thicker edges in Figure~\ref{fig:graphrags}(right)—while $\mathbf{x}$ denotes node mass or importance, shown by color in Figure~\ref{fig:graphrags}(right): greener nodes are most relevant, yellow least, with intermediate shades indicating moderate importance.

Algorithm~\ref{alg:single_seed_query_diffusion} performs coordinate-wise optimization updates on the dual objective~\eqref{eqn:dual:prob} for a query and seed node. We set $\Delta_v = \alpha \sum_{u \in \mathcal{V}} T_u$ for $v \in \mathcal{S}_{\mathcal{V}}$ and $\Delta_v = 0$ otherwise. At each step, a node $v$ with excess mass ($m_v > T_v$) is chosen uniformly at random and $x_v$ is increased to reduce the objective. The mass vector $\mathbf{m}$ tracks the gradient of $F(\mathbf{x})$, initialized as $\mathbf{m} = \boldsymbol{\Delta}$ at $\mathbf{x} = \mathbf{0}$ with $\nabla F(\mathbf{0}) = \mathbf{T} - \boldsymbol{\Delta}$. More generally, $\mathbf{m} = \boldsymbol{\Delta} - \mathbf{L}(q)\mathbf{x}$, linking to $\nabla F(\mathbf{x},q) = \mathbf{T} - \mathbf{m}$. The condition $m_v > T_v$ is equivalent to $\partial F/\partial x_v < 0$, ensuring updates decrease the objective. Push operations enforce complementary slackness: active $x_v > 0$ correspond to nodes at capacity ($m_v = T_v$), satisfying dual optimality.

\noindent \textbf{Locality Preservation and On-Demand Computation.} After selecting seed nodes $\mathcal{S}_{\mathcal{V}}$, Algorithm~\ref{alg:single_seed_query_diffusion} preserves the locality of flow diffusion while enabling online graph traversal. This property is crucial for large-scale KGs with millions of nodes. Unlike SOTA RAGs~\citep{edge2024local}, which requires full preprocessing, QAFD-RAG dynamically discovers relevant entities and relationships during traversal. 
When mass flows from node $v$ to its neighbors, the algorithm: (i) computes query-aware edge weights $\bar{w}(q, v, u)$ on-demand via \eqref{eqn:query_weights_sim}; (ii) retrieves embeddings $\mathbf{e}_u$ and properties only when mass reaches $u$; and (iii) determines sink capacities $T_u$ dynamically. This lazy evaluation makes complexity scale with the explored subgraph rather than the full graph. 

\noindent \textbf{Extension to Multi-Subquery Formulation. }
For complex queries requiring multi-hop reasoning, a single embedding via \eqref{eqn:query_weights_sim} may be insufficient. We address this by decomposing such queries into multiple subqueries, as they often involve distinct reasoning aspects solvable independently and then combined. For example, the Spider 2.0~\citep{lei2024spider} query \textit{"Please help me find the film category with the highest total rental hours in cities whose names either start with 'A' or contain a hyphen."} decomposes into filtering cities, linking to addresses and customers, finding rentals, mapping rentals to films, and identifying film categories with the highest totals. This decomposition enables more effective embeddings during traversal by handling reasoning aspects independently before aggregation. Given a complex query $q$, we decompose it into $K$ subqueries $\mathcal{Q}_K=\{q_1, q_2, \ldots, q_K\}$ using LLM-based decomposition (see Prompt~\ref{promp:seed}), where each $q_i$ captures a specific reasoning aspect. We then apply flow diffusion to each subquery independently: for each $q_k$, we compute
\begin{align}
\label{eqn:subquery_weights}
\bar{w}(q_k, u, v) =  c \cdot  H_{\text{sim}}(h(u), h(v)) \circ \left[a + b \cdot \left(H_{\text{sim}}(h(u), h(q_k)) \circ H_{\text{sim}}(h(v), h(q_k))\right)\right].
\end{align}
We then solve the flow optimization problem for each subquery. From \eqref{eqn:dual:prob}, we obtain
\begin{align}\label{eqn:subquery_dual_optimization}
\min_{\mathbf{x}^{(k)} \in \mathbb{R}_+^{|\mathcal{V}|}} F(\mathbf{x}^{(k)}; q_k) &:= \tfrac{1}{2} (\mathbf{x}^{(k)})^\top \mathbf{L}^{(k)}(q_k) \mathbf{x}^{(k)} + (\mathbf{x}^{(k)})^\top (\mathbf{T}^{(k)} - \boldsymbol{\Delta}^{(k)}), \qquad \forall  q_k \in \mathcal{Q}_K.
\end{align}
Here, $\mathbf{L}^{(k)}(q_k) = \mathbf{B} \bar{\mathbf{W}}^{(k)}(q_k) \mathbf{B}^\top$ is the weighted Laplacian for $q_k$, with $\bar{\mathbf{W}}^{(k)}(q_k)$ denoting the diagonal matrix of edge weights $\bar{w}(q_k, u, v)$. The solution yields subquery-specific importance scores $\mathbf{x}^{(k)}$, with support $\mathcal{V}_{\text{support}}^{(k)} = \{v \in \mathcal{V} : x_v^{(k)} > \underline{\epsilon}\}$, where $\underline{\epsilon} > 0$ filters out negligible values. The final retrieved subgraph combines all subqueries, $\mathcal{G}_q = \bigcup_{k=1}^K \mathcal{G}_{q_k}$, where each $\mathcal{G}_{q_k}$ is built from $\mathcal{V}_{\text{support}}^{(k)}$ and its edges. See Figure~\ref{fig:A1:schema_optimal_paths_local141_gpt-4o} for an illustrative example.

\section{Optimization and Statistical Guarantees}\label{sec:guarantees}
We now provide theoretical guarantees for Algorithm~\ref{alg:single_seed_query_diffusion}, focusing on convergence and locality.
\begin{lemma}\label{lem:locality}
Let $\mathbf{x}^{(k)*}$ be the optimal solution of \eqref{eqn:subquery_dual_optimization}. The support of each iterate generated by Algorithm~\ref{alg:single_seed_query_diffusion} is contained within $\operatorname{supp}(\mathbf{x}^{(k)*})$. Moreover, $ |\operatorname{supp}(\mathbf{x}^{(k)*})| \leq \|\boldsymbol{\Delta}^{(k)}\|_1$.
\end{lemma}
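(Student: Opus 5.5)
This is the standard locality argument for flow diffusion (Fountoulakis et al.), adapted to the query-aware weights $\bar{\mathbf{W}}^{(k)}(q_k)$. Nothing in it uses the specific form of the weights, only that they are positive on edges. Fix $k$ and drop the superscript.

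\textbf{Step 1 (KKT characterization).} Since $\mathbf{L}=\mathbf{B}\bar{\mathbf{W}}\mathbf{B}^\top\succeq 0$ and the constraint is $\mathbf{x}\ge 0$, the KKT conditions of \eqref{eqn:subquery_dual_optimization} are necessary and sufficient. Writing $\mathbf{m}(\mathbf{x}):=\boldsymbol{\Delta}-\mathbf{L}\mathbf{x}$, so that $\nabla F=\mathbf{T}-\mathbf{m}$, a point $\mathbf{x}^*$ is optimal iff
\begin{align*}
\mathbf{x}^*\ge 0,\qquad m_v(\mathbf{x}^*)\le T_v \ \ \forall v,\qquad x^*_v>0\Rightarrow m_v(\mathbf{x}^*)=T_v .
\end{align*}
If the minimizer is not unique, I would fix the one singled out as in \citet{fountoulakis2023flow}. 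For example, take the minimal optimal solution: the componentwise infimum of the feasible set $\{\mathbf{x}\ge 0:\mathbf{m}(\mathbf{x})\le\mathbf{T}\}$ is again feasible, by the Z-matrix structure of $\mathbf{L}$.

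\textbf{Step 2 (iterates stay below $\mathbf{x}^*$).} I would prove by induction that every iterate satisfies $\mathbf{x}^t\le\mathbf{x}^*$ componentwise. This immediately gives $\operatorname{supp}(\mathbf{x}^t)\subseteq\operatorname{supp}(\mathbf{x}^*)$. The base case $\mathbf{x}^0=0$ is trivial. For the inductive step, suppose $\mathbf{x}^t\le\mathbf{x}^*$ and the algorithm pushes at $v$ with $m_v(\mathbf{x}^t)>T_v$. The push sets $x_v$ to the unique value with $m_v(\mathbf{x}^{t+1})=T_v$, i.e. $x_v^{t+1}=x_v^t+(m_v-T_v)/w_v$, where $w_v=L_{vv}$. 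Since off-diagonal entries of $\mathbf{L}$ are $-\bar w(q,u,v)\le 0$, the map $x_v\mapsto m_v$ is decreasing with slope $-L_{vv}$, and $m_v$ is nondecreasing in each $x_u$ with $u\neq v$. Now compare with the vector $\mathbf{y}$ that agrees with $\mathbf{x}^t$ off $v$ and has $y_v=x^*_v$. Because $\mathbf{y}\le\mathbf{x}^*$ off $v$ and the two agree at $v$, monotonicity gives $m_v(\mathbf{y})\le m_v(\mathbf{x}^*)\le T_v$. Hence the root of $m_v(\cdot)=T_v$ along coordinate $v$ satisfies $x_v^{t+1}\le x^*_v$.

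The main obstacle is this monotonicity step. It needs two things: the sign structure of $\mathbf{L}$ (nonnegative query-aware weights, so $a,b,c\ge 0$ and a nonnegative similarity), and the choice of $\mathbf{x}^*$ in the case of non-uniqueness. This is where I expect care to be required.

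\textbf{Step 3 (support size).} Let $S=\operatorname{supp}(\mathbf{x}^*)$. For $v\in S$ we have $m_v(\mathbf{x}^*)=T_v$. For every $v$, $m_v(\mathbf{x}^*)\ge 0$, since it equals the initial mass plus net inflow and total mass is conserved. Because $\mathbf{1}^\top\mathbf{L}=0$, we get $\sum_v m_v(\mathbf{x}^*)=\mathbf{1}^\top\boldsymbol{\Delta}=\|\boldsymbol{\Delta}\|_1$. Therefore
\begin{align*}
\sum_{v\in S}T_v\;\le\;\sum_{v}m_v(\mathbf{x}^*)\;=\;\|\boldsymbol{\Delta}\|_1 .
\end{align*}
With $T_v=\deg(v)\ge 1$ (or, in general, $T_v\ge 1$), this gives $|S|\le\|\boldsymbol{\Delta}\|_1$. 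For the nonnegativity of $m_v(\mathbf{x}^*)$ I would argue as follows. On $S$ it equals $T_v\ge 0$. Off $S$ we have $x^*_v=0$, so $m_v=\Delta_v+\sum_u\bar w(q,u,v)x^*_u\ge 0$ by nonnegativity of the weights.
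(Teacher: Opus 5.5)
Your proof is correct, and it is a more rigorous route than the paper's, which argues informally.

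\textbf{Support containment.} The paper uses complementary slackness to note that $x_u^{(k)*}=0$ forces $m_u\le T_u$ at the optimum. It then simply asserts that, because mass only moves out of nodes with excess, such a node ``will never have excess mass'' during the run. That assertion is exactly what needs proof. Your induction $\mathbf{x}^t\le\mathbf{x}^{*}$, driven by the Z-matrix sign pattern of $\mathbf{L}$, is the missing ingredient. Once you have it, $x_u^t\le x_u^*=0$ gives $m_u(\mathbf{x}^t)\le m_u(\mathbf{x}^*)\le T_u$, so the paper's claimed invariant follows.

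\textbf{Cardinality bound.} The paper argues by averaging that each supported node receives ``a minimum quantum of flow'' owing to ``the discrete nature of the push operations.'' But pushes move arbitrary real amounts; what actually supplies the quantum is saturation, $m_v(\mathbf{x}^*)=T_v\ge 1$ on the support. Your count makes this precise, using saturation, $\mathbf{1}^\top\mathbf{L}=0$, and $m_v(\mathbf{x}^*)\ge 0$ off the support. It yields the stronger volume bound $\sum_{v\in\operatorname{supp}(\mathbf{x}^*)}T_v\le\|\boldsymbol{\Delta}\|_1$. It also makes explicit the hypothesis $T_v\ge 1$ that the statement silently needs: this holds for $T_v=\deg(v)$ on non-isolated nodes and for $T_v=1$, but fails for a uniform capacity $\beta<1$.

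\textbf{Two minor remarks.} First, your concern about non-uniqueness is unnecessary. The inductive step uses only $\mathbf{x}^*\ge 0$ and $\mathbf{m}(\mathbf{x}^*)\le\mathbf{T}$, so every iterate is dominated by every feasible point, hence by every minimizer. Step 3 likewise uses only KKT conditions that hold at every minimizer. Second, both steps rely on $\bar w(q_k,u,v)\ge 0$ and on $w_v>0$ at pushed nodes. This holds for the RBF/Product weights of Definition~\ref{def:correct-model}, but not automatically for cosine-based weights, as you rightly flagged.
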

Let $\bar{d}$ be the maximum degree of a node in $\operatorname{supp}(\mathbf{x}^*)$. Since each iteration only touches a node $u \in \operatorname{supp}(\mathbf{x}^*)$ and its neighbors, Lemma~\ref{lem:locality} implies that the number of nodes ever explored is at most $\|\boldsymbol{\Delta}\|_1$. Thus, if $\|\boldsymbol{\Delta}\|_1$ is small and $\bar{d}$ does not scale linearly with $n$, Algorithm~\ref{alg:single_seed_query_diffusion} remains local, with subgraph size controlled by $\|\boldsymbol{\Delta}\|_1$.
\begin{theorem}\label{lem:complexity}
For subquery $q_k$, assume $|\operatorname{supp}(\mathbf{x}^{(k)*})| < |\mathcal{V}|$, where $\mathbf{x}^{(k)*}$ solves \eqref{eqn:subquery_dual_optimization}. After $\tau^{(k)} = O\!\left(\|\boldsymbol{\Delta}^{(k)}\|_1 \tfrac{\gamma^{(k)}}{\eta^{(k)}} \log \tfrac{1}{\xi}\right)$, where $\gamma^{(k)} = \max_{u \in \operatorname{supp}(\mathbf{x}^{(k)*})} \sum_{v \sim u} \bar{w}(q_k, u, v)$ and  $\eta^{(k)} \leq \min_{(u,v)\in\mathrm{supp}(\mathbf{B}\mathbf{x}^{(k)*})} \bar{w}(q_k, u, v)$, we obtain $\mathbb{E}\!\left[F\!\left(\mathbf{x}^{(k),\tau^{(k)}}; q_k\right)\right] - F\!\left(\mathbf{x}^{(k)*};q_k\right) \leq \xi$.
\end{theorem}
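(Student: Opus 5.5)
We view Algorithm~\ref{alg:single_seed_query_diffusion} as randomized exact coordinate descent on the dual objective \eqref{eqn:subquery_dual_optimization}. We then follow the standard linear-convergence argument for flow diffusion (as in \citet{fountoulakis2020p,fountoulakis2023flow}), replacing unit edge weights by the query-aware weights $\bar w(q_k,u,v)$.

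\textbf{Step 1: per-iteration decrease.} First check that a push at $v$ is an exact minimization of $F$ along coordinate $x_v$. The diagonal entry is $L_{vv}=\sum_{u\sim v}\bar w(q_k,v,u)=w_v$, and the partial derivative is $\partial F/\partial x_v = T_v-m_v$. Setting $x_v\leftarrow x_v+(m_v-T_v)/w_v$ therefore zeroes this derivative, and the mass update matches $\mathbf m=\boldsymbol\Delta-\mathbf L(q_k)\mathbf x$. The resulting decrease is
\[
F(\mathbf x^{t})-F(\mathbf x^{t+1})=\frac{(m_v-T_v)^2}{2w_v}\ \ge\ \frac{(m_v-T_v)^2}{2\gamma^{(k)}}.
\]
The inequality uses Lemma~\ref{lem:locality}: every pushed node lies in $\operatorname{supp}(\mathbf x^{(k)*})$, so $w_v\le\gamma^{(k)}$.

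\textbf{Step 2: expectation over the uniform choice.} Let $A_t=\{u:m_u>T_u\}$. By Lemma~\ref{lem:locality}, $A_t\subseteq\operatorname{supp}(\mathbf x^{(k)*})$, so $|A_t|\le\|\boldsymbol\Delta^{(k)}\|_1$. Taking expectation over the uniform choice of $v\in A_t$ gives
\[
\mathbb E[F(\mathbf x^{t})-F(\mathbf x^{t+1})\mid \mathbf x^t]\ \ge\ \frac{1}{2\gamma^{(k)}\|\boldsymbol\Delta^{(k)}\|_1}\sum_{u\in A_t}(m_u-T_u)^2 .
\]
The sum on the right is the squared norm of the projected gradient of $F$ over the nonnegative orthant, restricted to the violating coordinates.

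\textbf{Step 3: error bound, the main obstacle.} It remains to show
\[
F(\mathbf x^t)-F(\mathbf x^{(k)*})\ \le\ \frac{1}{2\eta^{(k)}}\sum_{u\in A_t}(m_u-T_u)^2,
\]
a Polyak–\L{}ojasiewicz-type inequality restricted to the support. The difficulty is that $\mathbf L(q_k)$ is singular, so $F$ is not strongly convex.

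The plan is to use the assumption $|\operatorname{supp}(\mathbf x^{(k)*})|<|\mathcal V|$. Since the support is proper, some node outside it has $x=0$ both at the iterate and at the optimum. This pins down the constant direction of the Laplacian, so on vectors supported in $S=\operatorname{supp}(\mathbf x^{(k)*})$ (with zeros outside), $\mathbf L(q_k)$ behaves like a grounded Laplacian. Its quadratic form satisfies
\[
\mathbf y^\top\mathbf L\mathbf y=\sum_{(u,v)}\bar w(q_k,u,v)(y_u-y_v)^2\ \ge\ \eta^{(k)}\sum_{(u,v)\in\operatorname{supp}(\mathbf B\mathbf x^{(k)*})}(y_u-y_v)^2 .
\]

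Concretely, we would proceed as follows:
\begin{enumerate}
\item Use convexity: $F(\mathbf x^t)-F^*\le\langle\nabla F(\mathbf x^t),\mathbf x^t-\mathbf x^*\rangle$. Only coordinates with $m_u>T_u$ contribute negatively, and coordinates with $m_u\le T_u$ have $x^t_u\le x^*_u$ by the monotonicity $\mathbf x^t\le\mathbf x^*$ that underlies Lemma~\ref{lem:locality}.
\item Bound $\|\mathbf x^t-\mathbf x^*\|$ by the gradient via the grounded-Laplacian inequality above, using $\nabla F(\mathbf x^t)-\nabla F(\mathbf x^*)=\mathbf L(\mathbf x^t-\mathbf x^*)$ together with complementary slackness at $\mathbf x^*$.
\end{enumerate}
If the path/connectivity structure of the support introduces an extra factor, we would absorb it into the $O(\cdot)$.

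\textbf{Step 4: conclude.} Combining Steps 2 and 3 yields the contraction
\[
\mathbb E[F(\mathbf x^{t+1})-F^*]\le\Bigl(1-\tfrac{\eta^{(k)}}{\gamma^{(k)}\|\boldsymbol\Delta^{(k)}\|_1}\Bigr)\,\mathbb E[F(\mathbf x^t)-F^*].
\]
Iterating and using that $F(\mathbf 0)-F^*$ is bounded by a fixed quantity (absorbed into the logarithm), the gap falls below $\xi$ after
\[
\tau^{(k)}=O\!\left(\|\boldsymbol\Delta^{(k)}\|_1\,\tfrac{\gamma^{(k)}}{\eta^{(k)}}\,\log\tfrac1\xi\right)
\]
iterations, as claimed.
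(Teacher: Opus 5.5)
Your Steps 1, 2 and 4 are sound and follow the same skeleton as the paper's proof: exact coordinate minimization, uniform sampling from an excess set inside $S:=\operatorname{supp}(\mathbf x^{(k)*})$ with $|S|\le\|\boldsymbol\Delta^{(k)}\|_1$, an error bound, then a contraction. The gap is Step~3, and the factor you propose to ``absorb into the $O(\cdot)$'' is not a constant. Since $\mathbf x^t\le\mathbf x^*$, both vanish off $S$, and $\nabla F(\mathbf x^*)$ vanishes on $S$, you have exactly $F(\mathbf x^t)-F^*=\tfrac12\mathbf y^\top\mathbf L_{SS}\mathbf y$. Here $\mathbf y=\mathbf x^*-\mathbf x^t$ and $\mathbf L_{SS}$ is the principal submatrix of $\mathbf L(q_k)$ on $S$. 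Your convexity inequality, Cauchy--Schwarz, and $\lambda_{\min}(\mathbf L_{SS})\|\mathbf y\|^2\le\mathbf y^\top\mathbf L_{SS}\mathbf y$ then give $F(\mathbf x^t)-F^*\le\frac{2}{\lambda_{\min}(\mathbf L_{SS})}\sum_{u\in A_t}(m_u-T_u)^2$. So the constant is the smallest eigenvalue of the grounded Laplacian, not $\eta^{(k)}$. Your edgewise bound $\mathbf y^\top\mathbf L\mathbf y\ge\eta^{(k)}\sum(y_u-y_v)^2$ only becomes a bound on $\|\mathbf y\|^2$ through a Poincar\'e/path-length argument, and that loses a factor growing with $|S|$ (on a unit-weight path, $\lambda_{\min}(\mathbf L_{SS})=\Theta(1/|S|^2)$). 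Concretely, take a unit-weight path $0,1,\dots,N$ with $N>k$, seed $0$, $T_v\equiv1$, $\Delta_0=k+\tfrac12$. Then $S=\{0,\dots,k-1\}$, $x^*_v=(k-v)^2/2$, $\gamma^{(k)}=2$ and $\eta^{(k)}=1$. Already at $\mathbf x^0=\mathbf 0$ your Step~3 inequality reads $k(4k^2-1)/24\le(k-\tfrac12)^2/2$, which is false for every $k\ge2$ and off by a factor of about $k/3$.

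No argument can repair this with an $|S|$-free constant, because the stated bound itself fails on this instance. Mass is conserved and nonnegative, so each push raises $\sum_v x_v$ by at most $k+\tfrac12$, while $\sum_v x^*_v\ge k^3/6$. Since $y_k=0$, also $\tfrac12\mathbf y^\top\mathbf L\mathbf y\ge\|\mathbf y\|_1^2/(2k^3)$. Hence after $k^2/12$ iterations the gap is still at least $k^3/1152$ deterministically, whereas the theorem promises a gap of at most $\xi$ after $O(k\log(1/\xi))$ iterations. The paper's proof has the same hole, so it does not supply the idea you are missing. It posits $F(\mathbf x^t)-F^*\le C\|\mathbf T-\mathbf m^t\|_1$ for an unspecified problem-dependent $C$ and then drops $C$. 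That bound is linear rather than quadratic in the gradient, so even its claimed contraction does not follow from its displayed decrease. What your argument does establish rigorously, with $\eta^{(k)}$ replaced by $\lambda_{\min}(\mathbf L_{SS})$, is $\tau^{(k)}=O\bigl(|S|\,\gamma^{(k)}\lambda_{\min}(\mathbf L_{SS})^{-1}\log\frac{F(\mathbf 0)-F^*}{\xi}\bigr)$. This $\lambda_{\min}$ is positive when $S\neq\mathcal V$ and the graph is connected with positive weights. Keep $F(\mathbf 0)-F^*$ inside the logarithm, since it is not a fixed quantity: it is $\Theta(k^3)$ in the example.
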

\begin{corollary}\label{thm:convergence_rag}
Algorithm~\ref{alg:single_seed_query_diffusion} converges to $\mathbf{x}^{(k)*}$ in $O(\bar{d} \cdot \|\boldsymbol{\Delta}\|_1 \cdot \log(1/\epsilon))$ iterations, where $\bar{d}$ is the maximum degree in the KG, $\|\boldsymbol{\Delta}\|_1$ is the total injected mass, and $\epsilon$ is the desired accuracy.
\end{corollary}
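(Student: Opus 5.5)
The plan is to obtain Corollary~\ref{thm:convergence_rag} directly from Theorem~\ref{lem:complexity}. The idea is to bound the ratio $\gamma^{(k)}/\eta^{(k)}$ by a constant multiple of $\bar{d}$, and then convert the expected objective gap into a statement about the iterates themselves.

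\textbf{Step 1: bounding $\gamma/\eta$.} Assume the similarity $H_{\mathrm{sim}}$ takes values in a bounded range, e.g., cosine or RBF rescaled to lie in $[\underline{s},1]$ with $\underline{s}>0$ on the edges actually traversed. Then every query-aware weight satisfies $\underline{w}\le \bar{w}(q_k,u,v)\le \bar{w}_{\max}$, where $\underline{w}$ and $\bar{w}_{\max}$ are explicit constants depending only on $a,b,c$ and the similarity range. For instance, with the Hybrid weights \eqref{eqn:weight_hybrid}, $\bar{w}_{\max}\le a+2b$ and $\underline{w}\ge \underline{s}(a+2b\underline{s})$.

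It follows that $\gamma^{(k)}=\max_{u}\sum_{v\sim u}\bar{w}(q_k,u,v)\le \bar{d}\,\bar{w}_{\max}$. We may also take $\eta^{(k)}=\underline{w}$, which satisfies the lower-bound requirement on $\eta^{(k)}$ in Theorem~\ref{lem:complexity}. Hence $\gamma^{(k)}/\eta^{(k)}\le \kappa\,\bar{d}$, where $\kappa=\bar{w}_{\max}/\underline{w}$ is a weight-condition constant absorbed into the $O(\cdot)$.

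Substituting into Theorem~\ref{lem:complexity} with accuracy $\xi$ gives $\tau=O(\bar{d}\,\|\boldsymbol{\Delta}\|_1\log(1/\xi))$ iterations to reach expected gap at most $\xi$. Lemma~\ref{lem:locality} ensures that only nodes in $\operatorname{supp}(\mathbf{x}^*)$ are ever updated, so $\bar{d}$ may be taken either over the support or over the whole KG. The corollary states the weaker, global version.

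\textbf{Step 2: from objective gap to iterates and stopping rule.} Convert the objective gap into the accuracy $\epsilon$ of the statement. There are two options.

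The first is to use restricted strong convexity. On the support, $\mathbf{L}(q_k)$ restricted to $\operatorname{supp}(\mathbf{x}^*)$ is positive definite, because $|\operatorname{supp}(\mathbf{x}^*)|<|\mathcal{V}|$: a connected weighted Laplacian with at least one coordinate fixed at zero is nonsingular. Its smallest eigenvalue is at least of order $\eta$ times a graph quantity. This yields $\|\mathbf{x}^\tau-\mathbf{x}^*\|^2\le (2/\mu)\,(F(\mathbf{x}^\tau)-F^*)$. Choosing $\xi=\mu\epsilon^2/2$ changes only the logarithm by additive constants, giving $\log(1/\epsilon)$ up to constants.

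The second is to note that the algorithm's stopping criterion, total excess $\sum_v\max(0,m_v-T_v)$, equals the $\ell_1$ norm of the negative part of $\nabla F$. This quantity is controlled by the gap via smoothness: $\|\nabla F\|^2\le 2\lambda_{\max}(\mathbf{L})(F-F^*)$ on the feasible side, with $\lambda_{\max}\le 2\gamma$. So the loop terminates with excess at most $\epsilon$ within the same order of iterations. Markov's inequality turns the bound in expectation into a high-probability or expected iteration count.

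\textbf{Main obstacle.} The delicate step is justifying that $\eta$ is bounded below uniformly, so that $\gamma/\eta=O(\bar{d})$ holds rather than depending on the query. Product or cosine weights can be arbitrarily close to zero, or even negative, for irrelevant endpoints.

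I would first handle this by restricting to edges inside $\operatorname{supp}(\mathbf{B}\mathbf{x}^*)$, where both endpoints received mass and are therefore query-aligned. I would then assume, or enforce via clipping or the offset $a>0$ in the Hybrid weight, a positive similarity floor. The resulting constant is hidden in the $O(\cdot)$. The second subtlety, the change of logarithm base from $\xi$ to $\epsilon$, is routine once strong convexity on the support is established.
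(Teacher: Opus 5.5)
The paper gives no separate proof of this corollary. It is meant as the specialization of Theorem~\ref{lem:complexity} in which $\gamma^{(k)}/\eta^{(k)}$ is replaced by $O(\bar d)$, which is immediate for unit weights ($\gamma\le\bar d$, $\eta=1$). Your Step~1 is exactly that specialization, with the weight-condition constant $\kappa=\bar w_{\max}/\underline w$ made explicit.

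You are right that some such hypothesis cannot be dropped. Take the path $s,a,b,c$ with $\bar w(q_k,a,b)=\delta$, all other weights $1$, and $\Delta_s>T_s+T_a$. The excess ping-pongs between $s$ and $a$, and only a $\delta/(1+\delta)$ fraction leaks to $b$ per push at $a$. So $\Omega(1/\delta)$ iterations are needed at fixed $\bar d$ and $\|\boldsymbol{\Delta}\|_1$.

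Note also that the sentence after the corollary uses $\bar d$ as the per-push cost, so your reading of the bound as an iteration count is the consistent one. Total work is then $O(\kappa\bar d^{2}\|\boldsymbol{\Delta}\|_1\log(1/\epsilon))$.

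Step~2 goes beyond the paper, which simply identifies ``accuracy'' with the expected objective gap. It is fine modulo two details. First, the smoothness inequality holds only for the negative part of $\nabla F$; at $\mathbf{x}^*$ the gradient is generally nonzero off the support. Second, $\log(2/\mu)$ is a graph-dependent additive term, not an absolute constant.

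The one claim that is wrong is your attempt to \emph{derive} the floor on $\eta^{(k)}$ rather than assume it. Edges with $x^*_u\neq x^*_v$ include every boundary edge from $\operatorname{supp}(\mathbf{x}^*)$ to a neighbor with $x^*_v=0$. Such a neighbor is merely adjacent to the support, not query-aligned.

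In the paper's own model this is exactly where the argument breaks. In Theorem~\ref{thm:recovery} the injected mass exceeds the capacity of $\mathcal{R}_k$ by $\beta\sum_{u\in\mathcal{R}_k}T_u$. So positive flow must cross edges from $\mathcal{R}_k$ to its complement, and Lemma~\ref{thm:edge-separation} makes these edges of weight $\exp(-\omega(\log|\mathcal{V}|))$. Meanwhile intra-$\mathcal{R}_k$ edges have weight $1-o(1)$. Hence $\gamma^{(k)}/\eta^{(k)}$ is superpolynomial there, and the $O(\bar d\,\|\boldsymbol{\Delta}\|_1\log(1/\epsilon))$ count does not follow.

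So the floor has to enter as an explicit hypothesis, or be enforced by clipping the weights, which changes the problem being solved. The corollary then holds under that hypothesis, not in the statistical regime of Section~\ref{sec:guarantees}.
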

Theorem~\ref{lem:complexity} shows that convergence depends on $\gamma^{(k)}/\eta^{(k)}$, where $\gamma^{(k)}$ is the maximum query-aware weighted degree and $\eta^{(k)}$ the minimum edge weight in the optimal support. Since $\bar{w}(q_k, u, v)$ is query-dependent, convergence adapts to each subquery’s semantics—unlike traditional diffusion with static weights. 
The runtime to reach an $\epsilon$-accurate solution is $O(\bar{d}\|\boldsymbol{\Delta}\|_1 \tfrac{\gamma^{(k)}}{\eta^{(k)}} \log \tfrac{1}{\epsilon})$. If $\bar{d}$, $\|\boldsymbol{\Delta}\|_1$, and $\gamma^{(k)}/\eta^{(k)}$ are sublinear in $n$, Algorithm~\ref{alg:single_seed_query_diffusion} achieves sublinear time, scaling to large KGs. This improves over exhaustive subgraph and graph-based RAG methods with $O(|\mathcal{V}|^2)$ complexity, showing QAFD-RAG combines efficiency with semantic adaptability.

\noindent \textbf{Statistical Guarantees under a Random Graph Model.} Under appropriate signal-to-noise conditions, we can provide guarantees about the quality of recovered reasoning paths.

Our analysis relies on a probabilistic framework where the embeddings $\mathbf{e}_u$, $\mathbf{e}_v$, and $\mathbf{e}_{q_{k}}$ for relevant nodes and queries follow a structured random model. While this assumption may seem restrictive, these embeddings can capture diverse node and query characteristics beyond simple text. The mathematical foundations for analyzing node-attributed graphs have been extensively developed in prior work on graph clustering \citep{AP09,ALM13,AGPT2016,PKDJ17,WFHM2017,fountoulakis2020p,LG20}, diffusion processes \citep{fountoulakis2020p,yang2023weighted,fountoulakis2021local}, and graphical model estimation \citep{tarzanagh2018estimation,zhou2024fairness}. Our novelty lies in incorporating query semantics into this framework, allowing us to analyze how query-dependent edge weights shape traversal dynamics and to provide the subgraph recovery guarantees.
\begin{definition}
\label{def:correct-model}
    Given a schema graph with node set $\mathcal{V}$, for each query $q_k$, define $\mathcal{R}_k \subseteq \mathcal{V}$ as the set of relevant nodes (tables/columns required by the query) with $r_k=|\mathcal{R}_k|$. The random graph generation is governed by the following edge probabilities: for each pair of nodes $(u,v)$, edges are independently drawn with probability $\rho_1$ if $u,v \in \mathcal{R}_k$, with probability $\rho_2$ if exactly one node is in $\mathcal{R}_k$, and otherwise follow the original schema structure. For each node $u \in \mathcal{V}$, we have
$
\mathbf{e}_u = \boldsymbol{\mu}_u + \mathbf{z}_u,
$
where $\boldsymbol{\mu}_u \in \mathbb{R}^d$ is a deterministic vector, and $\mathbf{z}_u \in \mathbb{R}^d$ is random noise with independent mean-zero sub-Gaussian coordinates $z_{u\ell}$, each having variance proxy $\sigma_\ell$:
$$
\textnormal{Prob}(|z_{u\ell}| \geq t) \leq 2\exp\left(-\frac{t^2}{2\sigma_\ell^2}\right),   \quad \textnormal{for all}~~t \geq 0.
$$
For each $q_k$, let
$
\mathbf{e}_{q_k} = \boldsymbol{\mu}_{q_k} + \mathbf{z}_{q_k},
$
using the same noise model. For nodes in the relevant set $\mathcal{R}_k$, we set $\boldsymbol{\mu}_u = \boldsymbol{\mu}_v = \boldsymbol{\mu}_{q_k}$ for all $u, v \in \mathcal{R}_k$. Finally, for query $q_k$, we define $\bar{w}(q_k, u, v) = \bar{w}_{\text{Product}}(q_k, u, v)$, where $H_{\text{sim}}(\mathbf{a}, \mathbf{b}) = \exp(-\gamma \|\mathbf{a} - \mathbf{b}\|^2)$ is an RBF-kernel similarity function. We also allow distinct bandwidths $\gamma_i$ for each factor in $\bar{w}_{\text{Product}}$.
\end{definition}
Note that this formulation is chosen for simplicity of analysis. In practice, other measures--such as cosine similarity--can be used depending on the application. Throughout, we define
\begin{equation}
    \hat{\sigma} := \max_{1 \leq \ell \leq d} \sigma_\ell \quad \textnormal{and} \quad \hat{\mu} := \min_{u \in \mathcal{R}_k,\, v \notin \mathcal{R}_k} \|\boldsymbol{\mu}_u - \boldsymbol{\mu}_v\|.
\end{equation}

\begin{assumption}[Knowledge Graph Structure and Signal-to-Noise] \label{assump:snr}
The knowledge graph satisfies the following: 
(1) Entities relevant to a query $q_i$ form a connected subgraph $\mathcal{R}_i$; 
(2) The embedding space preserves semantic similarity with signal-to-noise ratio 
$\hat{\mu}/\hat{\sigma} = \omega(\sqrt{d \log|\mathcal{V}|})$; 
(3) Variance concentrates as $\sum_{\ell=1}^d \sigma_\ell^2 / \hat{\sigma}^2 = O(\log|\mathcal{V}|)$; 
and (4) Irrelevant entities have embeddings well-separated from query embeddings.
\end{assumption}

\begin{lemma}[Query-Aware Edge Weight Separation]\label{thm:edge-separation} 
Under Assumption \ref{assump:snr}, if the similarity function parameters satisfy $\gamma_i \hat{\sigma}^2 = o(\log^{-1} |\mathcal{V}|)$ for $i \in \{1,2,3\}$, then with probability at least $1 - O(|\mathcal{V}|^{-2})$:  
\textit{(i)} For all $u,v \in \mathcal{R}_k$, $\bar{w}(q_k, u, v) \geq (1 - o(1))$;  
\textit{(ii)} For all $u \in \mathcal{R}_k, v \in \mathcal{V} \setminus \mathcal{R}_k$, $\bar{w}(q_k, u, v) \leq \exp(-\omega(\log |\mathcal{V}|))$.
\end{lemma}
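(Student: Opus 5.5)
The plan is to reduce both claims to uniform concentration of the pairwise squared distances $\|\mathbf{e}_u-\mathbf{e}_v\|^2$ and $\|\mathbf{e}_u-\mathbf{e}_{q_k}\|^2$, and then insert these bounds into the three RBF factors of $\bar{w}_{\text{Product}}$. For any two distinct indices $a,b$ among $\mathcal{V}\cup\{q_k\}$ we have $\mathbf{e}_a-\mathbf{e}_b=(\boldsymbol{\mu}_a-\boldsymbol{\mu}_b)+(\mathbf{z}_a-\mathbf{z}_b)$. The noise difference has independent sub-Gaussian coordinates with variance proxy at most $2\sigma_\ell^2$. We will bound the noise part with a Hanson--Wright / Bernstein bound for $\|\mathbf{z}_a-\mathbf{z}_b\|^2$: with probability at least $1-|\mathcal{V}|^{-4}$,
\[
\|\mathbf{z}_a-\mathbf{z}_b\|^2 \le C\Big(\textstyle\sum_\ell \sigma_\ell^2 + \hat{\sigma}^2\log|\mathcal{V}|\Big)=O(\hat{\sigma}^2\log|\mathcal{V}|),
\]
where the equality uses Assumption~\ref{assump:snr}(3). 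We will bound the cross term with a scalar sub-Gaussian tail: $|\langle \boldsymbol{\mu}_a-\boldsymbol{\mu}_b,\mathbf{z}_a-\mathbf{z}_b\rangle|\le C\hat{\sigma}\|\boldsymbol{\mu}_a-\boldsymbol{\mu}_b\|\sqrt{\log|\mathcal{V}|}$, again with probability $1-|\mathcal{V}|^{-4}$. A union bound over the $O(|\mathcal{V}|^2)$ relevant pairs then gives a single event of probability $1-O(|\mathcal{V}|^{-2})$ on which all the estimates below hold simultaneously.

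\emph{Part (i).} For $u,v\in\mathcal{R}_k$ the means coincide: $\boldsymbol{\mu}_u=\boldsymbol{\mu}_v=\boldsymbol{\mu}_{q_k}$. So each of the three squared distances ($u$–$v$, $u$–$q_k$, $v$–$q_k$) reduces to the pure noise term and is $O(\hat{\sigma}^2\log|\mathcal{V}|)$. Each factor is then $\exp(-\gamma_i\,O(\hat{\sigma}^2\log|\mathcal{V}|))$. Since $\gamma_i\hat{\sigma}^2=o(\log^{-1}|\mathcal{V}|)$, each exponent is $o(1)$, so each factor is $1-o(1)$ and so is their product.

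\emph{Part (ii).} For $u\in\mathcal{R}_k$ and $v\notin\mathcal{R}_k$, the factors $H_{\text{sim}}(h(u),h(v))$ and $H_{\text{sim}}(h(u),h(q_k))$ are at most $1$, so it suffices to make $H_{\text{sim}}(h(v),h(q_k))$ tiny. Since $\boldsymbol{\mu}_{q_k}=\boldsymbol{\mu}_u$ for any $u\in\mathcal{R}_k$, we have $\|\boldsymbol{\mu}_v-\boldsymbol{\mu}_{q_k}\|\ge\hat{\mu}$. Expanding the square and discarding the nonnegative noise-square term,
\[
\|\mathbf{e}_v-\mathbf{e}_{q_k}\|^2\ge \|\boldsymbol{\mu}_v-\boldsymbol{\mu}_{q_k}\|^2-2C\hat{\sigma}\|\boldsymbol{\mu}_v-\boldsymbol{\mu}_{q_k}\|\sqrt{\log|\mathcal{V}|}\ge(1-o(1))\hat{\mu}^2 .
\]
The last step holds because $\hat{\mu}/\hat{\sigma}=\omega(\sqrt{d\log|\mathcal{V}|})$ by Assumption~\ref{assump:snr}(2). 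It then remains to show $\gamma_3\hat{\mu}^2=\omega(\log|\mathcal{V}|)$, which gives $\bar{w}\le\exp(-\omega(\log|\mathcal{V}|))$.

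\emph{Main obstacle.} The stated hypothesis only bounds $\gamma_i\hat{\sigma}^2$ from above, whereas part (ii) needs a lower bound on $\gamma_3\hat{\mu}^2$. Writing $\gamma_3\hat{\mu}^2=(\gamma_3\hat{\sigma}^2)(\hat{\mu}/\hat{\sigma})^2$, the second factor is $\omega(d\log|\mathcal{V}|)$. So it suffices that $\gamma_3\hat{\sigma}^2$ is not too small, e.g.\ $\gamma_3\hat{\sigma}^2\gtrsim 1/d$ (compatible with $o(\log^{-1}|\mathcal{V}|)$ when $d\gg\log|\mathcal{V}|$). I would make this explicit, interpreting the bandwidth condition as a window $\Omega(1/(d\log|\mathcal{V}|))\cdot\omega(1)\le\gamma_i\hat{\sigma}^2\le o(1/\log|\mathcal{V}|)$. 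I would also invoke Assumption~\ref{assump:snr}(4) to cover the separation of irrelevant nodes from the query. Finally, I would double-check that the union bound over pairs involving $q_k$ is included in the $O(|\mathcal{V}|^{-2})$ failure probability.
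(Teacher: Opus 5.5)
Your proposal is correct and takes the paper's own route: decompose each squared distance into mean, noise and cross terms, bound the noise by Bernstein and the cross term by a sub-Gaussian tail, take a union bound over $O(|\mathcal{V}|^2)$ pairs, then use the coinciding means for (i) and the $v$--$q_k$ factor with $\hat{\mu}$-separation for (ii). Your use of Assumption~\ref{assump:snr}(3) to get $O(\hat{\sigma}^2\log|\mathcal{V}|)$ is cleaner than the paper's final $O(\hat{\sigma}^2 d)$, which after multiplying by $\gamma_i$ is only $o(1)$ when $d=O(\log|\mathcal{V}|)$.

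The obstacle you flag is genuine. The paper simply asserts $\exp(-\gamma_2\hat{\mu}^2(1-o(1)))=\exp(-\omega(\log|\mathcal{V}|))$ ``from Assumption~\ref{assump:snr}'', with $\gamma_2$ its label for your $\gamma_3$. Some lower bound on the bandwidth is unavoidable, since $\gamma_i=0$ satisfies the stated hypothesis yet gives $\bar{w}\equiv 1$. So state the extra hypothesis as $\gamma_3\hat{\mu}^2=\omega(\log|\mathcal{V}|)$ or $\gamma_3\hat{\sigma}^2=\Omega(1/d)$. Do not use your window's lower end $\omega(1)/(d\log|\mathcal{V}|)$, which only yields $\gamma_3\hat{\mu}^2=\omega(1)$.
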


Lemma~\ref{thm:edge-separation} shows that under the signal-to-noise conditions in Assumption~\ref{assump:snr}, query-aware edge weights $\bar{w}(q_k, u, v)$ separate relevant nodes $\mathcal{R}_k$ from irrelevant ones $\mathcal{V}\setminus \mathcal{R}_k$. With similarity parameters $\gamma_i$ scaled inversely with $\log|\mathcal{V}|$, the lemma guarantees with high probability that (i) edges within $\mathcal{R}_k$ remain essentially unchanged, preserving reasoning paths, while (ii) edges between relevant and irrelevant nodes become exponentially small, blocking diffusion into contextually irrelevant regions.

\begin{theorem}
\label{thm:recovery}
Under the conditions of Lemma~\ref{thm:edge-separation}, assume that either \textit{(i)} the induced subgraph on $\mathcal{R}_k$ is connected, or \textit{(ii)} $\rho_1 \geq \frac{(4+\epsilon)\log r_k}{\delta^2(r_k-1)}$ for some $\delta \in (0,1)$ and $\epsilon > 0$. If the source mass is set as $\Delta_s^{(k)} = (1+\beta)\sum_{u \in \mathcal{R}_k} T_u$ for any $\beta > 0$, then with high probability we have
\begin{equation}
\begin{aligned}
 \mathcal{R}_k \subseteq \textnormal{supp}(\mathbf{x}^{(k)*}), \qquad\textnormal{and}\qquad   \sum_{u \in \textnormal{supp}(\mathbf{x}^{(k)*}) \setminus \mathcal{R}_k} T_u
   \leq \beta \sum_{u \in \mathcal{R}_k} T_u.
\end{aligned}
\end{equation}
\end{theorem}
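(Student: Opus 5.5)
We follow the standard flow-diffusion recovery argument of \citet{fountoulakis2020p,yang2023weighted}, adapted to the query-aware weights, and condition throughout on the high-probability event of Lemma~\ref{thm:edge-separation}. The argument rests on the KKT conditions of \eqref{eqn:subquery_dual_optimization} at $\mathbf{x}^{(k)*}$. Writing $\mathbf{m}^* = \boldsymbol{\Delta}^{(k)} - \mathbf{L}^{(k)}\mathbf{x}^{(k)*}$, these say that $m^*_v \le T_v$ for every $v$, with equality whenever $x^*_v>0$. Since $\mathbf{1}^\top \mathbf{L}^{(k)} = 0$, mass is conserved: $\sum_v m^*_v = \Delta_s^{(k)}$. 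Moreover, by the locality argument of Lemma~\ref{lem:locality}, every node outside the support satisfies $m^*_v < T_v$, or at least $x^*_v=0$.

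\textbf{Step 1: connectivity of $\mathcal{R}_k$.} Under (i) this is assumed. Under (ii) we use the standard Erdős–Rényi threshold bound. With $\rho_1 \ge (4+\epsilon)\log r_k/(\delta^2(r_k-1))$, a Chernoff bound on the degrees combined with a union bound over cuts (or the known connectivity threshold $\rho_1 \gg \log r_k / r_k$) shows that $G(r_k,\rho_1)$ is connected with probability $1-o(1)$. The constant $4/\delta^2$ is chosen precisely so that each degree concentrates within $(1\pm\delta)\rho_1(r_k-1)$, again with a union bound. Combining this with part (i) of Lemma~\ref{thm:edge-separation}, the subgraph on $\mathcal{R}_k$ is connected with edge weights at least $1-o(1)$.

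\textbf{Step 2: $\mathcal{R}_k \subseteq \operatorname{supp}(\mathbf{x}^*)$.} Suppose some $v\in\mathcal{R}_k$ has $x^*_v=0$, and let $S=\operatorname{supp}(\mathbf{x}^*)\ni s$. By connectivity, some edge $(u,v)$ inside $\mathcal{R}_k$ has $u\in S\cap\mathcal{R}_k$ and $v\notin S$. We then bound the total mass that can leave $S$ through cut edges. Edges from $S\cap\mathcal{R}_k$ to $\mathcal{V}\setminus\mathcal{R}_k$ carry weight $\exp(-\omega(\log|\mathcal{V}|))$, and by Lemma~\ref{lem:locality} the potentials satisfy $x^*_u \le \|\boldsymbol{\Delta}\|_1/\min w$, which is polynomially bounded. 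Hence the total leakage out of $\mathcal{R}_k$ is $o(1)$. The mass absorbed inside $S$ is $\sum_{u\in S}T_u$. If $S$ does not contain all of $\mathcal{R}_k$, then conservation gives
\[
\sum_{v\in\mathcal{R}_k\setminus S} m^*_v \ \ge\ (1+\beta)\sum_{u\in\mathcal{R}_k}T_u - \sum_{u\in S\cap \mathcal{R}_k}T_u - o(1),
\]
and this exceeds $\sum_{v\in\mathcal{R}_k\setminus S}T_v$. Some $v\notin S$ would then have $m^*_v>T_v$, contradicting the KKT conditions. The key point to handle is that mass which stays in $\mathcal{R}_k$ but cannot settle on $S\cap\mathcal{R}_k$ must land on $\mathcal{R}_k\setminus S$; the flow to nodes outside $\mathcal{R}_k$ is exponentially suppressed.

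\textbf{Step 3: bound on irrelevant support.} Every $u\in S\setminus\mathcal{R}_k$ is saturated, so $m^*_u=T_u$. All of this mass must have crossed the cut $(\mathcal{R}_k,\mathcal{V}\setminus\mathcal{R}_k)$, because $\Delta$ vanishes outside $s\in\mathcal{R}_k$. Conservation therefore gives
\[
\sum_{u\in S\setminus\mathcal{R}_k}T_u \le \Delta_s^{(k)}-\sum_{u\in\mathcal{R}_k}m^*_u = \Delta_s^{(k)}-\sum_{u\in\mathcal{R}_k}T_u = \beta\sum_{u\in\mathcal{R}_k}T_u,
\]
using saturation of $\mathcal{R}_k$ from Step 2. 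A sharper estimate would show the leaked mass is actually $o(1)$, but the stated bound already follows directly.

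\textbf{Main obstacle.} The hardest part is Step 2: turning the exponentially small cross weights into an $o(1)$ bound on leaked flow, uniformly over the random graph. This requires controlling $\max_u x^*_u$, equivalently the flow magnitudes $\bar w\,|x_u-x_v|$, across the cut. We would first try to bound $x^*_u \le \Delta_s/\eta$ using the fact that the gradient condition forces $x^*$ to be harmonic-like with bounded increments along the relevant edges, which have weight $\ge 1-o(1)$. A final union bound combines the failure probabilities of Lemma~\ref{thm:edge-separation} and Step 1 to give the high-probability statement.
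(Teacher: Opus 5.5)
Your Steps 1 and 3 match the paper's argument. Step 1 handles connectivity by assumption or the Erd\H{o}s--R\'enyi threshold. Step 3 derives the leakage bound from conservation and saturation of $\mathcal{R}_k$; there you implicitly use that $m^*_v=\sum_u \bar w(q_k,u,v)x^*_u\ge 0$ off the support.

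The gap is in Step 2. You assert, unconditionally and citing Lemma~\ref{lem:locality}, that $x^*_u\le \|\boldsymbol{\Delta}\|_1/\min w$ is polynomially bounded, and hence that the leakage out of $\mathcal{R}_k$ is $o(1)$. Lemma~\ref{lem:locality} only bounds the size of the support, and the claim is actually false. Since $m^*_v\le T_v$ for all $v$ and $\sum_v m^*_v=\Delta^{(k)}_s$, at least $\beta\sum_{u\in\mathcal{R}_k}T_u$ must sit outside $\mathcal{R}_k$ at every optimum. So the net flow across the cut $(\mathcal{R}_k,\mathcal{V}\setminus\mathcal{R}_k)$ is always at least that much. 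Because the cross weights are $\exp(-\omega(\log|\mathcal{V}|))$, this forces $\max_{u\in\mathcal{R}_k}x^*_u$ to be super-polynomially large at the true optimum. An unconditional $o(1)$ leakage bound would contradict the KKT conditions whatever $S$ is, so your Step 2 proves too much. For the same reason, your remark in Step 3 that the leaked mass is ``actually $o(1)$'' is wrong: once $\mathcal{R}_k$ is saturated, it equals exactly $\beta\sum_{u\in\mathcal{R}_k}T_u$.

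The missing idea is that a polynomial bound on the potentials holds only under the contradiction hypothesis, with the zero node serving as an anchor. This is how the paper argues. Suppose some $v\in\mathcal{R}_k$ has $x^*_v=0$. Every edge with $x^*_u>x^*_{u'}$ carries flow $\bar w(q_k,u,u')(x^*_u-x^*_{u'})\le\Delta^{(k)}_s$. One way to see this: the edge leaves the super-level set $\{w:x^*_w\ge x^*_u\}$, all of whose boundary flows are outward, and whose net outflow is at most the injected mass. Now sum these increments along a path of length at most $r_k-1$ inside $\mathcal{R}_k$, from the maximizer to $v$, using the weight bound $1-o(1)$ from Lemma~\ref{thm:edge-separation}(i). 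This gives
\[
\max_{u\in\mathcal{R}_k}x^*_u\;\le\;\frac{r_k\,\Delta^{(k)}_s}{1-o(1)}.
\]
Only then does the outflow bound $\max_{u\in\mathcal{R}_k}x^*_u\cdot N_{out}\cdot\exp(-\omega(\log|\mathcal{V}|))=o\bigl(\beta\sum_{u\in\mathcal{R}_k}T_u\bigr)$ follow. That contradicts the forced leakage of at least $\beta\sum_{u\in\mathcal{R}_k}T_u$; equivalently, $\mathcal{R}_k$ cannot hold the remaining mass, which is your localized contradiction.

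Your ``Main obstacle'' paragraph points at bounded increments along relevant edges. But increments alone control only differences $x^*_u-x^*_{u'}$, not $x^*_u$ itself. Without anchoring at the node assumed to have $x^*_v=0$, the bound $x^*_u\le\Delta_s/\eta$ does not follow.
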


Theorem~\ref{thm:recovery} shows that query-aware flow diffusion reliably recovers the query-relevant subgraph $\mathcal{R}_k$ with controlled precision. It provides two guarantees: \textbf{Complete recovery} ($\mathcal{R}_k \subseteq \textnormal{supp}(\mathbf{x}^{(k)*})$) ensures all semantically relevant nodes to query $q_k$ receive positive flow and are included, so no essential information is lost. \textbf{Limited leakage} ($\sum_{u \in \textnormal{supp}(\mathbf{x}^{(k)*}) \setminus \mathcal{R}_k} T_u \leq \beta \sum_{u \in \mathcal{R}_k} T_u$) bounds flow escaping into irrelevant regions. The trade-off parameter $\beta$ controls this balance: smaller $\beta$ enforces tighter focus, while larger $\beta$ allows peripheral context at some cost to precision. Together, these results show that QAFD-RAG’s query-aware edge weighting acts as a semantic filter, concentrating flow on relevant reasoning paths while minimizing diffusion to unrelated regions. 

\begin{table*}[t]
\centering
\caption{Comparison of QAFD-RAG and baselines on UltraDomain across five GPT-4o--scored dimensions (0--100).
Rows are grouped by dataset and columns by metric; values are mean ($\pm$ std) over 5 evaluations.
Best per dataset/metric is bolded. \textit{Continued in Appendix~\ref{app:add_results}.}}
\label{tab:rag_all_in_one_part1}
\rowcolors{1}{white}{gray!15}
\resizebox{\textwidth}{!}{
\begin{tabular}{llrrrrr}
\toprule
\textbf{Dataset} & \textbf{Method} & \textbf{Comprehensive.} & \textbf{Diversity} & \textbf{Logicality} & \textbf{Relevance} & \textbf{Coherence} \\

\midrule
Agriculture & GraphRAG  & \score{87.30}{4.46} & \score{82.85}{4.73} & \score{90.80}{5.84} & \score{94.01}{6.62} & \score{90.08}{3.23} \\
            & LightRAG  & \score{83.65}{5.97} & \score{77.71}{7.34} & \score{88.85}{3.76} & \score{93.55}{4.16} & \score{88.67}{2.57} \\
            & RAPTOR & \score{83.32}{8.67} & \score{76.65}{12.08} & \score{89.54}{3.63} & \score{94.56}{3.41} & \score{89.47}{3.01} \\
            & HippoRAG & \score{82.51}{5.14} & \score{76.26}{9.23} & \score{88.84}{3.26} & \score{94.09}{2.48} & \score{88.79}{2.73} \\
            & QAFD-RAG  & \best{\score{89.93}{3.36}} & \best{\score{84.95}{4.26}} & \best{\score{92.10}{2.53}} & \best{\score{95.67}{3.28}} & \best{\score{92.00}{1.62}} \\
\midrule
Biology & GraphRAG  & \score{85.76}{10.80} & \score{81.05}{10.39} & \score{88.94}{11.70} & \score{93.00}{12.50} & \score{88.57}{11.10} \\
        & LightRAG  & \score{83.92}{4.13}  & \score{78.28}{6.46}  & \score{88.40}{4.24}  & \score{93.31}{5.42}  & \score{88.09}{2.86} \\
        & RAPTOR & \score{83.57}{6.20} & \score{77.10}{9.41} & \score{88.33}{5.62} & \score{93.62}{6.42} & \score{88.67}{4.21} \\
        & HippoRAG & \score{83.07}{4.15} & \score{76.91}{7.25} & \score{88.07}{3.69} & \score{93.62}{3.53} & \score{88.52}{2.90} \\
        & QAFD-RAG  & \best{\score{89.44}{3.92}} & \best{\score{85.13}{4.11}} & \best{\score{91.19}{4.20}} & \best{\score{95.05}{4.71}} & \best{\score{91.33}{2.59}} \\
\midrule
Cooking & GraphRAG  & \score{86.23}{7.00} & \score{79.10}{8.72} & \score{90.79}{2.68} & \score{95.14}{2.91} & \score{90.63}{1.75} \\
        & LightRAG  & \score{82.11}{7.56} & \score{74.97}{9.46} & \score{87.49}{5.82} & \score{92.27}{5.88} & \score{87.98}{4.09} \\
        & RAPTOR & \score{83.15}{6.99} & \score{75.30}{9.79} & \score{88.52}{5.12} & \score{93.59}{5.48} & \score{88.83}{3.87} \\
        & HippoRAG & \score{82.52}{4.55} & \score{74.13}{7.25} & \score{88.40}{3.20} & \score{93.71}{2.66} & \score{88.57}{2.61} \\
        & QAFD-RAG  & \best{\score{89.25}{3.82}} & \best{\score{83.42}{5.25}} & \best{\score{91.35}{2.73}} & \best{\score{95.45}{2.83}} & \best{\score{91.58}{2.04}} \\
\midrule
History & GraphRAG  & \score{84.18}{10.30} & \score{78.88}{9.97} & \score{88.18}{11.63} & \score{92.35}{13.51} & \score{88.40}{10.71} \\
        & LightRAG  & \score{82.24}{6.04}  & \score{76.42}{7.51}  & \score{86.98}{5.75}  & \score{92.18}{6.17}  & \score{87.18}{4.24} \\
        & RAPTOR & \score{82.08}{7.43} & \score{75.59}{9.75} & \score{87.67}{4.63} & \score{92.94}{4.93} & \score{88.13}{3.58} \\
        & HippoRAG & \score{80.45}{6.95} & \score{74.61}{8.22} & \score{86.37}{6.36} & \score{91.54}{8.22} & \score{86.97}{4.77} \\
        & QAFD-RAG  & \best{\score{87.75}{3.96}} & \best{\score{83.14}{4.40}} & \best{\score{90.04}{3.93}} & \best{\score{93.77}{6.25}} & \best{\score{90.55}{2.37}} \\
\midrule
Legal & GraphRAG  & \score{84.96}{9.72}  & \best{\score{78.74}{10.28}} & \score{88.67}{8.58}  & \score{91.01}{10.90} & \score{88.44}{5.77} \\
      & LightRAG  & \score{79.63}{11.25} & \score{67.63}{11.86} & \score{86.06}{7.47}  & \score{90.77}{10.17} & \score{86.12}{6.10} \\
      & RAPTOR & \score{81.43}{9.67} & \score{64.97}{13.91} & \score{88.34}{6.54} & \score{93.29}{9.18} & \score{87.70}{6.16} \\
      & HippoRAG & \score{82.23}{8.85} & \score{64.28}{11.31} & \score{88.56}{7.64} & \best{\score{93.60}{9.45}} & \score{87.95}{6.39} \\
      & QAFD-RAG  & \best{\score{86.19}{5.86}} & \score{77.14}{7.42} & \best{\score{90.06}{5.06}} & \score{93.30}{9.66} & \best{\score{89.99}{3.35}} \\
\bottomrule
\end{tabular}
}
\vspace{-.1cm}

\end{table*}

\section{Experimental Evaluation}
\label{sec:experiments}

We evaluate QAFD-RAG against strong baselines on knowledge graph question-answering and text-to-SQL benchmarks, with a particular emphasis on query-aware retrieval performance and schema linking capabilities.

\subsection{Knowledge Graph Question Answering Results}

\subsubsection{Baselines for KG-QA}
We compare \textbf{QAFD-RAG} with four recent graph-oriented RAG systems under an identical setup (corpora, prompts, evaluation): 
\textbf{GraphRAG} \citep{edge2024local}, which clusters retrieved documents via community detection and builds hierarchical contexts; 
\textbf{LightRAG} \citep{guo2024lightrag}, which performs dual-level retrieval in the indexing graph; 
\textbf{HippoRAG} \citep{jimenez2024hipporag}, which constructs a passage-level KG and applies Personalized PageRank for single-step multi-hop retrieval; and 
\textbf{RAPTOR} \citep{sarthi2024raptor}, which forms a hierarchical retrieval tree through recursive clustering and abstractive summarization, enabling multi-level evidence selection.

\subsubsection{LLM Configuration and Hyperparameters}
Unless otherwise noted, we use \texttt{GPT-4o-mini} for all entity/relation extraction, hierarchical keywording, and cluster summarization, \texttt{text-embedding-3-small} (1536d) for all embeddings, and a fixed configuration of \(40\) seed nodes, diffusion mass \(\alpha = 50\), and hybrid query-aware edge weights (Eq.~\eqref{eqn:weight_hybrid}) with \(a = 1\) and \(b = \tfrac14\). Additional implementation details are provided in Appendix~\ref{app:prompts:qa}.

\subsubsection{Datasets and Evaluation Protocol}
We evaluate QAFD-RAG across three major settings. \textbf{General QA} uses ten {UltraDomain} subsets \citep{qian2024memorag} spanning long-context reasoning and implicit inference (Agriculture, Biology, Cooking, History, Legal, Mathematics, Mix, Music, Philosophy, Physics). Answers are scored along five axes—\emph{Comprehensiveness, Diversity, Logicality, Relevance, Coherence}—using GPT-4o on a \(0\!-\!100\) scale, with five independent evaluations per response (Appendix~\ref{app:prompts:qa}). {For \textbf{Long-Document Summarization}, we use {SQuALITY} \citep{wang2022squality}, where answers are question-focused abstractive summaries benchmarked against multiple human-written references; we report \emph{BLEU-1}, \emph{BLEU-2}, \emph{ROUGE-1-F1}, \emph{ROUGE-2-F1}, and \emph{METEOR} (Appendix~\ref{app:squality}).} \textbf{Multi-Hop QA} is evaluated on {HotpotQA} \citep{yang2018hotpotqa}, {2WikiMultiHopQA} \citep{ho2020constructing}, and {{MuSiQue} \citep{trivedi2022musique}} which require combining evidence across documents and explicit chains; we report \emph{Exact Match (EM)} and \emph{F1} using official evaluation scripts (Appendix~\ref{app:datasets}).

\begin{table}[t]
\centering
\caption{Performance comparison of QAFD-RAG against baseline methods on SQuALITY long-document summarization.}
\label{tab:squality_automated}
\rowcolors{1}{white}{gray!15}
\begin{tabular}{lccccc}
\hline
\textbf{{Method}} &
\textbf{{BLEU-1}} &
\textbf{{BLEU-2}} &
\textbf{{ROUGE-1 F1}} &
\textbf{{ROUGE-2 F1}} &
\textbf{{METEOR}} \\
\hline

{GraphRAG}   
  & {33.91} 
  & {16.12} 
  & {26.38}
  & {4.08} 
  & {24.38} \\

{HippoRAG}   
  & {33.22} 
  & {16.74} 
  & {27.29}
  & {3.92} 
  & {23.41} \\

{RAPTOR}     
  & {32.10} 
  & {16.58} 
  & {25.13} 
  & {3.49} 
  & {22.87} \\

{LightRAG}   
  & {34.17} 
  & {17.41} 
  & \textbf{{28.59}}
  & {4.31} 
  & {23.27} \\

{QAFD-RAG}
  & \textbf{{35.44}} 
  & \textbf{{18.63}} 
  & {28.43} 
  & \textbf{{4.79}} 
  & \textbf{{25.59}} \\
\hline
\end{tabular}
\vspace{-0.25cm}
\end{table}

\subsubsection{Expected Results and Analysis}

On the ten \textsc{UltraDomain} subsets (Tables~\ref{tab:rag_all_in_one_part1}, \ref{tab:rag_all_in_one_part2}), {QAFD-RAG} achieves the strongest overall averages across five axes, with steady gains in \emph{Comprehensiveness}, \emph{Logicality}, and \emph{Coherence}. Baselines show complementary strengths—RAPTOR may excel on \emph{Relevance}/\emph{Coherence} when hierarchical abstraction fits the domain, HippoRAG performs well on \emph{Relevance} with local evidence, and GraphRAG/LightRAG can boost \emph{Diversity} by expanding neighborhoods, often at the cost of \emph{Logicality}. Overall, focusing retrieval on a query-aligned subgraph improves factual alignment without hurting fluency, whereas baselines trade off precision and recall in domain-dependent ways.

{
On the SQuALITY benchmark (Table~\ref{tab:squality_automated}), QAFD-RAG achieves the highest performance on the majority of automated metrics—including BLEU-1, BLEU-2, ROUGE-2 F1, and METEOR—indicating more faithful and coherent long-form summaries compared to GraphRAG, HippoRAG, RAPTOR, and LightRAG. 
While LightRAG now attains the highest ROUGE-1 F1 score, QAFD-RAG's query-aligned retrieval yields a stronger overall metric profile, offering a more stable balance between precision and coverage across diverse narrative questions.
On HotpotQA, {MuSiQue, and 2WikiMultiHopQA}, QAFD-RAG {achieves the highest \emph{F1} and \emph{EM} scores on HotpotQA and MuSiQue, and remains competitive with HippoRAG on 2WikiMultiHopQA} (Table~\ref{tab:multi-hop}). These improvements in \emph{F1} and \emph{EM} highlight that query-aware diffusion more reliably recovers gold evidence under strict matching. {While HippoRAG also demonstrates strong multi-hop reasoning through personalized PageRank, QAFD-RAG's} query-aligned pruning produces tighter evidence chains and a stable precision–recall balance, enabling {superior or competitive} performance across multi-hop reasoning tasks.
}

\subsection{Text-to-SQL Results}
\begin{table}[t]
\centering
\caption{
Performance on Multi-Hop QA (F1, EM).
}
\label{tab:multi-hop}
\rowcolors{1}{white}{gray!15}
\begin{tabular}{lcccccc}
\toprule
\textbf{Method} 
& \multicolumn{2}{c}{\textbf{HotpotQA}}
& \multicolumn{2}{c}{\textbf{2WikiMultihopQA}}
& \multicolumn{2}{c}{\textbf{MuSiQue}} \\
\cmidrule(lr){2-3}\cmidrule(lr){4-5}\cmidrule(lr){6-7}
& \textbf{F1} & \textbf{EM}
& \textbf{F1} & \textbf{EM}
& \textbf{F1} & \textbf{EM} \\
\midrule

GraphRAG  
 & 33.40 & 14.00
 & 15.20 & 7.00
 & 39.40 & 17.60 \\

LightRAG  
 & 8.80 & 0.00
 & 8.20 & 1.00
 & 1.40 & 0.10 \\

RAPTOR
 & 52.30 & 25.00
 & 38.80 & 12.00
 & 28.10 & 10.50 \\

HippoRAG
 & 58.67 & 45.27
 & \textbf{70.33} & \textbf{61.16}
 & 38.33 & 27.55 \\

QAFD-RAG  
 & \textbf{73.42} & \textbf{58.10}
 & 69.41 & 59.50
 & \textbf{47.99} & \textbf{33.50} \\

\bottomrule
\end{tabular}
\vspace{-0.25cm}
\end{table}








We compare against methods from different paradigms. \textbf{CHASE-SQL} \citep{pourreza2024chase} uses multi-path LLM reasoning (divide-and-conquer, execution-plan CoT, instance-aware few-shot) with a pairwise-selection agent for SQL generation and ranking. \textbf{DIN-SQL} \citep{pourreza2024din} adopts a decomposition framework with template matching. \textbf{Spider-Agent} \citep{lei2024spider} applies ReAct-style schema exploration and iterative SQL generation. \textbf{CodeS} \citep{CodeS} emphasizes code generation via prompt engineering. \textbf{DAIL-SQL} \citep{gao2023text} improves Text-to-SQL accuracy by selecting and arranging few-shot code examples using masked question–query similarity. All methods are using GPT-4o LLM calls.

\begin{wrapfigure}{R}{0.54\textwidth}
\begin{minipage}{0.52\textwidth}
\vspace{-0.25cm}
\centering
\captionof{table}{SQL execution accuracy on 135 SQLite and Snowflake test sets.}
\label{tab:sql_accuracy}
\rowcolors{1}{white}{gray!15}
\begin{tabular}{lcc}
\toprule
\textbf{Method} & \textbf{SQLite (\%)} & \textbf{Snowflake (\%)} \\
\midrule
CHASE-SQL & 11.85 & 5.92 \\
DIN-SQL & 2.74 & 0.18 \\
DAIL-SQL & 0.00 & 0.00 \\
CodeS & 0.00 & 0.00 \\
Spider-Agent & 21.50 & 16.30 \\
\hiderowcolors
\textbf{QAFD-RAG} & \multirow{2}{*}{\textbf{26.70}} & \multirow{2}{*}{\textbf{23.70}} \\
\textbf{+ SQL-Agent} & & \\
\showrowcolors
\bottomrule
\end{tabular}
\end{minipage}
\vspace{-0.25cm}
\end{wrapfigure}

We evaluate on the Spider 2.0 \citep{lei2024spider}, a comprehensive dataset with diverse schema architectures representative of real enterprise environments. The benchmark includes complex SQL queries requiring multi-table reasoning and advanced analytical operations. Our analysis focuses on the \textit{Spider 2.0 Local Test Set}, which contains 135 SQLite and Snowflake instances featuring long join chains and intricate query structures. Table~\ref{tab:sql_accuracy} reports execution accuracy on both test sets. Here, SQL-Agent is a variant of the Spider-Agent \citep{lei2024spider} SQL generator. Unlike Spider-Agent, which relies on a ReAct-style agent for exploration and reasoning, our method converts each database into a knowledge graph and applies QAFD to identify relevant tables and columns. SQL-Agent then uses tailored prompts and QAFD-RAG’s subgraphs for SQL generation. An example prompt is shown in Prompt~\ref{prompt:snowflake:system}.

\section{Conclusion, Limitations, and Future Work}
{
We presented QAFD-RAG, a training-free framework with query-aware edge reweighting for adaptive graph traversal. Our contributions include: (1) the first principled flow diffusion for graph-based RAG, (2) theoretical guarantees for subgraph recovery, and (3) consistent improvements across benchmarks while reducing LLM calls. QAFD-RAG relies on pre-trained embeddings, which may underperform in domain-specific settings. {Besides, the embedding-based flow diffusion may struggle with explicit logical negation because embeddings capture semantic relatedness rather than logical operators. While QAFD-RAG partially mitigates this through LLM-based keyword extraction (Prompt~\ref{prompt:keyword}) and response filtering (Prompt~\ref{prompt:query:answer}), this remains an important challenge.}

QAFD-RAG's modularity enables drop-in replacement of retrieval components in existing systems: it can replace static Leiden clustering in GraphRAG, enhance Personalized PageRank in HippoRAG with query-aware edge weighting, and extend LightRAG's single-hop retrieval to multi-hop flow diffusion—all without retraining while providing theoretical guarantees.  Future work includes learning edge weights from query–answer pairs and extending to temporal/multi-modal graphs. Future work should also explore contrastive embeddings or hybrid symbolic–neural approaches that encode logical distinctions during graph traversal.}

\section*{Acknowledgements}
We would like to thank Samsung SDS Research America for supporting this work. We are also grateful to the area chair and the anonymous reviewers for their constructive feedback and valuable suggestions, which helped improve the quality of this paper.

\bibliography{rag_ref_clean}
\bibliographystyle{iclr2026_conference}

\newpage
\appendix
\part*{Appendix}
\addcontentsline{toc}{part}{Appendix}
\startcontents[part]  
\printcontents[part]{}{1}{\section*{Contents}\vspace{-1em}}  

\newpage
\renewcommand{\thesection}{A\arabic{section}}
\vspace{-.2cm}
\section{Extended Related Work}\label{sec:app:related}

\subsection{Retrieval-Augmented Generation.}
LLMs remain prone to hallucinations and a lack of domain knowledge \citep{hallucination-survey,domain-knowledge-survey}. Text-based RAG reduces these issues by supplementing LLMs with unstructured external data \citep{lewis2020retrieval,guu2020retrieval}. These systems employ sparse or dense retrieval \citep{alon2022neuro,schick2023toolformer,jiang2023active,cheng2024lift,hofstatter2023fid,li2024llama2vec,zhang2024arl2}, but most treat text as flat segments, missing critical context and inter-document relationships \citep{edge2024local,guo2024lightrag}.

KG-RAG enhances interpretability and factuality by leveraging structured knowledge graphs \citep{yasunaga2021qagnn,gao2021kg-rag,li2024subgraphrag,he2025g}. These systems utilize curated \citep{wen2023mindmap,dehghan2024ewek} or optimized \citep{fang2024reano,panda2024holmes} graphs to retrieve entity and relational context, typically extracting local subgraphs relevant to a query \citep{bordes2015large,talmor2018web,gu2021beyond}. However, most KG-RAG methods focus on single-hop or shallow queries \citep{joshi2017triviaqa,yang2018hotpotqa,kwiatkowski2019natural,ho2020constructing} and struggle with scale and multi-step reasoning.

Among training-intensive approaches, QA-GNN \citep{yasunaga2021qagnn} combines pre-trained language models and knowledge graphs by using LM-based relevance scoring to select pertinent KG nodes, followed by joint graph neural reasoning for accurate and interpretable question answering. \citet{xiong2019improving} proposed knowledge-aware neural retrievers for incomplete KBs. Other works such as SubgraphRAG \citep{li2024subgraphrag} train end-to-end retrieval modules to extract relevant KG subgraphs for downstream LLM reasoning. Several studies have also integrated knowledge graph structure directly into transformers for enhanced QA \citep{hu2022oreo}, and KnowGPT \citep{zhang2023knowgpt} leverages KG-based prompting for large language models. {GNN-RAG~\citep{mavromatis2024gnn} trains GNNs to score answer candidates and retrieve shortest paths, while RoG~\citep{luo2023reasoning} uses LLM prompting to generate relation paths. Both methods incorporate query semantics but require training/finetuning, lack theoretical guarantees, and operate over static graph structures during retrieval.} However, these methods require substantial supervised data and retraining, in contrast to our training-free, query-aware flow diffusion framework with statistical guarantees.

Recent work has explored training-free KG-RAG methods, building text-associated graphs to support more complex and multi-hop queries \citep{edge2024local,guo2024lightrag}. For instance, GraphRAG \citep{edge2024local} applies community detection to cluster entities, while LightRAG \citep{guo2024lightrag} uses multi-stage retrieval and ego-network aggregation. PathRAG \citep{chen2025pathrag} further improves graph-based RAG by retrieving key relational paths rather than redundant neighborhood information, using flow-based pruning to identify reliable paths and strategic prompt organization to enhance LLM responses. However, all aforementioned methods still struggle to precisely align query intent with relevant regions of the graph, making it difficult to identify semantically coherent reasoning subgraphs. Furthermore, existing RAG approaches generally lack statistical or optimization guarantees, as well as complexity analysis, for their retrieval mechanisms.

\subsection{Graph Diffusion and Flow Methods}
Graph diffusion describes the process of spreading mass from one or more seed nodes to neighboring nodes along graph edges. The empirical and theoretical performance of local diffusion methods is typically evaluated in the context of local graph clustering. Local graph clustering was introduced by \citet{spielman2013local} using a random-walk algorithm, with \citet{ACL06} later employing personalized PageRank. Most works analyze these methods via output conductance \citep{ACL06,AP09,spielman2013local,ALM13,AGPT2016,PKDJ17,WFHM2017,fountoulakis2020p,LG20}. Statistical analysis appeared in \citet{HFM2021} for $\ell_1$-regularized PageRank \citep{FKSCM2017}, though attributed graphs remain unaddressed. 

Community/cluster detection methods that combine structure and node or edge attributes \citep{YML13,jia2017node,zhe2019community,sun2020network} benefit from this integration but require global processing, making them unsuitable for local clustering. Contextual random graph models have been employed for attributed community detection \citep{DSMM18,YS2021,BTB22,AFW22,tarzanagh2026fair}, node separability \citep{BFJ2021,FLYBJ22,BFJ23}, and analysis of graph convolutions and optimal classifiers \citep{wu2023an,wei2022understanding,baranwal2023optimality}. Related anomaly detection \citep{arias2008searching,arias2011detection,sharpnack2013near,qian2014efficient} and estimation \citep{chitra2021quantifying} methods focus on scalar data and global processing, contrasting with our local, attribute-aware approach.

\subsection{QAFD-RAG's Positioning within Graph-Based RAG Approaches}

To our knowledge, none of these works connect the graph structure to the query. Our work is the first to provide a principled framework that links a given query to the corresponding subgraph in a knowledge graph, and the first to develop and statistically analyze query-aware flow diffusion in general contextual random graph models with formal recovery guarantees. 

{Beyond these theoretical contributions, QAFD-RAG's modular design positions it as a complementary retrieval component for existing graph-based RAG systems, requiring no retraining while providing formal guarantees. Specifically:}

{\textbf{GraphRAG.} GraphRAG~\citep{edge2024local} relies on static community detection (Leiden algorithm) to precompute hierarchical clusters and generate community summaries for retrieval. QAFD-RAG offers an alternative through dynamic, query-time subgraph discovery tailored to each query's semantics, eliminating the computational overhead of community summary generation while adapting retrieval to query-specific needs.}

{\textbf{HippoRAG.} HippoRAG~\citep{jimenez2024hipporag} uses Personalized PageRank (PPR) for graph traversal from seed nodes. QAFD-RAG provides a principled alternative through query-aware edge reweighting (Equations (4)-(5)). While PPR provides graph-based signals, it lacks query-aware edge modulation—our method's key innovation that suppresses irrelevant paths while amplifying semantically aligned connections, backed by formal recovery guarantees (Theorem~\ref{thm:recovery}).}

{\textbf{LightRAG.} LightRAG~\citep{guo2024lightrag} employs dual-level keyword extraction with single-hop neighborhood aggregation. QAFD-RAG extends this paradigm by enabling multi-hop flow diffusion from extracted keywords as seed nodes, discovering reasoning paths that span multiple hops—precisely the capability that LightRAG's current single-hop approach lacks—while preserving its efficient indexing structure.}

{This positioning demonstrates QAFD-RAG's role as a foundational component that addresses key limitations in existing graph-based RAG approaches through principled, theoretically-grounded retrieval.}

\section{Extended Experiments}

\subsection{Question Answering (QA)}

\subsubsection{Additional Numerical Results for General Question Answering}
\label{app:add_results}
Table~\ref{tab:rag_all_in_one_part2} presents comprehensive results across five additional UltraDomain datasets—Mathematics, Mix, Music, Philosophy, and Physics—evaluating QAFD-RAG against GraphRAG, LightRAG, RAPTOR, and HippoRAG across all five GPT-4o-scored dimensions (Comprehensiveness, Diversity, Logicality, Relevance, and Coherence). 

QAFD-RAG achieves the best performance across all dimensions in four out of five datasets (Mathematics, Mix, Philosophy, and Physics), demonstrating consistent superiority with scores typically 3–7 points higher than the strongest baseline. For example, on Physics, QAFD-RAG achieves 89.51 Comprehensiveness compared to GraphRAG's 86.33, and 95.61 Relevance compared to RAPTOR's 94.46. On Music, QAFD-RAG leads in four dimensions (Comprehensiveness: 87.95, Diversity: 83.40, Logicality: 90.56, Coherence: 90.94), with GraphRAG marginally ahead only on Relevance (94.14 vs. 94.08). These results validate QAFD-RAG's robustness across diverse domains spanning technical (Mathematics, Physics), creative (Music), conceptual (Philosophy), and mixed content, consistently outperforming state-of-the-art graph-based RAG methods through query-aware flow diffusion and dynamic edge reweighting.

\begin{table*}[t]
\centering
\caption{Comparison of QAFD-RAG and baselines across five GPT-4o–scored dimensions (0–100).
Rows are grouped by dataset; columns are metrics. Each cell shows mean ($\pm$ std) over 5 independent evaluations. 
Best scores per dataset/metric are bolded. \textit{(continued)}.}
\label{tab:rag_all_in_one_part2}
\rowcolors{1}{white}{gray!15}
\resizebox{\textwidth}{!}{
\begin{tabular}{llrrrrr}
\toprule
\textbf{Dataset} & \textbf{Method} & \textbf{Comprehensive.} & \textbf{Diversity} & \textbf{Logicality} & \textbf{Relevance} & \textbf{Coherence} \\

\midrule
Mathematics & GraphRAG  & \score{84.30}{9.52} & \score{77.46}{10.68} & \score{88.91}{8.29} & \score{92.65}{10.34} & \score{89.04}{7.52} \\
            & LightRAG  & \score{80.93}{8.44} & \score{74.07}{10.40} & \score{86.30}{6.20} & \score{90.77}{8.26}  & \score{86.29}{4.56} \\
            & RAPTOR & \score{81.18}{9.75} & \score{73.73}{12.34} & \score{87.30}{6.46} & \score{91.88}{8.73} & \score{87.70}{4.57} \\
            & HippoRAG & \score{80.77}{5.31} & \score{73.05}{9.06} & \score{87.05}{4.66} & \score{92.34}{5.53} & \score{87.23}{3.68} \\
            & QAFD-RAG  & \best{\score{87.30}{4.94}} & \best{\score{82.56}{6.09}} & \best{\score{90.04}{5.15}} & \best{\score{93.36}{7.24}} & \best{\score{90.37}{3.16}} \\
\midrule
Mix & GraphRAG  & \score{82.76}{11.41} & \score{74.94}{11.68} & \score{88.45}{6.95} & \score{92.81}{8.65} & \score{88.90}{3.34} \\
    & LightRAG  & \score{78.72}{13.12} & \score{70.97}{13.39} & \score{85.72}{7.00} & \score{90.37}{9.36} & \score{86.26}{4.68} \\
    & RAPTOR & \score{81.97}{5.50} & \score{73.09}{7.75} & \score{87.90}{3.71} & \score{93.16}{3.47} & \score{87.81}{3.28} \\
    & HippoRAG & \score{80.15}{5.20} & \score{71.71}{6.12} & \score{86.55}{3.97} & \score{92.11}{3.56} & \score{86.83}{3.35} \\
    & QAFD-RAG  & \best{\score{87.15}{3.46}} & \best{\score{81.15}{4.86}} & \best{\score{90.70}{2.93}} & \best{\score{94.36}{4.50}} & \best{\score{90.36}{2.07}} \\
\midrule
Music & GraphRAG  & \score{85.32}{6.46} & \score{79.37}{7.92} & \score{90.02}{4.63} & \best{\score{94.14}{7.01}} & \score{89.80}{3.25} \\
      & LightRAG  & \score{81.14}{8.54} & \score{75.08}{10.13} & \score{86.64}{6.29} & \score{91.29}{8.41} & \score{87.34}{4.36} \\
      & RAPTOR & \score{81.44}{8.20} & \score{74.55}{11.26} & \score{87.81}{4.36} & \score{93.24}{3.91} & \score{88.14}{3.54} \\
      & HippoRAG & \score{80.95}{6.00} & \score{74.46}{8.33} & \score{87.22}{3.98} & \score{92.47}{4.76} & \score{87.53}{3.67} \\
      & QAFD-RAG  & \best{\score{87.95}{3.99}} & \best{\score{83.40}{4.47}} & \best{\score{90.56}{3.77}} & \score{94.08}{5.46} & \best{\score{90.94}{2.39}} \\
\midrule
Philosophy & GraphRAG  & \score{84.61}{8.41} & \score{78.36}{8.90} & \score{88.67}{6.90} & \score{93.53}{8.05} & \score{88.83}{6.59} \\
           & LightRAG  & \score{80.92}{8.88} & \score{74.36}{9.63} & \score{85.74}{6.57} & \score{90.85}{7.32} & \score{86.44}{4.61} \\
           & RAPTOR & \score{82.30}{5.97} & \score{75.54}{7.46} & \score{87.30}{4.60} & \score{92.83}{4.65} & \score{87.83}{3.45} \\
           & HippoRAG & \score{80.93}{5.46} & \score{74.12}{6.75} & \score{86.66}{4.52} & \score{92.03}{4.69} & \score{87.42}{3.38} \\
           & QAFD-RAG  & \best{\score{86.78}{4.11}} & \best{\score{81.91}{4.69}} & \best{\score{89.35}{4.25}} & \best{\score{93.63}{5.91}} & \best{\score{89.91}{2.70}} \\
\midrule

Physics & GraphRAG  & \score{86.33}{7.32} & \score{79.10}{7.84} & \score{90.29}{7.68} & \score{94.73}{7.71} & \score{89.99}{7.37} \\
        & LightRAG  & \score{84.45}{4.39} & \score{76.38}{6.19} & \score{89.13}{4.37} & \score{93.67}{5.39} & \score{88.65}{2.76} \\
        & RAPTOR & \score{84.18}{3.98} & \score{75.31}{6.27} & \score{89.32}{3.52} & \score{94.46}{2.81} & \score{89.22}{2.92} \\
        & HippoRAG & \score{82.81}{3.63} & \score{73.66}{5.81} & \score{88.66}{3.07} & \score{94.09}{2.43} & \score{88.69}{2.54} \\
        & QAFD-RAG  & \best{\score{89.51}{3.14}} & \best{\score{84.21}{3.89}} & \textbf{\score{91.77}{3.26}} & \best{\score{95.61}{2.74}} & \best{\score{91.67}{2.14}} \\
\bottomrule
\end{tabular}
}
\end{table*}




\subsubsection{Sensitivity Analysis for QAFD-RAG}

To assess the robustness of QAFD-RAG under different configurations, we conduct a sensitivity analysis on three key hyperparameters on the Mix dataset in our framework: the number of seed nodes used for QAFD, the initial mass parameter $\alpha$ in the QAFD process, and the choice of edge weight formulation that governs propagation dynamics. We define the tuning ranges as follows:

\begin{itemize}[nosep]
    \item[-] Number of seed nodes: $\{20, 30, 40, 50, 60\}$
    \item[-] $\alpha$ (mass initialization): $\{5, 10, 20, 50, 100\}$
    \item[-] Query-aware edge weight: three distinct formulations reflecting different query-to-node affinity strategies (as presented in Eqn~\eqref{eqn:weight_mean}, \eqref{eqn:weight_product}, and \eqref{eqn:weight_hybrid}).
\end{itemize}

When analyzing a single hyperparameter, the others are held constant at their default values: $\alpha=50$, the number of seed nodes is 20, and the default query-aware edge weight function is the hybrid formulation.

\begin{figure*}[thbp]
    \centering
    \includegraphics[width=1\textwidth]{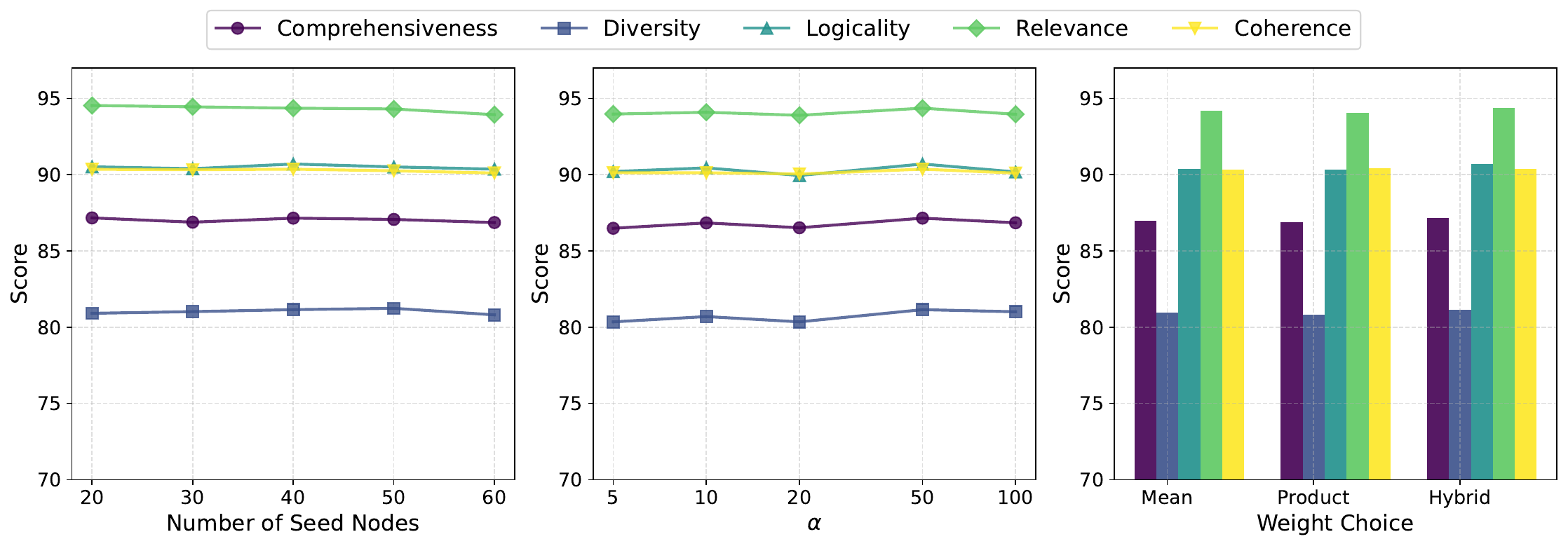}
    \caption{Sensitivity of QAFD-RAG to the number of seed nodes (left), initial mass $\alpha$ (middle), and edge weight choice (right), evaluated across five dimensions on the Mix dataset.} 
    \label{fig:sensitivity}
\end{figure*}

Figure~\ref{fig:sensitivity} shows that overall performance is stable across the tested ranges, confirming the robustness of QAFD-RAG. Among all metrics, only relevance exhibits a noticeable dependency on the number of seed nodes, slightly improving as fewer seed nodes are introduced, suggesting that 20 seed nodes are already sufficient for most evaluation dimensions. In contrast, $\alpha=50$ yields the best overall balance across metrics; both very small and very large values tend to slightly degrade performance, likely due to insufficient or overly diffused mass concentration during propagation. Surprisingly, the choice of edge weight function has a relatively minor effect—all three variants yield comparable results across metrics, with only slight differences in relevance and logicality.

These findings indicate that QAFD-RAG is not overly sensitive to hyperparameter settings and performs robustly across a wide range of configurations, requiring minimal tuning in practice.

\subsubsection{{Runtime and Efficiency of QAFD-RAG for General QA} }

\begin{table}[t]
\centering
\caption{{Total runtime (in seconds) for graph-based RAG on the full Mix dataset.}}
\label{tab:runtime}
\rowcolors{1}{white}{gray!15}
\begin{tabular}{lccc}
\toprule
\textbf{Method} & \textbf{QAFD-RAG} & \textbf{LightRAG} & \textbf{GraphRAG} \\
\midrule
\textbf{Total Time (s)} & 8948.81 & 9236.38 & 50968.59 \\
\bottomrule
\end{tabular}
\end{table}

{
Table~\ref{tab:runtime} shows that {QAFD-RAG} matches or outperforms {LightRAG} in wall-clock time and is substantially faster than {GraphRAG}.
The efficiency gains primarily come from indexing: instead of performing global or hierarchical summarization, QAFD-RAG processes only the subgraph retrieved at query time.
The diffusion step adds negligible overhead because edge weights are computed on demand and traversal is nearly linear.
Overall, these properties provide a favorable balance of accuracy and efficiency.
}

\subsubsection{{Embedding Model Sensitivity Analysis}}
\label{sec:embedding_sensitivity}
{
To assess QAFD-RAG's robustness to embedding quality, we evaluate the framework across five diverse embedding models on the Mix dataset from UltraDomain. We compare: (i) OpenAI's \texttt{text-embedding-3-small} (1536-dim) and \texttt{text-embedding-3-large} (3072-dim), representing cloud-based proprietary embeddings; (ii) Jina AI's \texttt{jina-embeddings-v3}~\citep{sturua2024jina} (1024-dim), an open-source local model; (iii) NVIDIA's \texttt{nv-embed-v2}~\citep{lee2024nv} (4096-dim), optimized for retrieval tasks; and (iv) \texttt{GritLM-7B}~\citep{muennighoff2024generative} (4096-dim), a unified generative-embedding model. All other hyperparameters remain fixed across experiments.

Table~\ref{tab:embedding_comparison} presents results across five evaluation dimensions. QAFD-RAG demonstrates consistent performance across all embedding models, with Comprehensiveness scores ranging from 85.82 to 88.12 and Relevance scores from 91.98 to 95.12. Notably, \texttt{text-embedding-3-large} achieves the best performance on Comprehensiveness, Diversity, and Relevance, while \texttt{nv-embed-v2} excels on Logicality and Coherence. However, the overlapping standard deviations across models indicate that differences are modest. 

Importantly, \texttt{jina-v3}, despite being the lowest-dimensional model (1024-dim) and fully local, achieves competitive results—only 2-3 points below the best performing models on most metrics. This demonstrates that QAFD-RAG's query-aware flow diffusion mechanism is robust to embedding variations and can operate effectively with resource-constrained or privacy-preserving local embeddings. The consistent performance across diverse architectures (cloud vs. local, 1024-dim to 4096-dim) validates that our theoretical framework successfully leverages semantic similarity regardless of the specific embedding space.}

\begin{table}[t]
\centering
\caption{{Embedding sensitivity analysis: QAFD-RAG performance across five embedding models on the Mix dataset. All numbers are mean (± stdev) over 5 runs.}}
\label{tab:embedding_comparison}
\rowcolors{1}{white}{gray!15}
\resizebox{1\textwidth}{!}{
\begin{tabular}{lccccc}
\toprule
\textbf{{Embedding Model}} &
\textbf{{Comprehensive.}} &
\textbf{{Diversity}} &
\textbf{{Logicality}} &
\textbf{{Relevance}} &
\textbf{{Coherence}} \\
\midrule
{text-embedding-3-small (1536d)} &
{87.15 (±3.46)} &
{81.15 (±4.86)} &
{90.70 (±2.93)} &
{94.36 (±4.50)} &
{90.36 (±2.07)} \\
{text-embedding-3-large (3072d)} &
{\textbf{88.12 (±3.32)}} &
{\textbf{83.08 (±4.71)}} &
{91.28 (±2.85)} &
{\textbf{95.12 (±4.35)}} &
{90.85 (±2.01)} \\
{jina-v3 (1024d)} &
{85.82 (±3.58)} &
{80.18 (±5.02)} &
{89.18 (±3.05)} &
{91.98 (±4.68)} &
{87.94 (±2.15)} \\
{NVIDIA-nv-embed-v2 (4096d)} &
{87.85 (±3.41)} &
{82.45 (±4.78)} &
{\textbf{91.58 (±2.81)}} &
{94.78 (±4.42)} &
{\textbf{91.45 (±1.98)}} \\
{GritLM-7B (4096d)} &
{87.32 (±3.51)} &
{81.64 (±4.91)} &
{89.82 (±2.97)} & 
{93.52 (±4.55)} & 
{90.28 (±2.10)} \\ 
\bottomrule
\end{tabular}
}
\end{table}

\subsubsection{{Long-Document Summarization: SQuALITY Evaluation Details}}
\label{app:squality}

{
We evaluate on all 250 questions from the \textsc{SQuALITY} training dataset \citep{wang2022squality}, which contains narrative passages paired with comprehension questions and multiple human-written reference answers per question.
We compute BLEU-1 and BLEU-2 using unigram and bigram modified precisions with geometric averaging. BLEU-$n$ is defined as
\[
\text{BLEU-}n := \exp\!\left( \sum_{i=1}^{n} w_i \log p_i \right),
\]
where $p_i$ is the $i$-gram precision and $(w_1,\ldots,w_n)$ are the weights. For BLEU-1 we use $(1,0,0,0)$; for BLEU-2 we use $(0.5,0.5,0,0)$. ROUGE-1 F1 and ROUGE-2 F1 report F1 overlap of unigrams and bigrams using Porter stemming. 

METEOR combines unigram precision (P) and recall (R) with a fragmentation penalty:
$$\text{METEOR} := F_{\text{mean}} \cdot \left(1 - \gamma \cdot \left(\frac{\text{chunks}}{\text{matches}}\right)^\beta\right),$$
where harmonic mean $F_{\text{mean}} := P \cdot R/(\alpha P + (1-\alpha)R)$, chunks denotes contiguous matched segments and matches denotes total matched unigrams. We use Natural Language Toolkit (NLTK) defaults: $\alpha=0.9$, $\beta=3.0$, and $\gamma=0.5$. All metrics use lowercased, word-tokenized text with BLEU/METEOR computed against all references.

A knowledge graph is constructed once from all narrative documents and reused across queries. All experiments use hybrid retrieval mode with diffusion parameter $\alpha=10$, a 40-node source budget, and minimum flow threshold $0.01$. Answers are produced with GPT-4o-mini using greedy decoding and full caching. 
}







\subsection{Text-to-SQL}

\subsubsection{Hyperparameter Settings}
We report the full set of QAFD-RAG hyperparameters used in our implementation. The most critical parameters are the \emph{source mass} and the \emph{target capacity}. The source mass is adaptively set based on the degree of the source node, following local clustering strategies in \cite{fountoulakis2020p}. Specifically, we define $
m_s = \alpha \sum_{j \in P} T_j$, where $P$ is the set of nodes on the shortest path from source to target. The multiplier $\alpha = 10$ controls the initial injected mass; larger values ensure sufficient propagation across the graph. If no path exists between source and target, we apply a fallback rule: $m_s = \sum_{i \in \mathcal{V}} T_i$. Each node, including the target, is assigned a uniform sink capacity $T_i = 1$ and initialized with zero mass. During diffusion, nodes absorb incoming flow up to their capacity, with excess redistributed until convergence.  

Other hyperparameters regulate the iterative behavior of the algorithm, to which it is generally robust. The convergence threshold in Algorithm~\ref{alg:single_seed_query_diffusion} is set to $\epsilon = 0.05$, terminating when total excess mass falls below this threshold. To safeguard against non-convergence, we cap iterations at $N_{\max} = 10^6$ and check convergence every 100 steps. For stability during edge reweighting, a constant $10^{-10}$ is added to all computations.

\subsubsection{Examples of Schema Identified by QAFD-RAG for Spider2's Queries}\label{sec:graphs}

\begin{figure}[t]
    \centering
    \includegraphics[width=0.35\textwidth]{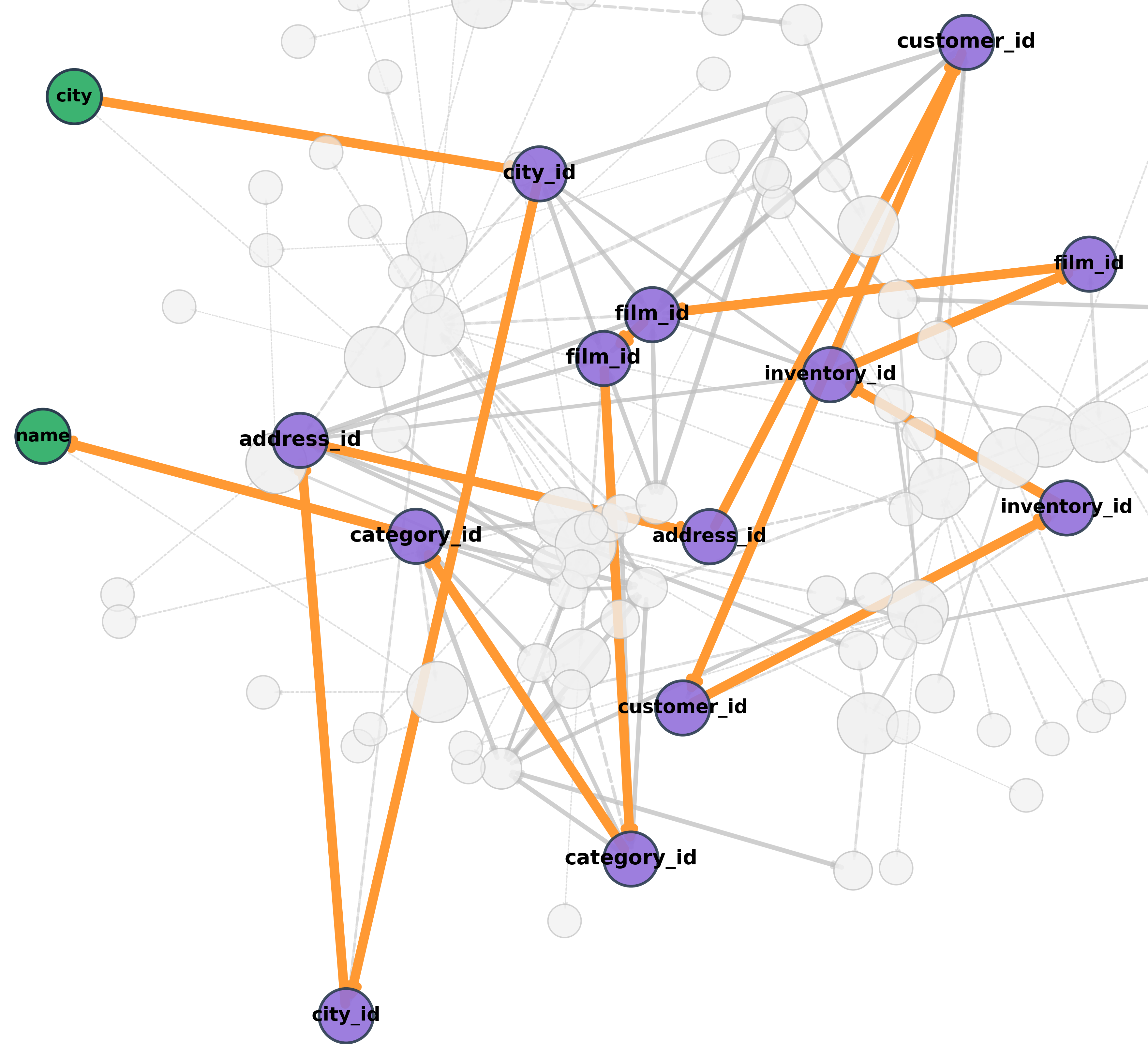}
    \hfill
    \includegraphics[width=0.64\textwidth]{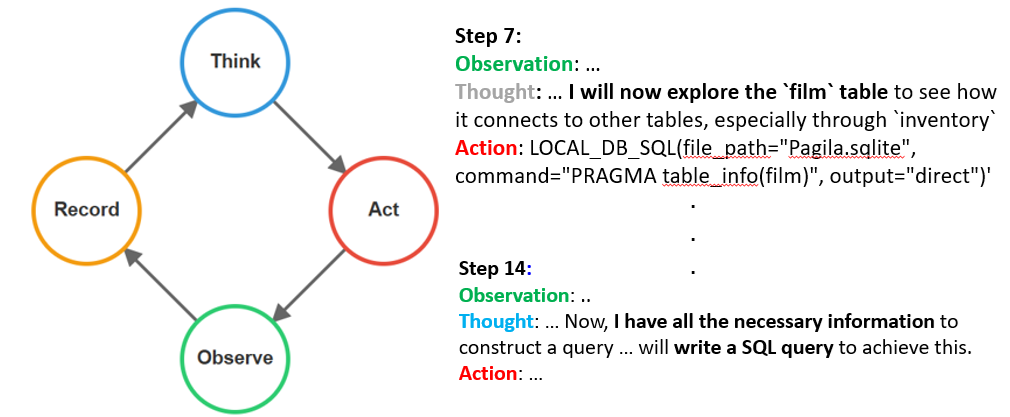}
\caption{
    \textbf{Left:} \textit{Schema path identified by QAFD-RAG for SQL generation on query \texttt{local039}~\citep{lei2024spider}.}
    \textbf{Right:} \textit{Stepwise schema exploration for the same query using Spider-Agent (ReAct)~\citep{lei2024spider}.}
    QAFD-RAG discovers the full reasoning path in one pass, sharply reducing LLM calls. In contrast, Spider-Agent incrementally explores the schema, requiring 14 calls and higher latency. This comparison highlights QAFD-RAG’s efficiency and accuracy on complex multi-hop queries.
}
    \label{fig:QAFD-RAG:pagila:sidebyside}
\end{figure}

The core efficiency of QAFD-RAG stems from its ability to globally reason over the schema graph through flow diffusion, directly identifying the most relevant multi-hop paths between schema elements implicated by the query. As illustrated in Figure~\ref{fig:QAFD-RAG:pagila:sidebyside}, QAFD-RAG, unlike Spider-Agent and other ReAct-style methods—which must sequentially probe the schema, issuing a separate LLM call for each reasoning step, table, or join—performs a single, structured optimization to simultaneously discover all semantically relevant paths. This eliminates redundant exploration and dramatically reduces both the number of LLM calls and overall inference time, especially for queries requiring long or complex join paths. Consequently, QAFD-RAG delivers scalable inference for large enterprise databases while maintaining or even improving accuracy.

\begin{figure}[t]
\begin{center}
\includegraphics[width=12cm]{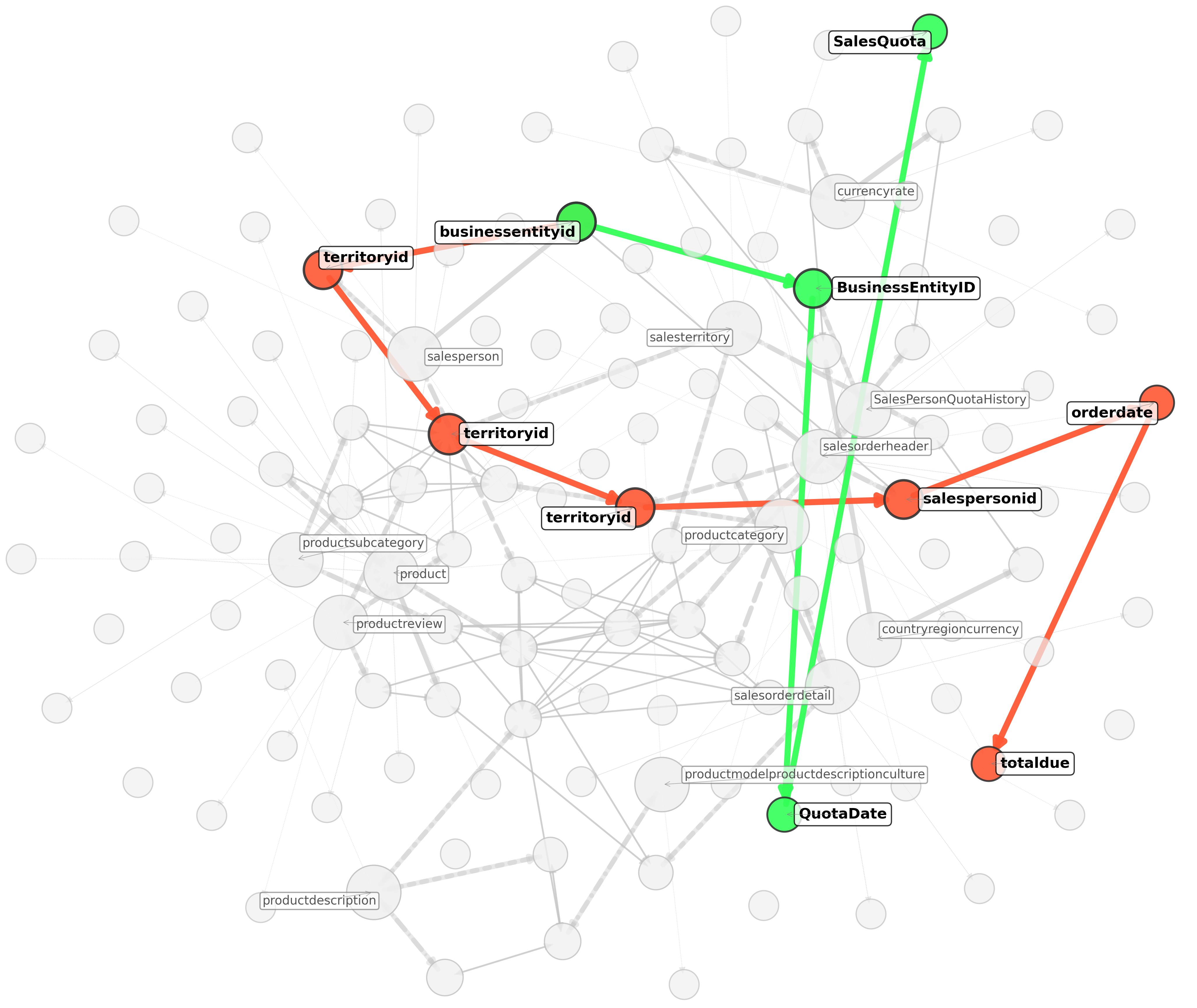}
\end{center}
\caption{QAFD-RAG retrieved schema for Query-Local141 in the AdventureWorks database KG \citep{lei2024spider}.}
\label{fig:A1:schema_optimal_paths_local141_gpt-4o}
\end{figure}

We provide an additional example to illustrate the advantages of our framework in uncovering meaningful schema paths, using Query-Local141 from the AdventureWorks database \citep{lei2024spider}.
\begin{example}[\textbf{Query-Local141}]
 How did each salesperson's annual total sales compare to their annual sales quota? Provide the difference between their total sales and the quota for each year, organized by salesperson and year   
\end{example}
Figure~\ref{fig:A1:schema_optimal_paths_local141_gpt-4o} visualizes the relevant portion of the complex database schema as a graph, highlighting two distinct schema path decompositions: one corresponding to the sales quota (green path) and the other to the total sales (orange path). These paths not only capture the necessary schema linking but also reveal the underlying reasoning required to answer the query, as seen from the traversal across key nodes such as \texttt{SalesQuota}, \texttt{totaldue}, \texttt{orderdate}, and identifiers such as  \texttt{BusinessEntityID} and \texttt{salespersonid}.

While Spider-Agent fails on this complex multi-hop reasoning query, QAFD-RAG effectively leverages the discovered paths to solve the problem, demonstrating the power of path-based schema linking and reasoning for compositional SQL logic.

\subsubsection{{LLM Efficiency and Schema Linking Performance}}

\begin{table}[t]
\centering
\caption{{LLM call efficiency across SQLite and Snowflake. Lower is better.}}
\label{tab:llm_efficiency}  
\rowcolors{1}{white}{gray!15}
{
\begin{tabular}{lcc}
\toprule
\textbf{Method} & \textbf{SQLite LLM Calls} & \textbf{Snowflake LLM Calls} \\
\midrule
Spider-Agent & 724 & 1482 \\
\textbf{QAFD-RAG + SQL-Agent} & \textbf{493} & \textbf{674} \\
\bottomrule
\end{tabular}
}
\end{table}

\begin{table}[t]
\centering
\caption{{Schema linking performance (precision, recall, F1). Higher is better.}}
\label{tab:schema_linking}  
\rowcolors{1}{white}{gray!15}
{
\begin{tabular}{lcccccc}
\toprule
\textbf{Method} 
& \multicolumn{3}{c}{\textbf{SQLite}} 
& \multicolumn{3}{c}{\textbf{Snowflake}} \\
\cmidrule(lr){2-4} \cmidrule(lr){5-7}
& \textbf{Prec} & \textbf{Rec} & \textbf{F1} 
& \textbf{Prec} & \textbf{Rec} & \textbf{F1} \\
\midrule
Spider-Agent & 0.81 & 0.75 & 0.78 & 0.35 & 0.35 & 0.35 \\
\textbf{QAFD-RAG} & \textbf{0.82} & \textbf{0.76} & \textbf{0.79} & \textbf{0.60} & \textbf{0.59} & \textbf{0.60} \\
\bottomrule
\end{tabular}
}
\end{table}

{
Table~\ref{tab:llm_efficiency} demonstrates that QAFD-RAG + SQL-Agent significantly reduces LLM call overhead, using 31.9\% fewer LLM calls on SQLite and 54.5\% fewer calls on Snowflake compared to the standard Spider-Agent approach. Overall, QAFD-RAG + SQL-Agent consistently outperforms baselines across both environments.
}

{
Finally, Table~\ref{tab:schema_linking} presents detailed schema linking performance measured by precision, recall, and F1-score. It demonstrates the core advantage of QAFD-RAG in identifying relevant schema elements (tables and columns) for the given queries, explaining both its accuracy improvements and efficiency gains over Spider-Agent.
}

\subsubsection{{Embedding Model Sensitivity for Text-to-SQL}}
\label{sec:embedding_sensitivity_text2sql}
{
To evaluate QAFD-RAG's robustness to embedding quality in the Text-to-SQL domain, we assess schema linking performance across five diverse embedding models on the Local category (SQLite databases). We compare: (i) OpenAI's \texttt{text-embedding-3-small} (1536-dim) and \texttt{text-embedding-3-large} (3072-dim), representing cloud-based proprietary embeddings; (ii) Jina AI's \texttt{jina-embeddings-v3}~\citep{sturua2024jina} (1024-dim), an open-source local model; (iii) NVIDIA's \texttt{nv-embed-v2}~\citep{lee2024nv} (4096-dim), optimized for retrieval tasks; and (iv) \texttt{GritLM-7B}~\citep{muennighoff2024generative} (4096-dim), a unified generative-embedding model. All other hyperparameters remain fixed across experiments.

Table~\ref{tab:embedding_text2sql_local} presents schema linking results. QAFD-RAG demonstrates consistent performance across all embedding models, with precision scores ranging from 0.80 to 0.83, recall from 0.76 to 0.81, and F1 scores from 0.78 to 0.82. Notably, \texttt{nv-embed-v2} achieves the best performance across all metrics (0.83/0.81/0.82), which is specifically optimized for retrieval tasks, followed closely by \texttt{text-embedding-3-large} (0.82/0.77/0.79) and \texttt{GritLM} (0.82/0.78/0.80).

Importantly, \texttt{jina-v3}, despite being the lowest-dimensional model (1024-dim) and fully local, achieves competitive results—only 0.04 points below the best model in F1 score. This demonstrates that QAFD-RAG's query-aware flow diffusion mechanism is robust to embedding variations and can operate effectively with resource-constrained or privacy-preserving local embeddings. The consistent performance across diverse architectures (cloud vs. local, 1024-dim to 4096-dim, ranging from 0.78 to 0.82 F1) validates that our framework successfully leverages semantic similarity for schema linking regardless of the specific embedding space, making it practical for deployment in varied infrastructure environments.
}

\begin{table}[t]
\centering

\caption{{Embedding sensitivity analysis for Text-to-SQL (Local category): QAFD-RAG performance across five embedding models.}}
\label{tab:embedding_text2sql_local}
\rowcolors{1}{white}{gray!15}
{
\begin{tabular}{lccc}
\toprule
\textbf{Embedding Model} &
\textbf{Precision} &
\textbf{Recall} &
\textbf{F1-Score} \\
\midrule
text-embedding-3-small (1536d) &
0.82 &
0.76 &
0.79 \\
text-embedding-3-large (3072d) &
0.82 &
0.77 &
0.79 \\
jina-v3 (1024d) &
0.80 &
0.76 &
0.78 \\
NVIDIA-nv-embed-v2 (4096d) &
\textbf{0.83} &
\textbf{0.81} &
\textbf{0.82} \\
GritLM-7B (4096d) &
0.82 &
0.78 &
0.80 \\ 
\bottomrule
\end{tabular}
}
\end{table}

\section{Proof of Main Results}\label{sec:app:proofs}

\subsection{Proof of Lemma \ref{lem:locality}}

\begin{proof}

The proof follows from the structure of the push-relabel algorithm and properties of the dual formulation.

Algorithm~\ref{alg:single_seed_query_diffusion} only increases $x_u$ when node $u$ has excess mass $m_u > T_u$. By the complementary slackness conditions of the dual problem \eqref{eqn:subquery_dual_optimization}, if $x_u^{(k)*} = 0$ at the optimum, then the corresponding constraint is inactive, meaning $m_u \leq T_u$ at optimality. Since the algorithm maintains the invariant that mass can only flow from nodes with excess to their neighbors, and the algorithm starts with $\mathbf{x}^0 = \mathbf{0}$, any node $u$ with $x_u^{(k)*} = 0$ will never have excess mass during the algorithm's execution. Therefore, $x_u^t = 0$ for all iterations $t$, proving $\operatorname{supp}(\mathbf{x}^t) \subseteq \operatorname{supp}(\mathbf{x}^{(k)*})$.

The source mass $\boldsymbol{\Delta}^{(k)}$ determines the total amount of "flow" injected into the system. Each unit of flow that reaches a node $u$ contributes at least a minimum amount to $x_u$ (bounded by the algorithm's push operations). Since mass is conserved and each node in the support must receive some positive flow to have $x_u > 0$, the number of nodes that can be in the support is upper bounded by the total mass $\|\boldsymbol{\Delta}^{(k)}\|_1$. Formally, if $|\operatorname{supp}(\mathbf{x}^{(k)*})| > \|\boldsymbol{\Delta}^{(k)}\|_1$, then the average flow per supported node would be less than 1, but the discrete nature of the push operations and capacity constraints ensure each supported node receives at least a minimum quantum of flow, leading to a contradiction.

\end{proof}

\subsection{Proof of Theorem~\ref{lem:complexity}}
\begin{proof}
The proof follows the framework of coordinate descent analysis for flow diffusion problems \citep{fountoulakis2020p}, adapted to handle query-aware edge weights. We analyze the expected decrease in the objective function at each iteration.

For simplicity, let $F(\mathbf{x}) = \frac{1}{2}\mathbf{x}^T\mathbf{L}(q_k)\mathbf{x} + \mathbf{x}^T(\mathbf{T} - \boldsymbol{\Delta}^{(k)})$ be the dual objective from \eqref{eqn:subquery_dual_optimization}, where $\mathbf{L}(q_k) = \mathbf{B}\bar{\mathbf{W}}(q_k)\mathbf{B}^\top$ is the query-aware weighted Laplacian. At iteration $t$, Algorithm~\ref{alg:single_seed_query_diffusion} selects a node $u$ with excess mass $m_u > T_u$ uniformly at random and performs a push operation. The mass vector maintains the relationship $\mathbf{m} = \boldsymbol{\Delta}^{(k)} - \mathbf{L}(q_k)\mathbf{x}$, so the gradient is $\nabla F(\mathbf{x}) = \mathbf{L}(q_k)\mathbf{x} + (\mathbf{T} - \boldsymbol{\Delta}^{(k)}) = \mathbf{T} - \mathbf{m}$. When node $u$ is selected, the excess mass is $\text{excess} = m_u - T_u > 0$, which corresponds to $\frac{\partial F}{\partial x_u} = T_u - m_u < 0$. The algorithm increases $x_u$ by $\frac{\text{excess}}{w_u}$ where $w_u = \sum_{v \sim u} \bar{w}(q_k, u, v)$, and redistributes the excess to neighbors.

The key insight is that this corresponds to a coordinate descent step. The expected decrease in the objective function is
\begin{align}
\mathbb{E}[F(\mathbf{x}^{t+1}) - F(\mathbf{x}^t)] &= -\mathbb{E}\left[\frac{(\text{excess})^2}{2w_u}\right] \cdot \mathbb{P}(\text{node } u \text{ is selected})
\end{align}
Since nodes are selected uniformly from the excess set, and by Lemma~\ref{lem:locality}, all iterates have support contained in $\operatorname{supp}(\mathbf{x}^{(k)*})$, we have 
\begin{align}
\mathbb{E}[F(\mathbf{x}^{t+1}) - F(\mathbf{x}^t)] &\leq -\frac{1}{|\operatorname{supp}(\mathbf{x}^{(k)*})|} \cdot \frac{\eta^{(k)}}{2\gamma^{(k)}} \cdot \|\mathbf{T} - \mathbf{m}^t\|_1^2
\end{align}
where we use $\eta^{(k)} \leq \bar{w}(q_k, u, v)$ for edges in the optimal support (minimum weight); $\sum_{v \sim u} \bar{w}(q_k, u, v) \leq \gamma^{(k)}$ for nodes in the optimal support (maximum weighted degree); and The excess mass at each node is bounded by the gradient components.

The total excess mass is $\|\mathbf{T} - \mathbf{m}^t\|_1$, and by the optimality conditions, this relates to the suboptimality as
\begin{align}
F(\mathbf{x}^t) - F(\mathbf{x}^{(k)*}) \leq C \cdot \|\mathbf{T} - \mathbf{m}^t\|_1
\end{align}
for some constant $C$ depending on the problem parameters.
Combining these bounds and using the fact that $|\operatorname{supp}(\mathbf{x}^{(k)*})| \leq \|\boldsymbol{\Delta}^{(k)}\|_1$ from Lemma~\ref{lem:locality}, we get
\begin{align}
\mathbb{E}[F(\mathbf{x}^{t+1}) - F(\mathbf{x}^{(k)*})] &\leq \left(1 - \frac{\eta^{(k)}}{C \cdot \|\boldsymbol{\Delta}^{(k)}\|_1 \cdot \gamma^{(k)}}\right) \mathbb{E}[F(\mathbf{x}^t) - F(\mathbf{x}^{(k)*})]
\end{align}
This gives exponential convergence with rate $\frac{\eta^{(k)}}{\|\boldsymbol{\Delta}^{(k)}\|_1 \cdot \gamma^{(k)}}$. To achieve $\mathbb{E}[F(\mathbf{x}^{\tau^{(k)}})] - F(\mathbf{x}^{(k)*}) \leq \xi$, we need
\begin{align}
\tau^{(k)} = O\left(\|\boldsymbol{\Delta}^{(k)}\|_1 \frac{\gamma^{(k)}}{\eta^{(k)}} \log \frac{1}{\xi}\right).
\end{align}
The query-aware nature enters through the weights $\bar{w}(q_k, u, v)$ in the definitions of $\gamma^{(k)}$ and $\eta^{(k)}$, meaning the convergence rate adapts to the semantic structure induced by the query, while maintaining the same algorithmic guarantees as classical flow diffusion.
\end{proof}

\subsection{Proof of Lemma~\ref{thm:edge-separation}}

\begin{proof}
Note that for any nodes $u, v \in \mathcal{V}$ and $k \in [K]$, we have
\begin{subequations}\label{eqn:quad}
\begin{align}
    \|\mathbf{e}_u - \mathbf{e}_{q_k}\|^2 &= \|\boldsymbol{\mu}_u - \boldsymbol{\mu}_{q_k}\|^2 + \|\mathbf{z}_u - \mathbf{z}_{q_k}\|^2 + 2\langle \boldsymbol{\mu}_u - \boldsymbol{\mu}_{q_k}, \mathbf{z}_u - \mathbf{z}_{q_k}\rangle, \label{eq:decomp1} \\
    \|\mathbf{e}_u - \mathbf{e}_v\|^2 &= \|\boldsymbol{\mu}_u - \boldsymbol{\mu}_v\|^2 + \|\mathbf{z}_u - \mathbf{z}_v\|^2 + 2\langle \boldsymbol{\mu}_u - \boldsymbol{\mu}_v, \mathbf{z}_u - \mathbf{z}_v\rangle. \label{eq:decomp2}
\end{align}    
\end{subequations}

To analyze the concentration of $\|\mathbf{e}_u - \mathbf{e}_{q_k}\|^2$, note that
$
\|\mathbf{z}_u - \mathbf{z}_{q_k}\|^2 = \sum_{\ell=1}^d (z_{u\ell} - z_{q_k\ell})^2.
$
Each term $(z_{u\ell} - z_{q_k\ell})^2 - \mathbb{E}[(z_{u\ell} - z_{q_k\ell})^2]$ is sub-exponential with parameter at most $C\sigma_\ell^2$ for some absolute constant $C$ (see, e.g., \cite[Theorem 2.7.7]{vershynin2018high}). Applying Bernstein's inequality for sums of independent sub-exponential random variables and setting $t = C_1 \hat{\sigma}^2 \log |\mathcal{V}|$, we obtain
\begin{subequations}
\begin{equation}
\|\mathbf{z}_u - \mathbf{z}_{q_k}\|^2 \leq 2 \sum_{\ell=1}^d \sigma_\ell^2 + C_1 \hat{\sigma}^2 \log |\mathcal{V}|
\end{equation}
with probability at least $1 - 2|\mathcal{V}|^{-c'}$, where $\hat{\sigma} = \max_{1 \leq \ell \leq d} \sigma_\ell$ and $c'>0$ is a constant.

For the cross term, observe that
\[
\langle \boldsymbol{\mu}_u - \boldsymbol{\mu}_{q_k},\, \mathbf{z}_u - \mathbf{z}_{q_k} \rangle = \sum_{\ell=1}^d (\mu_{u\ell} - \mu_{q_k\ell})(z_{u\ell} - z_{q_k\ell})
\]
is a sum of independent, mean-zero sub-Gaussian random variables. By standard sub-Gaussian tail bounds (e.g., Hoeffding's inequality), for any $C_2 > 0$,
\[
\mathbb{P}\left(
\left| \langle \boldsymbol{\mu}_u - \boldsymbol{\mu}_{q_k},\, \mathbf{z}_u - \mathbf{z}_{q_k} \rangle \right|
> C_2 \hat{\sigma} \sqrt{\log |\mathcal{V}|} \|\boldsymbol{\mu}_u - \boldsymbol{\mu}_{q_k}\|
\right) \leq 2 |\mathcal{V}|^{-c''}
\]
for some constant $c''>0$.
Hence,  
\begin{equation}
\left| \langle \boldsymbol{\mu}_u - \boldsymbol{\mu}_{q_k},\, \mathbf{z}_u - \mathbf{z}_{q_k} \rangle \right|\leq C_2\hat{\sigma}\sqrt{\log |\mathcal{V}|} \|\boldsymbol{\mu}_u - \boldsymbol{\mu}_{q_k}\|
\end{equation}    
with probability at least $ 1-2|\mathcal{V}|^{-c''}$.
\end{subequations}

Therefore, with high probability,
\begin{align*}
    \|\mathbf{e}_u - \mathbf{e}_{q_k}\|^2
    &\leq \|\boldsymbol{\mu}_u - \boldsymbol{\mu}_{q_k}\|^2
    + 2 \sum_{\ell=1}^d \sigma_\ell^2
    + C_1 \hat{\sigma}^2 \log |\mathcal{V}| \\
    &\quad + 2 C_2 \hat{\sigma} \sqrt{\log |\mathcal{V}|} \|\boldsymbol{\mu}_u - \boldsymbol{\mu}_{q_k}\|.
\end{align*}

Following a similar argument, we derive bounds for $\|\mathbf{z}_u - \mathbf{z}_v\|^2$ and $\langle \boldsymbol{\mu}_u - \boldsymbol{\mu}_v, \mathbf{z}_u - \mathbf{z}_v\rangle$. Thus, for $u, v \in \mathcal{R}_k$, since $\boldsymbol{\mu}_u = \boldsymbol{\mu}_v = \boldsymbol{\mu}_{q_k}$, the decomposition in \eqref{eqn:quad} simplifies, and we obtain, with probability at least $1 - O(|\mathcal{V}|^{-2})$
\begin{align*}
    \bar{w}(q_k, u, v) &\geq w(u, v)\exp\left(- (\gamma_1 + \gamma_2 + \gamma_3) \cdot O(\hat{\sigma}^2 d)\right) \\
    &= w(u, v)\exp(-o(1)) = w(u, v)(1-o(1)).
\end{align*}
This proves Item (i) in the lemma. 

On the other hand, if $u \in \mathcal{R}_k$ and $v \in \mathcal{V} \setminus \mathcal{R}_k$, we have $\min (\|\boldsymbol{\mu}_u - \boldsymbol{\mu}_{v}\|, \|\boldsymbol{\mu}_v - \boldsymbol{\mu}_{q_k}\|) \geq \hat{\mu}$ for some $\hat{\mu}$. Thus, with high probability 
\begin{align*}
    \|\mathbf{e}_v - \mathbf{e}_{q_k}\|^2 &\geq \hat{\mu}^2 - 2C_2\hat{\sigma}\sqrt{\log |\mathcal{V}|}\hat{\mu} - O(\hat{\sigma}^2 d) \\
    &\geq \hat{\mu}^2(1 - o(1)).
\end{align*}
From Assumption~\ref{assump:snr}, we get 
\begin{equation*}
    \bar{w}(q_k, u, v) \leq w(u, v)\exp(-\gamma_2 \hat{\mu}^2(1 - o(1))) = w(u, v)\exp(-\omega(\log |\mathcal{V}|)).
\end{equation*}
Since the results hold uniformly over all pairs of nodes $(u,v)$, applying a union bound over all $O(|\mathcal{V}|^2)$ edges completes the proof.
\end{proof}

\subsection{Proof of Theorem~\ref{thm:recovery}}
\begin{proof}
Consider Problem~\eqref{eqn:dual:prob}. Under Assumption~\ref{assump:snr}, the induced subgraph on $\mathcal{R}_k$ is connected. (Otherwise, the alternative probabilistic condition on $\rho_1$ guarantees expansion—and thus connectivity—with high probability by Chernoff bound arguments; see, e.g., \cite[Lemma C.1]{yang2023weighted}.)

Order the nodes in $\mathcal{R}_k$ as $v_1, v_2, \ldots, v_{r_k}$ such that $x_{v_1}^{(k)*} \geq x_{v_2}^{(k)*} \geq \cdots \geq x_{v_{r_k}}^{(k)*}$, and suppose for contradiction that $x_{v_{r_k}}^{(k)*} = 0$ is the minimum.
As the subgraph on $\mathcal{R}_k$ is connected, there is a subgraph $ (v_1 = u_0,  \ldots, u_m = v_{r_k})$ with $m \leq r_k - 1$, where each $(u_\ell, u_{\ell+1})$ is an edge within $\mathcal{R}_k$. By mass conservation at the optimum, for every edge $(u,v)$ with  $x_u^{(k)*} > x_v^{(k)*}$, the optimal solution must satisfy
\[
    \bar{w}(q_k, u, v) (x_u^{(k)*} - x_v^{(k)*}) \leq (1+\beta)\sum_{u \in \mathcal{R}_k} T_u.
\]
Therefore, along each edge of the subgraph
\[
    x_{u_\ell}^{(k)*} - x_{u_{\ell+1}}^{(k)*} \leq \frac{(1+\beta)}{\bar{w}(q_k, u_\ell, u_{\ell+1})} \sum_{u \in \mathcal{R}_k} T_u
\]
for $0 \leq \ell \leq m-1$. 

By the edge separation property in Lemma~\ref{thm:edge-separation}, $\bar{w}(q_k, u_\ell, u_{\ell+1}) \geq w(u_\ell, u_{\ell+1})(1-o(1))$ for intra-$\mathcal{R}_k$ edge. Summing the above along the path, we have
\begin{align}
\nonumber
    x_{v_1}^{(k)*} = \sum_{\ell=0}^{m-1} (x_{u_\ell}^{(k)*} - x_{u_{\ell+1}}^{(k)*}) + x_{v_{r_k}}^{(k)*} &\leq \sum_{\ell=0}^{m-1} \frac{(1+\beta)\sum_{u \in \mathcal{R}_k} T_u}{1 - o(1)}\\
    &\leq \frac{r_k(1+\beta)\sum_{u \in \mathcal{R}_k} T_u}{1 - o(1)}.
\end{align}

The total mass injected at sources in $\mathcal{R}_k$ is $(1+\beta)\sum_{u \in \mathcal{R}_k} T_u$. Nodes in $\mathcal{R}_k$ can absorb at most $\sum_{u \in \mathcal{R}_k} T_u$ mass. Therefore, at least $\beta\sum_{u \in \mathcal{R}_k} T_u$ mass must flow out of $\mathcal{R}_k$.

Next, we provide upper bound on the outflow. The total flow leaving $\mathcal{R}_k$ is
\begin{align}
\nonumber
    \text{Outflow} &:= \sum_{u \in \mathcal{R}_k} \sum_{\substack{v \notin \mathcal{R}_k \\ (u,v) \in \mathcal{E}}} \bar{w}(q_k, u, v)(x_u^{(k)*} - x_v^{(k)*})^+ \leq \sum_{u \in \mathcal{R}_k} x_u^{(k)*} \sum_{\substack{v \notin \mathcal{R}_k \\ (u,v) \in \mathcal{E}}} \bar{w}(q_k, u, v)\\
    &\leq x_{v_1}^{(k)*} \sum_{u \in \mathcal{R}_k} \sum_{\substack{v \notin \mathcal{R}_k \\ (u,v) \in \mathcal{E}}} \bar{w}(q_k, u, v).
\end{align}
By Lemma~\ref{thm:edge-separation}, $\bar{w}(q_k, u, v) \leq \exp(-\omega(\log|\mathcal{V}|))$ for $u \in \mathcal{R}_k, v \notin \mathcal{R}_k$.
Let $N_{out}$ be the number of edges from $\mathcal{R}_k$ to $\mathcal{V} \setminus \mathcal{R}_k$. By Chernoff bounds, we have 
\begin{equation}
    \textnormal{Prob}(N_{out} > 2\rho_2 r_k(|\mathcal{V}|-r_k)) \leq \exp(-\Omega(\rho_2 r_k(|\mathcal{V}|-r_k))).
\end{equation}
Therefore, with high probability
\begin{align*}
    \textnormal{Outflow} &\leq \frac{r_k(1+\beta)\sum_{u \in \mathcal{R}_k} T_u}{1 - o(1)} \cdot 2\rho_2 r_k(|\mathcal{V}|-r_k) \cdot \exp(-\omega(\log|\mathcal{V}|))\\
    &\qquad= (1+\beta)\sum_{u \in \mathcal{R}_k} T_u \cdot \textnormal{poly}(|\mathcal{V}|) \cdot \exp(-\omega(\log|\mathcal{V}|))\\
    &\qquad= o\left(\beta\sum_{u \in \mathcal{R}_k} T_u\right).
\end{align*}
This contradicts the requirement that at least $\beta\sum_{u \in \mathcal{R}_k} T_u$ mass must leave $\mathcal{R}_k$. Thus, our assumption that $x_{v_{r_k}}^{(k)*} = 0$ is false, so $x_{v}^{(k)*} > 0$ for all $v \in \mathcal{R}_k$, i.e., $\mathcal{R}_k \subseteq \textnormal{supp}(\mathbf{x}^{(k)*})$.

Finally, the mass that does leave $\mathcal{R}_k$ must be absorbed at nodes outside $\mathcal{R}_k$. By KKT complementary slackness, $m_u = T_u$ for every $u \in \textnormal{supp}(\mathbf{x}^{(k)*})$, so the sum of $T_u$ over nodes in $\textnormal{supp}(\mathbf{x}^{(k)*}) \setminus \mathcal{R}_k$ cannot exceed the total outflow, i.e.,
\begin{equation*}
    \sum_{u \in \textnormal{supp}(\mathbf{x}^{(k)*}) \setminus \mathcal{R}_k} T_u \leq \beta \sum_{u \in \mathcal{R}_k} T_u.
\end{equation*}
This establishes the second part of the theorem. The desired high-probability result then follows by a union bound.
\end{proof}

\section{Overview of Datasets and Prompts Used in QAFD-RAG for Question Answering}\label{app:prompts:qa}

This section provides the details of datasets and  key prompts used throughout our framework. To ensure fair comparison and eliminate confounding effects due to prompt engineering, we directly adopt the prompt templates introduced in \cite{chen2025pathrag} and \cite{guo2024lightrag} for multiple stages of our pipeline. Specifically, the prompts for knowledge graph indexing, keyword extraction, and RAG-based query answering are reused without modification. For the performance evaluation stage, we also follow \cite{chen2025pathrag} and \cite{guo2024lightrag}’s evaluation setup by prompting the model to rate responses along five predefined dimensions: Comprehensiveness, Diversity, Logicality, Relevance, and Coherence. These standardized definitions and formulations are explicitly included in our evaluation prompts. Using a consistent prompt set across all models ensures the reliability and reproducibility of our experimental results.

\subsection{Datasets}
\label{app:datasets}

\begin{table*}[htbp]
\centering
\caption{Dataset statistics for the UltraDomain subsets used in our evaluation.}
\rowcolors{1}{white}{gray!15}
\resizebox{\textwidth}{!}{%
\begin{tabular}{l|cccc}
\toprule
\textbf{Dataset} & \textbf{\# of documents} & \textbf{\# of tokens} & \textbf{\# of nodes in the indexing graph} & \textbf{\# of questions} \\
\midrule
Agriculture & 12  & 1,949,526 & 23,180 & 100 \\
Biology     & 27  & 3,275,990 & 42,520 & 220 \\
Cooking     & 14  & 2,232,441 & 18,985 & 120 \\
History     & 26  & 5,159,599 & 63,840 & 180 \\
Legal       & 94  & 4,773,793 & 20,838 & 438 \\
Mathematics & 20  & 3,640,908 & 32,319 & 160 \\
Mix         & 61  &   611,161 & 11,371 & 130 \\
Music       & 29  & 5,038,910 & 58,245 & 200 \\
Philosophy  & 26  & 3,561,642 & 33,241 & 200 \\
Physics     & 19  & 2,116,825 & 19,745 & 160 \\
\bottomrule
\end{tabular}%
}
\label{tab:dataset_description}
\end{table*}

\subsection{Entity and Relationship Extraction Prompt}

This prompt is used to extract structured entity and relation information from individual document chunks, forming the nodes and edges of the knowledge graph. The prompt format is directly adopted from \cite{chen2025pathrag}.

\begin{promptbox}{prompt:entity_relationship}{Entity and Relationship Extraction}
---Goal---

Given a text document that is potentially relevant to this activity and a list of entity types, identify all entities of those types from the text and all relationships among the identified entities.
Use \{language\} as output language.\\

---Steps---
\begin{enumerate}[itemsep=5pt]
    \item Identify all entities. For each identified entity, extract the following information:
    \begin{enumerate}
        \item[-] entity\_name: Name of the entity, use the same language as input text. If English, capitalize the name.
        \item[-] entity\_type: One of the following types: [\{entity\_types\}]
        \item[-] entity\_description: Comprehensive description of the entity's attributes and activities
    \end{enumerate}
    Format each entity as \\
    \makecell[l]{(``entity''\{tuple\_delimiter\}<entity\_name>\{tuple\_delimiter\}\\<entity\_type>\{tuple\_delimiter\}<entity\_description>)}
    
    \item From the entities identified in step 1, identify all pairs of (source\_entity, target\_entity) that are clearly related to each other.
    For each pair of related entities, extract the following information:
    \begin{enumerate}
        \item[-] source\_entity: name of the source entity, as identified in step 1
        \item[-] target\_entity: name of the target entity, as identified in step 1
        \item[-] relationship\_description: explanation as to why you think the source entity and the target entity are related to each other
        \item[-] relationship\_strength: a numeric score indicating the strength of the relationship between the source entity and target entity
        \item[-] relationship\_keywords: one or more high-level key words that summarize the overarching nature of the relationship, focusing on concepts or themes rather than specific details
    \end{enumerate}
    Format each relationship as\\
    \makecell[l]{(``relationship''\{tuple\_delimiter\}<source\_entity>\{tuple\_delimiter\}<target\_entity>\\\{tuple\_delimiter\}<relationship\_description>\{tuple\_delimiter\}\\<relationship\_keywords>\{tuple\_delimiter\}<relationship\_strength>)}

    \item Identify high-level keywords that summarize the main concepts, themes, or topics of the entire text. These should capture the overarching ideas present in the document.
    Format the content-level keywords as
    
    (``content\_keywords''\{tuple\_delimiter\}<high\_level\_keywords>)
    \item Return output in \{language\} as a single list of all the entities and relationships identified in steps 1 and 2. Use **\{record\_delimiter\}** as the list delimiter.
    \item When finished, output \{completion\_delimiter\}
\end{enumerate}

\end{promptbox}

\subsection{Keyword Extraction Prompt}

This prompt identifies salient concepts and entities from a given query for the purpose of initializing seed nodes during query-aware diffusion. It follows \cite{chen2025pathrag}’s design without any modification.

\begin{promptbox}{prompt:keyword_extraction_kg}{Keyword-Extraction in Knowledge Graph Question Answering}
---Role---

You are a helpful assistant tasked with identifying both high-level and low-level keywords in the user's query.
\\

---Goal---

Given the query, list both high-level and low-level keywords. High-level keywords focus on overarching concepts or themes, while low-level keywords focus on specific entities, details, or concrete terms.
\\

---Instructions---
\begin{itemize}
    \item[-] Output the keywords in JSON format.
    \item[-] The JSON should have two keys:\\
    
    \begin{itemize}
        \item[-] ``high\_level\_keywords'' for overarching concepts or themes.
        \item[-] ``low\_level\_keywords'' for specific entities or details.
    \end{itemize}
    
\end{itemize}
\label{prompt:keyword}
\end{promptbox}









\subsection{Query Answering Prompt}

This prompt is used to generate the final response from the LLM based on retrieved evidence. We adopt \cite{chen2025pathrag}’s instruction format for fairness in model comparison.

\begin{promptbox}{prompt:rag_response}{RAG-based query answering}
---Role---

You are a helpful assistant responding to questions about the data in the tables provided.
\\

---Goal---

Generate a response of the target length and format that responds to the user's question, summarizing all information in the input data tables appropriate for the response length and format, and incorporating any relevant general knowledge.
If you don't know the answer, just say so. Do not make anything up.
Do not include information where the supporting evidence for it is not provided.
\\

---Target response length and format---

\hspace{10pt}\{response\_type\}
\\

---Data tables---

\hspace{10pt}\{context\_data\}
\\

Add sections and commentary to the response as appropriate for the length and format. Style the response in markdown.
\label{prompt:query:answer}
\end{promptbox}

\subsection{Evaluation Prompt}

This prompt is used to obtain ratings from the LLM across five dimensions: Comprehensiveness, Diversity, Logicality, Relevance, and Coherence. The full dimension definitions are embedded in the prompt and align with those defined in \cite{chen2025pathrag}.

\begin{promptbox}{prompt:evaluation}{Performance Evaluation}
---Role---

You are an expert tasked with evaluating question answering based on five criteria: Comprehensiveness, Diversity, Logicality, Relevance, and Coherence.
\\

---Goal---

Evaluate the following response to a question based on five criteria. Rate each criterion from 0 to 100.\\

Question: \{query\}

Response: \{response\}\\

Please evaluate based on these criteria:
\begin{itemize}
    \item[-] Comprehensiveness: How much detail does the answer provide to cover all aspects and details of the question?
    \item[-] Diversity: How varied and rich is the answer in providing different perspectives and insights on the question?
    \item[-] Logicality: How logically does the answer respond to all parts of the question?
    \item[-] Relevance: How relevant is the answer to the question, staying focused and addressing the intended topic or issue?
    \item[-] Coherence: How well does the answer maintain internal logical connections between its parts, ensuring a smooth and consistent structure?
\end{itemize}

Provide scores in JSON format:\\

\{\{
\begin{itemize}
    \item[] "comprehensiveness": [score],
    \item[] "diversity": [score],
    \item[] "logicality": [score],
    \item[] "relevance": [score],
    \item[] "coherence": [score]
\end{itemize}
\}\}
 \label{prompt:eval:qa}   
\end{promptbox}








\section{Overview of Prompts Used in QAFD-RAG in Text-to-SQL Tasks}\label{app:prompts:text-to-sql}

In the following, we provide a Text-to-SQL planning agent prompt to systematically analyze database graph nodes (tables and columns) in relation to user queries in order to identify graph seed nodes. The agent must decompose each query into subqueries, examine all schema nodes for semantic, structural, and temporal connections, and output a JSON object listing source-target (seed node candidate) pairs for each subquery. A similar prompt can be used for Snowflake graphs with minor modifications.

\begin{promptbox}{prompt:seednodes:text-to-sql:lite}{Seed Node Selection for SQLite Graph}
\label{promp:seed}
\section*{Task Definition}
You are a Text-to-SQL planner agent. 

Given:
\begin{itemize}
\item A user query: \{QUERY\}
\item A database schema summary (graph) in text format: \{SCHEMA\_SUMMARY\}
\end{itemize}

Your job is to:
\begin{enumerate}[itemsep=5pt]
\item Review this schema summary and user query thoroughly.
\item Break the user query into logical user subqueries.
\item \textbf{For each subquery, systematically examine EVERY single node (table.column) in the database schema graph one by one:}
   \begin{itemize}
   \item Go through each table in the schema summary sequentially
   \item For each table, examine every column within that table
   \item For each table.column node, evaluate its relationship strength with the current subquery
   \item Document your analysis for each node, determining if it has any of these RELATIONSHIP TYPES with the subquery:
     
     \begin{itemize}
     \item SEMANTIC: Direct or indirect relevance to query concepts
     \item STRUCTURAL: Representing organizational structures in the query
     \item TEMPORAL: Time-based connections to query elements
     \item CAUSAL: Cause-effect relationships described in the query
     \item LOGICAL: Supporting logical conditions in the query
     \item STATISTICAL: Statistical correlations to query concepts
     \item DOMAIN-SPECIFIC: Domain relevancy with query
     \end{itemize}
   \item \textbf{Source nodes}: ALL STARTING COLUMNS (with table prefixes) having strong relationships of ANY TYPE ABOVE with the subquery.
   \item \textbf{Target nodes}: ALL DESTINATION COLUMNS (with table prefixes) having strong relationships of ANY TYPE ABOVE with the subquery.
   \item \textbf{MANDATORY}: You must examine and consider every single table.column combination in the schema before proceeding to the next step.
   \end{itemize}
\item When domain-specific concepts appear in the query, properly map these concepts to the appropriate schema column elements
\item \textbf{Identify the most confident path of schema graph}: For each subquery, determine and explicitly state the most confident path that the LLM should follow through the schema graph, using right arrow format ($\to$) with ONLY schema nodes.
\item \textbf{Provide reasoning confidence candidates}: For each subquery, provide EXACTLY 2-3 DIVERSE candidates in format [source, target, confidence] where source is a starting node, target is an ending node, and confidence is a float between 0.0-1.0. Each candidate should explore different interpretations, relationship types, or alternative paths.
\end{enumerate}

Important:
\begin{itemize}
\item Schema nodes MUST be specified as ``table.column'' EXACTLY the same as they appear in the schema summary. EVEN DO NOT change upper or lower letters.
\item The most confident path should use the right arrow format ($\to$) and contain ONLY schema nodes (table.column format)
\item Reasoning confidence candidates should be in format: [source, target, confidence] with EXACTLY 2-3 DIVERSE candidates per subquery
\item Each candidate should represent different semantic interpretations, alternative join paths, or different relationship types
\item \textbf{CRITICAL}: You MUST systematically go through every single table and every single column in the schema graph during step 3 analysis
\item Only output a JSON without explanation
\end{itemize}
\end{promptbox}

In the following, we provide an example of a Snowflake system prompt for SQL-Agent. A similar prompt was used for SQLite with minor modifications. The agent is required to systematically analyze database schemas and execute SQL queries by following a workflow that prioritizes the identification of the highest-reward subgraphs from the schema.json files, which were built using QAFD-RAG.

\begin{promptbox}{prompt:snowflake:system}{Snowflake Data Scientist Agent}
\section*{Task Definition}
You are a data scientist proficient in database, SQL and DBT Project.
You are starting in the \{work\_dir\} directory, which contains all the data needed for your tasks. 
You can only use the actions provided in the ACTION SPACE to solve the task. 
For each step, you must output an Action; it cannot be empty. The maximum number of steps you can take is \{max\_steps\}.
Do not output an empty string!

\subsection*{ACTION SPACE}
\{action\_space\}

\subsection*{Snowflake-Query Protocol}
\begin{enumerate}[itemsep=5pt]
\item \textbf{BEFORE following any other rules, you MUST follow these steps:}
   \begin{itemize}
   \item Read the schema.json in the /workspace directory. The schema.json contains valuable information about the graph structure and path rewards.
   \item Identify the highest-reward subquery paths for EACH division in the provided schema.json
   \item Write your VERY FIRST SQL query by combining ONLY these highest-reward paths to address the main query
   \item DO NOT perform ANY database inspection before executing this first query
   \item You MUST run this query as your FIRST SQL execution
   \item You MUST terminate execution immediately after this first query if it works
   \end{itemize}
   
For your FIRST SQL attempt, follow ONLY these steps:
   \begin{itemize}
   \item Review the schema.json structure containing all subqueries and their reward values
   \item For each division (subquery), identify the path with the highest reward
   \item Combine these highest-reward paths into a SINGLE comprehensive SQL query
   \item Execute this SQL query immediately with no other database commands before it
   \end{itemize}

   Do NOT run any exploratory queries like:
   \begin{itemize}
   \item DO NOT run ``SELECT * FROM table LIMIT 5''
   \item DO NOT run ``PRAGMA table\_info(table\_name)''
   \item DO NOT run ``SELECT name FROM sqlite\_master WHERE type='table'''
   \item DO NOT check data types
   \item DO NOT check for NULL values
   \item DO NOT try to understand the schema first
   \item DO NOT check DDL.csv files before attempting the first query
   \end{itemize}

   Your first query must be the direct SQL translation of combining all highest-reward paths to address the main query objective.

\item If your first query fails, then you should explore the database structure further using the methods below. You can check DDL.csv file with the database's DDL, along with JSON files that contain the column names, column types, column descriptions, and sample rows for individual tables. You can review the DDL.csv file in each directory, then selectively examine the JSON files as needed. Read them carefully.

\item You can use SNOWFLAKE\_EXEC\_SQL to run your SQL queries and interact with the database. Do not use this action to query INFORMATION\_SCHEMA or SHOW DATABASES/TABLES; the schema information is all stored in the /workspace/database\_name folder. Refer to this folder whenever you have doubts about the schema.

\item Be prepared to write multiple SQL queries to find the correct answer. Once it makes sense, consider it resolved.

\item Focus on SQL queries rather than frequently using Bash commands like grep and cat, though they can be used when necessary.

\item If you encounter an SQL error, reconsider the database information and your previous queries, then adjust your SQL accordingly. Do not output the same SQL queries repeatedly.

\item Ensure you get valid results, not an empty file. Once the results are stored in result.csv, make sure the file contains data. If it is empty or just contains the table header, it means your SQL query is incorrect.

\item The final result MUST be a CSV file, not an .sql file, a calculation, an idea, a sentence or merely an intermediate step. Save the answer as a CSV and provide the file name, it is usually from the SQL execution result.
\end{enumerate}

\subsection*{Tips}
\begin{enumerate}[itemsep=5pt]
\item When referencing table names in Snowflake SQL, you must include both the database\_name and schema\_name. For example, for\\ /workspace/DEPS\_DEV\_V1/DEPS\_DEV\_V1/ADVISORIES.json,\\ if you want to use it in SQL, you should write\\ DEPS\_DEV\_V1.DEPS\_DEV\_V1.ADVISORIES.
\item Do not write SQL queries to retrieve the schema; use the existing schema documents in the folders.
\item When encountering bugs, carefully analyze and think them through; avoid writing repetitive code.
\item Column names must be enclosed in quotes. But don't use \textbackslash{}'', just use ``.
\end{enumerate}

\subsection*{RESPONSE FORMAT}
For each task input, your response should contain:
\begin{enumerate}[itemsep=5pt]
\item One analysis of the task and the current environment, reasoning to determine the next action (prefix ``Thought: '').
\item One action string in the ACTION SPACE (prefix ``Action: '').
\end{enumerate}

\subsection*{EXAMPLE INTERACTION}
Observation: ...(the output of last actions, as provided by the environment and the code output, you don't need to generate it)
Thought: ...
Action: ...

\section*{TASK}
Please solve this task:
\{task\}
\end{promptbox}

 \section{Text-to-SQL Baselines}\label{sec:app:text2sql:baselines}

\noindent\textbf{CHASE-SQL for SQLite} We made several key modifications to adapt CHASE-SQL for the Spider 2.0 SQLite benchmark. {Enhanced data preprocessing} expanded the data types used for extracting unique values from TEXT-only to include all "CHAR"-containing types (CHAR, VARCHAR, NVARCHAR) present in Spider 2.0. We {disabled schema selection comparison} features to prevent failures when no golden SQL query is available, and {updated candidate generation prompts} to encourage SQL generation even in low-confidence scenarios, addressing the complexity-induced abstention issues in Spider 2.0.

Additional improvements included {increased execution timeout thresholds} to accommodate more complex SQL queries and {enforcing a fixed agentic pipeline order}: keyword extraction $\rightarrow$ entity retrieval $\rightarrow$ context retrieval $\rightarrow$ column filtering $\rightarrow$ table selection $\rightarrow$ column selection $\rightarrow$ candidate generation $\rightarrow$ revision. This fixed order was necessary after observing that dynamic tool selection often led to suboptimal sequences.

\noindent\textbf{CHASE-SQL for Snowflake}
All SQLite modifications were retained, but we disabled the Information Retrieval module, skipping entity and context retrieval phases. Transitioning to Snowflake required addressing three main challenges: syntax differences, database schema representation, and connection protocols.

While Snowflake SQL is largely compatible with SQLite, key differences include date format handling and using double quotes instead of backticks for quoted identifiers. The original CHESS schema representation using simple table-name dictionaries proved insufficient for Snowflake's hierarchical structure, where tables are organized into schemas within databases. We introduced schema-table mapping to handle this abstraction, though this assumes no duplicate table names across schemas.

Database connection also differs significantly. SQLite uses direct file paths, while Snowflake requires cloud account credentials with full namespace specification (database.schema.table) or reduced namespace (schema.table) when connected to a specific database. Golden SQL examples use full namespace format. Source code modifications addressed these data access differences. For prompt adaptation, we used the o3 model with instructions shown in Prompt~\ref{prompt:system} to modify generate\_candidate\_one and revise\_one templates. Initial experiments revealed that full namespace format contradicted GPT-4o's internal knowledge, which prefers simplified schema.table format. We therefore modified prompts to use schema.table format while passing database names directly to the connector.

\begin{promptbox}{prompt:system}
  {\makecell[l]{System prompt for o3 to adapt generate\_candidate\_one\\ and revise\_one prompts for snowflake dialect}}
Help me modify this template to switch from SQLite dialect to snowflake dialect. \\
1. Make sure to change the syntax and functions specifically for snowflake, but do not change the prompt structure and do not omit any examples. \\
2. All examples must be executable in snowflake. \\
3. Make sure to change all ticks <`> with double quotes <"> \\
4. All DDL statements specifying the database structure must include statements for creating a database and schema. You can infer the best names for the database and schema if they are not immediately available. Use the same name for the database and the schema. When referencing tables, use the full namespace including the database and schema, like so: database.schema.table
\\

Here is an example: \\

CREATE DATABASE restaurants; \\
CREATE SCHEMA restaurants; \\
CREATE TABLE generalinfo \\
( \\
    \hspace*{1cm} id\_restaurant INTEGER not null primary key, \\
    \hspace*{1cm} food\_type TEXT null, -- examples: `thai`| `food  type` description: the food type \\
    \hspace*{1cm} city TEXT null, -- description: the city where the restaurant is located in \\
); \\
\\

CREATE TABLE location \\
( \\
    \hspace*{1cm} id\_restaurant INTEGER not null primary key, \\
    \hspace*{1cm} street\_name TEXT null, -- examples: `ave`, `san pablo ave`, `pablo ave`| `street name` description: the street name of the restaurant \\
    \hspace*{1cm} city TEXT null, -- description: the city where the restaurant is located in \\
    \hspace*{1cm} foreign key (id\_restaurant) references generalinfo (id\_restaurant) on update cascade on delete cascade, \\
); \\
\\
Use table location like so: \\
restaurants.restaurants.location
\end{promptbox}

Final prompt modifications included converting few-shot examples from SQLite to Snowflake syntax, adding Snowflake-specific instructions, implementing schema.table namespace usage, incorporating schema generation statements in DDL commands, and updating date processing instructions with Snowflake-specific functions and examples.


To establish strong baselines, we compared our approach against three additional methods from the Spider 2 repository: DIN-SQL~\citep{pourreza2024din}, DAIL-SQL~\citep{gao2023text}, and CodeS~\citep{CodeS}. These methods represent different paradigmatic approaches to text-to-SQL generation.

\paragraph{DIN-SQL: Decomposed In-Context Learning of Text-to-SQL with Self-Correction}

DIN-SQL~\citep{pourreza2024din} structures text-to-SQL generation through a four-stage decomposition approach. The method begins with {schema linking} to parse questions and identify references to database elements, followed by {classification and decomposition} that categorizes queries into easy (single-table), non-nested complex (joins without sub-queries), and nested complex (joins with sub-queries and set operations) classes. The {SQL generation} stage adapts its strategy to each class: simple few-shot prompting for easy queries, intermediate representations for non-nested complex queries, and sub-query decomposition with intermediate representations for nested complex cases. Finally, {self-correction} uses an LLM to identify and fix bugs in the generated SQL.

\paragraph{DAIL-SQL: Dual-Similarity Adaptive In-Context Learning}

DAIL-SQL~\citep{gao2023text} employs a five-stage pipeline centered on dual-similarity matching. The method starts with {masking} database-related tokens in both target and candidate questions, then uses a {preliminary predictor} for initial SQL prediction. {Skeleton extraction} identifies structural patterns in both predicted and candidate queries. The core innovation lies in {sorting and reordering}, where candidates are first sorted by masked question similarity and then reordered by prioritizing high query similarity matches. Finally, {generation} produces the final SQL using the optimally ordered examples.

\paragraph{CodeS: Domain-Adaptive SQL-Centric Model}

CodeS~\citep{CodeS} takes a pretraining-based approach with three main components. {Incremental pre-training} builds upon StarCoder using a specially curated SQL dataset. {Database prompt construction} creates context by selecting top-k relevant tables and columns using BM25 and LCS matching, incorporating representative values and metadata. {Bi-directional augmentation} expands few-shot examples through SQL-to-NL and NL-to-SQL data augmentation before supervised fine-tuning. The model can operate through either supervised fine-tuning or direct few-shot in-context learning.

\paragraph{Spider-Agent: ReAct-style~\citep{yao2023react} Agentic Framework}
We evaluate the Spider-Agent baseline~\citep{lei2024spider}, which implements a ReAct-style~\citep{yao2023react} multi-step reasoning framework for Text-to-SQL, strictly following the original settings and publicly released code\footnote{\url{https://github.com/xlang-ai/Spider2}} without any modifications to model parameters, prompt structure, or evaluation scripts. Specifically, all hyperparameters were kept at their default values as provided by the authors: the LLM model is set to GPT-4o, the decoding temperature is set to 0.5, the nucleus sampling parameter (\texttt{top\_p}) is 0.9, the maximum generation length (\texttt{max\_tokens}) is 2500, the agent's step limit is 20, and the agent memory length is 25.

All pipelines were executed using Spider 2 benchmark scripts \citep{lei2024spider} that convert data to each method's preferred format. To maintain baseline authenticity while ensuring runnable outputs, we applied only minimal fixes necessary for execution, deliberately avoiding substantive tuning. Testing covered both Snowflake and Lite subsets of Spider 2.

The enterprise-scale schemas in Spider 2 presented significant challenges across all baseline methods. {Context-length overflow} occurred frequently as schemas exceeded token limits when serialized verbatim. {Zero error-remediation} was observed, where logical or syntactic failures (e.g., missing GROUP BY columns) were accepted without iterative correction or external validation. Additionally, {malformatted data} issues arose when benchmarks expected different data formats than those available in Spider 2, contributing to overall poor performance across methods.

\section{Usage of Large Language Models (LLMs)}
We used large language models as a general-purpose writing assistant. Its role was limited to grammar checking, minor stylistic polishing, and improving the clarity of phrasing in some parts of the manuscript. The authors made all substantive contributions to the research and writing.

\end{document}